\PassOptionsToPackage{final}{graphicx}
\PassOptionsToPackage{nosumlimits,nonamelimits}{amsmath}

\documentclass[conference,10pt,final]{IEEEtran}

\sloppy

\sloppy

\usepackage{etoolbox} %

\usepackage{ifdraft}

\makeatletter
\@draftfalse
\makeatother

\usepackage[colorlinks,linkcolor={blue},citecolor={blue},urlcolor={red},breaklinks=true,final]{hyperref}

\usepackage[english]{babel}

\addto\extrasenglish{%
}

\addto{\captionsenglish}{}

\usepackage[nosumlimits,nonamelimits]{amsmath}
\usepackage{amsthm,amssymb}

\usepackage[capitalize,noabbrev,nameinlink]{cleveref} %

\theoremstyle{plain}

\newtheorem{theorem}{Theorem}[section]

\newtheorem{proposition}[theorem]{Proposition}
\newtheorem{lemma}[theorem]{Lemma}
\newtheorem{corollary}[theorem]{Corollary}

\theoremstyle{definition}

\newtheorem{definition}[theorem]{Definition}
\newtheorem{example}[theorem]{Example}
\newtheorem{remark}[theorem]{Remark}

\usepackage{booktabs}   %
\usepackage{subcaption} %
\crefname{theorem}{Theorem}{Theorems}
\crefname{definition}{Definition}{Definitions}
\crefname{lemma}{Lemma}{Lemmas}
\crefname{corollary}{Corollary}{Corollaries}
\crefname{proposition}{Proposition}{Propositions}

\newcommand{\comma}{,\operatorname{}\linebreak[1]}

\DeclareMathOperator{\img}{img}

\usepackage[utf8]{inputenc}

\usepackage{hypcap}
\setcounter{tocdepth}{3}

\usepackage{microtype}

\usepackage[displaymath,mathlines,switch]{lineno}
 
\usepackage{etoolbox} %

\newcommand*\linenomathpatch[1]{%
	\cspreto{#1}{\linenomath}%
	\cspreto{#1*}{\linenomath}%
	\csappto{end#1}{\endlinenomath}%
	\csappto{end#1*}{\endlinenomath}%
}

\linenomathpatch{equation}
\linenomathpatch{gather}
\linenomathpatch{multline}
\linenomathpatch{align}
\linenomathpatch{alignat}
\linenomathpatch{flalign}

\ifdraft{
  \usepackage[draft]{showlabels}  \linenumbers
  \usepackage[layout=footnote,marginclue,draft]{fixme}
  \FXRegisterAuthor{sg}{asg}{SG}	%
  \FXRegisterAuthor{pn}{apn}{PN}	%
  \FXRegisterAuthor{ls}{als}{LS}	%
  \FXRegisterAuthor{pw}{apw}{PW}	%

}
{
  \newcommand{\sgnote}[1]{}
  \newcommand{\pnnote}[1]{}
  \newcommand{\lsnote}[1]{}
}

\usepackage{cancel}
\usepackage{bbm}
\usepackage{dsfont}
\usepackage{mathtools}
\usepackage[utf8]{inputenc}
\usepackage{tikz-cd}
\usetikzlibrary{decorations.markings}
\usepackage{pedromacros-clean}
\usepackage{centernot}

\tikzcdset{
  only tips/.style={/pgf/tips=true, /tikz/draw=none},
      Corner/.tip={Straight Barb[angle'=90, round, length=4.8pt,width=10pt]},
      Corner Tail/.tip={Corner[reversed, sep=+3pt +1.5]},
         rightcorner/.style={only tips, /tikz/arrows=-Corner},
          leftcorner/.style={only tips, /tikz/arrows=Corner Tail-},
     leftrightcorner/.style={only tips, /tikz/arrows=Corner-Corner},
}

\tikzstyle{shiftarr}=[
rounded corners,%
to path={--([#1]\tikztostart.center)
  -- ([#1]\tikztotarget.center) \tikztonodes
  -- (\tikztotarget)},
]

\tikzset{
  commutative diagrams/.cd,
  arrow style=tikz,
  diagrams={>={Straight Barb[length=1.75pt,width=3.85pt,inset=1.95pt]}}, %
  row sep=large,
  column sep = huge
}

\tikzset{cong/.style={draw=none,edge node={node [sloped, allow upside down, auto=false]{$\cong$}}},
  iso/.style={draw=none,every to/.append style={edge node={node [sloped, allow upside down, auto=false]{$\cong$}}}}}

\tikzcdset{scale cd/.style={every label/.append style={scale=#1},
    cells={nodes={scale=#1}}}}

\usetikzlibrary{
  fit,%
  calc,%
  arrows,%
  arrows.meta,%
  intersections,%
  shapes.misc,%
  shapes.arrows,%
  patterns,%
  automata,%
  chains,%
  matrix,%
  positioning,%
  scopes,%
  decorations.markings,%
  decorations.pathmorphing,%
  external,
  backgrounds
}

\usepackage{stackengine}
\stackMath
\newcommand\tsup[2][2]{%
  \def\useanchorwidth{T}%
  \ifnum#1>1%
    \stackon[-1.05ex]{\tsup[\numexpr#1-1\relax]{#2}}{\scalebox{2}[1]{$\mathchar"307E$}\kern-.5pt}%
  \else%
    \stackon[-.9ex]{#2}{\scalebox{2}[1]{$\mathchar"307E$}\kern-.5pt}%
  \fi%
}

\usepackage{caption}
\usepackage{subcaption}

\usepackage{xspace}

\usepackage{lineno}

\usepackage{etoolbox} %

\providecommand{\oname}[1]{{\operatorname{\mathsf{#1}}}}

\newcommand{\nat}{\mathbb{N}}

\newcommand{\dom}{\mathop{\oname{dom}}}
\newcommand{\dor}{\mathop{\oname{d}}}

\newcommand{\cod}{\mathop{\oname{img}}}
\newcommand{\cor}{\mathop{\oname{i}}}

\renewcommand{\img}{\mathop{\oname{img}}}

\newcommand{\subid}[1]{\lceil #1 \rceil}

\newcommand{\domr}[1]{\lceil\dom #1 \rceil} %
\newcommand{\codr}[1]{\lceil\cod #1 \rceil} %

\newcommand{\ftcF}{\functorfont{\underline{F}}}

\providecommand{\noqed}{\def\qed{}}			 %

\providecommand{\ito}{\rightarrowtail}                                       %

\providecommand{\xto}[1]{\,\xrightarrow{#1}\,}

\renewcommand{\xto}[1]{\mathrel{\raisebox{-1pt}{$\xrightarrow{\hspace{2.5pt}\smash{\raisebox{-1.5pt}{\makebox(3,0)[b]{\scriptsize~$#1$}}\hspace{2.5pt}}}$}}}

\providecommand{\xfrom}[1]{\,\xleftarrow{\;#1}\,}

\renewcommand{\xfrom}[1]{\mathrel{\raisebox{-1pt}{$\xleftarrow{\smash{\hspace{2.5pt}\raisebox{-1.5pt}{\makebox(3,0)[b]{\scriptsize~$#1$}}\hspace{2.5pt}}}$}}}
\providecommand{\mone}{{\text{\kern.5pt\rmfamily-}\mathsf{\kern-.5pt1}}}

\ifdraft{}
{
   \renewcommand{\todo}[1]{}
}

\usepackage{amsthm}

\makeatletter
\newcommand{\linebreakand}{%
  \end{@IEEEauthorhalign}
  \hfill\mbox{}\par
  \mbox{}\hfill\begin{@IEEEauthorhalign}
}
\makeatother

\setcounter{topnumber}{1}               %
\setcounter{totalnumber}{1}             %

\begin{document}
\allowdisplaybreaks

\title{Relators and Notions of Simulation Revisited}

\author{
\IEEEauthorblockN{Sergey Goncharov}
\IEEEauthorblockA{
	\textit{University of Birmingham}\\
	Birmingham, UK \\
	s.goncharov@bham.ac.uk
}
\and
\IEEEauthorblockN{Dirk Hofmann}
\IEEEauthorblockA{
	\textit{CIDMA, University of Aveiro}\\
	\textit{Aveiro, Portugal} \\
	dirk@ua.pt
}
\and
\IEEEauthorblockN{Pedro Nora}
\IEEEauthorblockA{
	\textit{Radboud Universiteit}\\
    \textit{Nijmegen, Netherlands}\\
    pedro.nora@ru.nl
}

\linebreakand

\IEEEauthorblockN{Lutz Schr\"oder}
\IEEEauthorblockA{
	\textit{Friedrich-Alexander-Universität Erlangen-Nürnberg} \\
	\textit{Erlangen, Germany} \\
	lutz.schroeder@fau.de
}
\and
\IEEEauthorblockN{Paul Wild}
\IEEEauthorblockA{
	\textit{Friedrich-Alexander-Universität Erlangen-Nürnberg} \\
	\textit{Erlangen, Germany} \\
	paul.wild@fau.de
}
\thanks{
  Sergey Goncharov acknowledges support by the Deutsche Forschungsgemeinschaft (DFG) -- project number 501369690. %
  Lutz Schröder acknowledges support by the Deutsche Forschungsgemeinschaft (DFG) -- project number 531706730. %
  Paul Wild acknowledges support by the Deutsche Forschungsgemeinschaft (DFG) -- project number 434050016 %
}
}

\maketitle \thispagestyle{empty}

\begin{abstract}
  Simulations and bisimulations are ubiquitous in the study of concurrent systems and modal logics of various types. Besides classical relational transition systems, relevant system types include, for instance, probabilistic, weighted, neighbourhood-based, and game-based systems. \emph{Universal coalgebra} abstracts system types in this sense as set functors.  Notions of (bi)simulation then arise by extending the functor to act on relations in a suitable manner, turning it into what may be termed a \emph{relator}. We contribute to the study of relators in the broadest possible sense, in particular in relation to their induced notions of (bi)similarity. Specifically, (i) we show that every functor that preserves a very restricted type of pullbacks (termed 1/4-iso pullbacks) admits a sound and complete notion of bisimulation induced by the \emph{coBarr relator}; (ii) we establish equivalences between properties of relators and closure properties of the induced notion of (bi)simulation, showing in particular that the full set of expected closure properties requires the relator to be a lax extension, and that soundness of (bi)simulations requires preservation of diagonals; and (iii) we show that functors preserving inverse images admit a \emph{greatest} lax extension. In a concluding case study, we apply~(iii) to obtain a novel highly permissive notion of \emph{twisted bisimulation} on labelled transition systems.
\end{abstract}

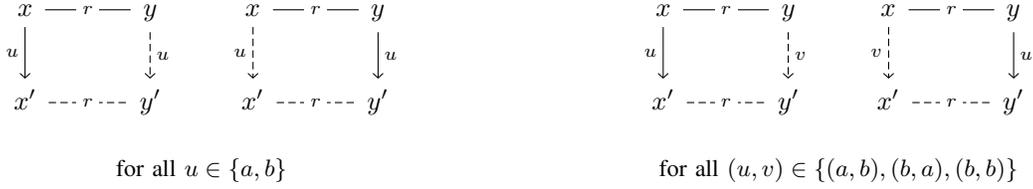
\begin{figure*}[!t]
\centering
\begin{subfigure}[b]{0.4\textwidth}
\centering
\begin{equation*}
\begin{tikzcd}[column sep=1.5em]
	x && y \\
	x' && y'
	\arrow["r"{description}, no head, from=1-1, to=1-3]
	\arrow["u"', from=1-1, to=2-1]
	\arrow["u", dashed, from=1-3, to=2-3]
	\arrow["r"{description}, dashed, no head, from=2-1, to=2-3]
\end{tikzcd}
\qquad
\begin{tikzcd}[column sep=1.5em]
	x && y \\
	x' && y'
	\arrow["r"{description}, no head, from=1-1, to=1-3]
	\arrow["u"', dashed, from=1-1, to=2-1]
	\arrow["u",  from=1-3, to=2-3]
	\arrow["r"{description}, dashed, no head, from=2-1, to=2-3]
\end{tikzcd}
\end{equation*}
     \caption*{\textnormal{for all~$u\in \{a,b\}$}}
 \end{subfigure}
 \hspace{2.75em}
 \begin{subfigure}[b]{0.4\textwidth}
     \centering
\begin{equation*}
\begin{tikzcd}[column sep=1.5em]
	x && y \\
	x' && y'
	\arrow["r"{description}, no head, from=1-1, to=1-3]
	\arrow["u"', from=1-1, to=2-1]
	\arrow["v", dashed, from=1-3, to=2-3]
	\arrow["r"{description}, dashed, no head, from=2-1, to=2-3]
\end{tikzcd}
\qquad
\begin{tikzcd}[column sep=1.5em]
	x && y \\
	x' && y'
	\arrow["r"{description}, no head, from=1-1, to=1-3]
	\arrow["v"', dashed, from=1-1, to=2-1]
	\arrow["u",  from=1-3, to=2-3]
	\arrow["r"{description}, dashed, no head, from=2-1, to=2-3]
\end{tikzcd}
\end{equation*}
     \caption*{for all~$(u,v)\in\{(a,b),(b,a),(b,b)\}$}
 \end{subfigure}
    \caption{Standard bisimulation (left) and alternative additional clauses for twisted bisimulation (right).}
    \label{fig:bis}
\end{figure*}
\section{Introduction}

State-based systems as used, for instance, in concurrency or in the
semantics of modal logics, come in many different flavours, and in
particular may vary in their branching type. While classically,
attention has focused on relational systems, in particular
nondeterministic ones such as labelled transition systems (LTS) or
Kripke frames, there has been long-standing interest in other types
where branching is probabilistic (e.g.~\cite{LarsenSkou91}), weighted
(e.g.~\cite{KupfermanSattlerEtAl02,BuchholzKemper10}),
game-based~\cite{AlurHenzingerEtAl02}, or neighbourhood-based as in the
Montague-Scott semantics of modal logic~\cite{Chellas80}, concurrent
dynamic logic~\cite{Peleg87}, or game logic~\cite{Parikh83}.

A unifying framework for such diverse system types is available in the
shape of \emph{universal coalgebra}~\cite{Rutten00}, in which the
system type is abstracted as a set functor, whose coalgebras then play
the role of systems. For instance, coalgebras for the powerset functor
are relational transition systems, and coalgebras for the distribution
functor are probabilistic transition systems. The coalgebraic
framework induces a canonical abstract notion of \emph{behavioural
  equivalence}, which instantiates to standard branching-time
equivalences (such as Park-Milner bisimilarity on LTS) in all
examples~\cite{Klin09}. Roughly speaking, two states are behaviourally
equivalent if they may be identified under suitable coalgebra
morphisms. The situation is more complicated regarding the task of
\emph{certifying} behavioural equivalence by means of (ideally, small)
witnessing relations, in generalization of bisimulations on Kripke
frames or LTS; the existence of such relations typically depends on
properties of the underlying functor. One of the first insights in
universal coalgebra was that \emph{Aczel-Mendler
  bisimulations}~\cite{AczelMendler89} are sound and complete for
behavioural equivalence if the functor preserves weak pullbacks; here,
we call a notion of bisimulation \emph{sound} if bisimilar states are
behaviourally equivalent, and \emph{complete} if the converse holds,
where as usual two states are bisimilar if they are related by some
bisimulation.

Generally, coalgebraic notions of bisimulation rely on extending the
action of the underlying functor to relations; we refer to such
extensions in the most general sense, essentially only requiring
well-typedness, as \emph{relators}, following Thijs~\cite{Thijs96}
(noting that the term has been used with different meanings in the
field, e.g.~\cite{BackhouseBruinEtAl91,Levy11}). For instance, the
above-mentioned notion of Aczel-Mendler bisimulation is induced by a
particular form of relator, the \emph{Barr
  extension}~\cite{Barr70}. Relators have been equipped with various
sets of axioms that ensure good properties of the induced class of
bisimulations. Notably, notions of bisimulation induced by
\emph{normal} (or \emph{diagonal-preserving}) \emph{lax
  extensions}~\cite{Thijs96,Levy11,MartiVenema12,MartiVenema15} are
sound and complete for behavioural equivalence and closed under
relational composition, and moreover allow for sound up-to techniques
such as bisimulation up to transitivity.

Depending on the exact set of required properties, suitable relations
and corresponding notions of bisimulation may or may not exist for
given functors and their associated system types. For instance, the
above-mentioned Barr extension is a lax extension iff the functor
weakly preserves pullbacks~\cite{Barr70,Trnkova80}. It has recently been
shown that normal lax extensions exist for functors that preserve
either inverse images or weakly preserve kernel pairs 
and so-called 1/4-iso pullbacks,
i.e.\ pullbacks in which at least one leg is isomorphic, and moreover
that preservation of 1/4-iso pullbacks is a necessary condition for
existence of a normal lax extension~\cite{GoncharovHofmannEtAl25}.

In the present work, we further analyse the landscape of relators and
lax extensions, in particular strengthening the perspective that a
given functor potentially comes with a whole range of relevant
relators and associated notions of \mbox{(bi)}simulation. In detail, our
contributions are as follows.
\begin{itemize}[wide]
\item We show that functors preserving 1/4-iso pullbacks admit a sound
  and complete notion of bisimulation induced by the \emph{coBarr
    relator} (\Cref{p:17e}), which however is not in general a lax extension.
\item We show that a given relator is a lax extension iff the induced
  class of (bi)simulations contains all coalgebra homomorphisms
  and their converses and is closed under composition
  (\cref{p:6}). Moreover, we show that normality is essentially
  a necessary condition for soundness of bisimulations
  (\cref{p:10d}).
\item We show that functors that preserve inverse images admit a
  greatest normal lax extension (\cref{thm:main}), and hence a
  maximally permissive notion of bisimulation.
\item In a concluding case study, we illustrate the latter result on
  functors of the form~$(-)^A$, which model deterministic~$A$-labelled
  transition systems. We arrive at a new notion of \emph{twisted
    bisimulation}, which we extend to obtain a notion of twisted
  bisimulation on LTS that is more permissive than the standard notion
  (\cref{sec:lts}). Both notions are illustrated for
 $A=\{a,b\}$ in \Cref{fig:bis}. The left hand diagram recalls the
  standard definition: A relation~$r$ between the state sets of two
  LTS is a bisimulation if whenever~$x\, r\, y$ for states~$x,y$, then
  the \emph{forth} condition (for every labelled transition
 $x\xto{u} x'$, there is~$y'$ such that~$x'\,r\,y'$ and
 ${y\xto{u} y'}$) and the symmetric \emph{back} condition hold. The
  definition of twisted bisimulation alternatively allows~$x,y$ to
  satisfy the right-hand clause in \Cref{fig:bis}, in which
  actions are mismatched. Explicitly, $r$ is a twisted bisimulation if
  whenever~$x\mathrel{r}y$, then either the standard forth and back
  conditions shown on the left hold, or the alternative clauses shown
  on the right hold for all~$(u,v)\in\{(a,b),(b,a),(b,b)\}$.  Indeed
  one can allow a third set of alternative clauses with the roles
  of~$a$ and~$b$ interchanged. Since twisted bisimulation is induced
  by a normal lax extension, it remains sound for standard
  bisimilarity while allowing for smaller bisimulation relations.

  \paragraph*{Related work} Up to variations in the axiomatics, lax
  extensions go back to work on coalgebraic
  simulation~\cite{HesselinkThijs00} (under the name \emph{relational
    extensions}). The axiom for lax preservation of composition first
  appears explicitly in work on simulation quotients~\cite{Levy11},
  where the term \emph{relator} is used. We have already mentioned
  work by Marti and Venema relating lax extensions to modal
  logic~\cite{MartiVenema12,MartiVenema15}; at the same time, Marti
  and Venema prove that the notion of bisimulation induced by a normal
  lax extension captures the standard notion of behavioural
  equivalence. \emph{Lax relation liftings}, constructed for functors
  carrying a coherent order structure~\cite{HughesJacobs04}, also
  serve the study of coalgebraic simulation but obey a different
  axiomatics than lax extensions (cf.~\cite[Remark
  4]{MartiVenema15}). There has been recent interest in quantitative
  notions of lax extensions that act on relations taking values in a
  quantale, such as the unit interval, in particular with a view to
  obtaining notions of quantitative
  bisimulation~\cite{HofmannSealEtAl14,Gavazzo18,WildSchroder20,WildSchroder22,GoncharovHofmannEtAl23}
  that witness low behavioural distance (the latter having first been
  treated in coalgebraic generality by Baldan et
  al.~\cite{BaldanBonchiEtAl18}). The correspondence between normal
  lax extensions and separating sets of modalities generalizes to the
  quantitative
  setting~\cite{WildSchroder20,WildSchroder22,GoncharovHofmannEtAl23}.
\end{itemize}
\begin{figure*}[!t]
\centering

\begin{equation*}
\begin{tikzcd}[column sep=1em, row sep=1ex]
&&\text{Difunctionally functorial relator}
	\ar[dr, phantom,"\qquad\subseteq"{sloped}]
&& \text{Relator}
\\
& \text{Normal relational connector}
	\ar[dr, phantom,"\subseteq"{sloped}]
	\ar[ur, phantom,"\subseteq"{sloped}]  
& & \text{Relax extension}
	\ar[ru, phantom,"~~\subseteq"{sloped}] 
\\
\text{Normal lax extension}
	\ar[ur, phantom,"\subseteq\qquad"{sloped}]
	\ar[dr, phantom,"\subseteq"{sloped}] & & \text{Relational connector}
	\ar[ur,phantom,"\subseteq"{sloped}]
\\
& 
\text{Lax extension}
	\ar[ur, phantom,"\subseteq"{sloped}]
\end{tikzcd}
\end{equation*}
\caption{Classes of Relators.}
\label{fig:taxonomy}
\end{figure*}

\section{Preliminaries: Relations and Coalgebras}
\label{sec:prelims}

We assume basic familiarity with category theory (e.g.~\cite{AdamekHerrlichEtAl90}).

Unless stated otherwise, we work in the category~$\SET$ of sets and functions throughout.
Another relevant category is the category~$\REL$ whose objects are again sets, but whose morphisms
are the corresponding relations.
We use the notation~$r \colon X \relto Y$ to indicate that~$r$ is a relation~$r\subseteq X\times Y$.
We say that~$r\colon X\relto Y$ and~${s \colon Y'\relto Z}$ are \df{composable} if~$Y=Y'$,
and extend this terminology to sequences of relations in the
obvious manner. Both for functions and for
relations, we use \emph{applicative composition}, i.e.\ given
$r\colon X\relto Y$ and~${s \colon Y\relto Z}$, their composite
$s\cdot r\colon X\relto Z$ is
$\{(x,z)\mid\exists y\in
Y.\,x\mathrel{r}y\mathrel{s}z\}$. Relations between the same sets are ordered by
inclusion, hence we write \(r \leq r'\) as a synonym to \(r \subseteq r'\). We denote by
$1_X\colon X\to X$ the identity map (hence relation) on~$X$, and we
say that a relation \(r \colon X \relto X\) is a \df{subidentity} if
\(r \leq 1_X\). The identity endofunctor will generally be denoted as~$\ftId$.

Given a relation~$r\colon X \relto Y$,
$r^\circ\colon Y \relto X$ denotes the corresponding converse
defined by \(y\mathrel{r}^{\circ}x\iff x\mathrel{r}y\); in particular, if~$f\colon X\to Y$ is a function, then
$f^\circ\colon Y\relto X$ denotes the converse of the corresponding
relation. For a relation~$r\colon X\relto Y$, we denote by
$\dom r\subseteq X$ and~$\cod r\subseteq Y$ its respective domain ($\dom r=\{x\in X\mid\exists y\in Y.\,x\mathrel{r}y\}$
and image ($\cod r=\{y\in Y\mid\exists x\in X.\,x\mathrel{r}y\}$). A special
class of relations of interest are \df{difunctional relations}
\cite{Riguet48}, which are relations factorizable as~$g^\circ\cdot f$ for
some functions~$f\colon X\to Z$ and~$g\colon Y\to Z$, i.e.\ \(x~r~y\)
iff \(f(x) = g(y)\).  The \df{difunctional closure} of a relation
$r\colon X\relto Y$ is the least difunctional relation~$\hat r \colon X \relto Y$ greater than or
equal to~$r$. More explicitly, $\hat r = \bigvee_{n\in\nat} r\cdot (r^\circ\cdot r)^n$ (e.g.~\cite{Riguet48,GummZarrad14}).

Given an endofunctor~$\ftF\colon\SET \to \SET$, an \df{$\ftF$-relator}, or simply a relator, is a monotone map~$\eR \colon \REL \to
\REL$ that sends a relation from~$X$ to~$Y$ to a relation from~$\ftF X$ to~$\ftF
Y$.  (See~\cite{Thijs96,BackhouseBruinEtAl91,Levy11} for other uses of the term ``relator''). 
We say that a relator~$\eR \colon \REL \to \REL$ is \df{symmetric} if
$\eR(r^\circ) = (\eR r)^\circ$ and \df{normal} if~$\eR 1_X = 1_{\ftF X}$.
We order~$\ftF$-relators pointwise, i.e, given~$\ftF$-relators~$\eR_1$ and~$R_2$, we write~$\eR_1 \leq R_2$ if for every relation~$r$, $\eR_ 1 r \leq \eR_2 r$.

A \df{relax extension} of~$\ftF$~\cite{GoncharovHofmannEtAl23} is a relator that satisfies
$\ftF f \le \eR f$ and~${(\ftF f)}^\circ\le \eR (f^\circ)$ for all functions~$f$.
A \df{lax extension} is such a relax extension~$\eR$ that satisfies \emph{lax preservation of composition}:
$\eR s\cdot R r\le \eR (s\cdot r)$ for all \(r \colon X \relto Y\), \(s \colon Y \relto Z \).
A relator~$\eR \colon \REL \to \REL$ is a \df{relational connector}~\cite{RelationalConnectors} if for every set~$X$, $1_{\ftF X} \leq \eR 1_X$ and for every relation~$r \colon X \relto Y$, and all function~$f \colon A \to X$ and~$g \colon B \to Y$, $\eR( g^\circ \cdot r \cdot f) = {(\ftF g)^\circ \cdot \eR r \cdot \ftF f}$, a condition called \df{naturality}.
It is well-known that every lax extension is a relational connector (e.g.~\cite{HofmannSealEtAl14}), and it is easy to see that every relational connector is a relax extension. 
The resulting situation is summarized in~\autoref{fig:taxonomy} (inclusion signs indicate relations between the corresponding classes). 
This also involves a technical notion of \df{difunctionally functorial relator}, which are such relators~$\eR$ 
that $\eR(g^\circ \cdot f) = (\ftF g)^\circ \cdot \ftF f$, for all functions~$f \colon X \to A$ and~$g \colon Y \to A$ -- the corresponding inclusions 
in~\autoref{fig:taxonomy} follow.

Given a functor \(\ftF \colon \SET \to \SET\), we will be interested in its (weak)
pullback preservation properties, specifically in (weak) preservation of pullbacks of the
form:
\begin{center}
	\begin{tikzcd} %
		P        \ar[from=1-1, to=2-2, phantom, very near start, "\lrcorner"] & B \\
		X & Y
		\ar[from=1-1, to=1-2]
		\ar[from=1-1, to=2-1, tail, "f"']
		\ar[from=1-2, to=2-2, "g"]
		\ar[from=2-1, to=2-2]
	\end{tikzcd}
\end{center}
where~$f$ is a mono (which are pullbacks iff they are weak pullbacks, so weak preservation and preservation of such pullbacks coincide).
We say that~$\ftF$ \df{preserves 1/4-iso pullbacks} if~$\ftF$ preserves the above pullbacks when~$f$ is an isomorphism,  and that~$\ftF$ \df{preserves inverse images} if~$\ftF$ preserves the above pullbacks when~$g$ is a monomorphism. Both properties 
are significantly weaker than weak pullback preservation, i.e.\ preservation of 
weak pullbacks by~$\ftF$~\cite{Rutten00}.

An \df{$\ftF$-coalgebra}~$(X,\alpha)$ for an endofunctor~$\ftF\colon\SET\to\SET$
consists of a set~$X$ of \df{states} and a \df{transition map}~${\alpha\colon X\to\ftF X}$.
A \df{morphism}~$f\colon (X,\alpha)\to(Y,\beta)$ of~$\ftF$-coalgebras is a map
$f\colon X\to Y$ for which~$\beta\cdot f={\ftF f\cdot\alpha}$. Such morphisms are
thought of as preserving the behaviour of states, and correspondingly,
states~$x$ and~$y$ in coalgebras~$(X,\alpha)$ and~$(Y,\beta)$, respectively,
are \df{behaviourally equivalent} if there exist a coalgebra~$(Z,\gamma)$ and
morphisms~$f\colon(X,\alpha)\to(Z,\gamma)$, $g\colon(Y,\beta)\to(Z,\gamma)$
such that~$f(x)=g(y)$.

\section{Simulations induced by Relators}
\label{sec:sim-ind-rel}

We fix a functor~$\ftF \colon \SET \to \SET$ for the remainder of the
technical development until the end of \Cref{sec:greatest-lax-extension}.

Relators induce notions of simulation and similarity between (possibly distinct)
$\ftF$-coalgebras.
Given a relator~$\eR \colon \REL \to \REL$, a relation~$r \colon X \relto
	Y$ is an \df{$\eR$-simulation} from a coalgebra~$\alpha \colon X \to \ftF X$ to a
coalgebra~$\beta \colon Y \to FY$~if\sgnote{Maybe, draw it as a lax diagram?}
\begin{displaymath}
	r \leq \beta^\circ \cdot \eR r \cdot \alpha,
\end{displaymath}
i.e, if~$x\mathrel{r}y$ entails~$\alpha(x)\, \eR r\, \beta(y)$, for all~$x
	\in X$ and~$y \in Y$. When we speak of \df{notions of simulation}, we always understand them as being induced by a relator in this way.
Relators are naturally composed.
In particular, given an~$\ftF$-relator~$\eR$ and a relator~$\ftU$ of the identity functor on~$\SET$, $\eR \cdot \ftU$ yields an~$\ftF$-relator, and we say that an~$\eR \cdot \ftU$-simulation is an~$\eR$-simulation \df{up-to}~$\ftU$.
In particular, whenever $\ftU$ is the relator of the identity functor on~$\SET$ that sends a relation to its difunctional closure, we say that an $\eR \cdot \ftU$-simulation is an $\eR$-simulation up-to difunctionality. 
\pnnote{Up-to or up to?}
Whenever~$\eR_1$ and~$\eR_2$ are~$\ftF$-relators such that~$\eR_1 \leq \eR_2$, then every~$\eR_1$-simulation is an~$\eR_2$-simulation; in the running discussion, we phrase this by saying that~$\eR_2$ induces a \df{more permissive} notion of simulation than~$\eR_1$.
As we assume relators to be monotone, given coalgebras~$\alpha \colon X \to \ftF X$ and~$\beta \colon Y \to \ftF Y$, the map sending a relation~$r \colon X \relto Y$ to the relation~$\beta^\circ \cdot \eR r \cdot \alpha\colon X \relto Y$ is monotone. Therefore, by the Knaster-Tarski fixed point theorem, this map has a greatest fixed point, the greatest~$\eR$-simulation from~$\alpha$ to~$\beta$, which we call \df{$\eR$-similarity} from~$\alpha$ to~$\beta$. 
When~$\eR$ is symmetric, we often speak about~\df{$\eR$-bisimulations} instead of~$\eR$-simulations, and about~\df{$\eR$-bisimilarity} instead of~$\eR$-similarity;
this will be justified in \Cref{p:6}.
For every relator~$\eR \colon \REL \to \REL$, if for all~$\ftF$-coalgebras~$\alpha$ and~$\beta$, $\eR$-similarity from~$\alpha$ to~$\beta$ is greater than or
equal to behavioural equivalence from~$\alpha$ to~$\beta$, then we say that
$\eR$-similarity is \df{complete}. Conversely, if~$\eR$-similarity from~$\alpha$ to~$\beta$ is
less than or equal to behavioural equivalence from~$\alpha$ to~$\beta$, then we
say that~$\eR$-similarity is \df{sound}.

In this section, we study general properties of notions of simulation.
We begin by providing sufficient conditions for soundness and for completeness.
The next result  substantially generalizes~\cite[Theorem 1]{MartiVenema15}, which states that every symmetric normal lax extension induces a sound and complete notion of bisimilarity.
Throughout the paper we will see examples of relators, such as the coBarr relator, that satisfy the conditions of the theorem below but are not symmetric lax extensions.

\begin{theorem}
	\label{p:4}
	Let~$\eR$ be an~$\ftF$-relator.
	\begin{enumerate}[wide]
		\item If for all functions~$f \colon X \to A$ and~$g \colon Y \to A$, ${\eR(g^\circ
			      \cdot f)} \geq (\ftF g)^\circ \cdot \ftF f$, then~$\eR$-similarity is complete.
		\item If~$\eR$ preserves difunctional relations and for every epi-cospan~$(f \colon X \to A, g \colon Y \to A) \in \SET$, $\eR(g^\circ \cdot f) \leq (\ftF g)^\circ \cdot \ftF f$, then
		     $\eR$-similarity is sound.
	\end{enumerate}
\end{theorem}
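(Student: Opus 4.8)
The plan is to treat the two parts separately, in each case comparing the order-theoretic fixed point that defines $\eR$-similarity with the relation $b \colon X \relto Y$ of behavioural equivalence from $\alpha$ to $\beta$. For part~(1) I would show that $b$ is itself an $\eR$-simulation; since $\eR$-similarity is the \emph{greatest} $\eR$-simulation, this places $b$ below $\eR$-similarity and hence yields completeness. For part~(2) I would establish the opposite containment, namely that \emph{every} $\eR$-simulation $s$ (in particular $\eR$-similarity) satisfies $s \le b$, by exhibiting for each such $s$ a single coalgebra into which the two given coalgebras map by morphisms that identify exactly the $s$-related states.

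For part~(1), fix a pair $(x,y) \in b$. By definition there are a coalgebra $(Z,\gamma)$ and morphisms $f \colon (X,\alpha) \to (Z,\gamma)$, $g \colon (Y,\beta) \to (Z,\gamma)$ with $f(x)=g(y)$. The morphism equations $\gamma \cdot f = \ftF f \cdot \alpha$ and $\gamma \cdot g = \ftF g \cdot \beta$ give $\ftF f(\alpha(x)) = \gamma(f(x)) = \gamma(g(y)) = \ftF g(\beta(y))$, that is, $\alpha(x) \mathrel{(\ftF g)^\circ \cdot \ftF f} \beta(y)$. The hypothesis of~(1) yields $(\ftF g)^\circ \cdot \ftF f \le \eR(g^\circ \cdot f)$, and since $g^\circ \cdot f \le b$ (any pair it relates is behaviourally equivalent via the same witness) monotonicity gives $\eR(g^\circ \cdot f) \le \eR b$. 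Chaining these inclusions produces $\alpha(x) \mathrel{\eR b} \beta(y)$, which is precisely $(x,y) \in \beta^\circ \cdot \eR b \cdot \alpha$. As $(x,y) \in b$ was arbitrary, $b \le \beta^\circ \cdot \eR b \cdot \alpha$, so $b$ is an $\eR$-simulation and completeness follows.

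For part~(2), let $s$ be an $\eR$-simulation and let $\hat s = g^\circ \cdot f$ be its difunctional closure, presented via the quotient $[f,g] \colon X + Y \to A$ of the disjoint union by the equivalence relation generated by $s$ (viewing $s$ as a relation on $X+Y$); then $f,g$ are jointly epimorphic, so $(f,g)$ is an epi-cospan and $x \mathrel{\hat s} y \iff f(x)=g(y)$. I would define a candidate transition map by copairing, $h = [\,\ftF f \cdot \alpha,\; \ftF g \cdot \beta\,] \colon X+Y \to \ftF A$, and aim to factor it through $A$. The decisive point is that along each generating edge $x \mathrel{s} y$ the two halves of $h$ agree: the simulation inequality gives $\alpha(x) \mathrel{\eR s} \beta(y)$, while monotonicity together with the epi-cospan bound applied to $\hat s$ give $\eR s \le \eR \hat s \le (\ftF g)^\circ \cdot \ftF f$, whence $\ftF f(\alpha(x)) = \ftF g(\beta(y))$, i.e.\ $h(x)=h(y)$. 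Since the equivalence relation is generated by these edges, $h$ is constant on its classes and thus factors as $h = \gamma \cdot [f,g]$ for a unique $\gamma \colon A \to \ftF A$; by construction $f$ and $g$ are then coalgebra morphisms $(X,\alpha) \to (A,\gamma)$ and $(Y,\beta) \to (A,\gamma)$. Finally $x \mathrel{s} y$ implies $f(x)=g(y)$, so $x$ and $y$ are behaviourally equivalent, giving $s \le b$; applied to $\eR$-similarity this is soundness.

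I expect the well-definedness of $\gamma$ to be the main obstacle: everything hinges on upgrading the relational statement $\alpha(x) \mathrel{\eR s} \beta(y)$ to the strict equality $\ftF f(\alpha(x)) = \ftF g(\beta(y))$, and then propagating this edgewise equality across entire equivalence classes. The epi-cospan bound is the workhorse, applied not to $s$ itself (which need not be difunctional) but to its difunctional closure $\hat s$; preservation of difunctional relations ensures that $\eR$ is well-behaved on $\hat s$, sending it again to a difunctional relation, which is the regime in which the bound $\eR \hat s \le (\ftF g)^\circ \cdot \ftF f$ is the tight description the construction relies on. The one routine point I would double-check is that this quotient genuinely realises $\hat s$, i.e.\ that the difunctional closure coincides with bipartite connectivity of $s$, so that edgewise agreement of $h$ suffices; the explicit formula $\hat s = \bigvee_{n} s \cdot (s^\circ \cdot s)^n$ recalled earlier makes this immediate.
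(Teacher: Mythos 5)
Your proof is correct. Part~(1) is essentially the paper's own argument, merely repackaged: the paper shows that each witnessing relation $g^\circ \cdot f$ is an $\eR$-simulation containing the given pair, while you show in one stroke that behavioural equivalence $b$ is itself an $\eR$-simulation; the computation ($\alpha(x) \mathrel{(\ftF g)^\circ \cdot \ftF f} \beta(y)$ from the homomorphism squares, then the hypothesis and monotonicity) is identical.

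Part~(2) follows the same global plan as the paper -- equip the pushout/quotient realising the difunctional closure $\hat{s} = g^\circ \cdot f$ with a transition structure $\gamma$ and verify that the legs become coalgebra homomorphisms -- but your well-definedness argument is genuinely different, and in one respect sharper. The paper first upgrades $\hat{s}$ to an $\eR$-simulation, and this is exactly where the hypothesis that $\eR$ preserves difunctional relations enters: it makes $\beta^\circ \cdot \eR \hat{s} \cdot \alpha$ difunctional, so minimality of the difunctional closure yields $\hat{s} \leq \beta^\circ \cdot \eR \hat{s} \cdot \alpha$; this gives $\ftF p_1 \cdot \alpha(x) = \ftF p_2 \cdot \beta(y)$ for \emph{all} pairs with $p_1(x) = p_2(y)$, after which the paper concludes by a case distinction using structural facts about pushouts in $\SET$ (identifications inside $X$, resp.\ $Y$, factor through a single intermediate point of the other set). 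You instead establish the equality $\ftF f(\alpha(x)) = \ftF g(\beta(y))$ only along the generating edges $x \mathrel{s} y$, via the chain $\eR s \leq \eR \hat{s} \leq (\ftF g)^\circ \cdot \ftF f$ (monotonicity plus the epi-cospan bound applied to $\hat{s}$), and then propagate along arbitrary zigzags; this suffices because the kernel of $h = [\ftF f \cdot \alpha, \ftF g \cdot \beta]$ is an equivalence relation containing the generators, hence contains the generated equivalence, so $h$ factors through the quotient. The notable consequence -- which your own closing commentary misstates -- is that your argument never actually uses preservation of difunctional relations: you need $\hat{s}$ only as a relation of the form $g^\circ \cdot f$ with $(f,g)$ an epi-cospan, not as a simulation, so the epi-cospan inequality alone carries the proof. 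Your route thus establishes part~(2) under a strictly weaker hypothesis than the paper's (an improvement your write-up should claim explicitly instead of attributing a ``tight description'' role to the unused assumption). The one point you flagged for checking is indeed routine: bipartite zigzag connectivity of $s$ in the quotient of $X+Y$ is precisely $\bigvee_{n} s \cdot (s^\circ \cdot s)^n$, the formula for $\hat{s}$ recalled in the preliminaries.
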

\begin{proof}
	\begin{enumerate}[wide]
		\item Let~$(X,\alpha)$ and~$(Y,\beta)$ be~$\ftF$-coalgebras, and let~$x \in X$ and~$y
			      \in Y$ be behaviourally equivalent elements. Then, there are coalgebra
		      homomorphisms~$f \colon (X,\alpha) \to (Z, \gamma)$ and~$g \colon (Y,\beta) \to
			      (Z,\gamma)$ such that~$f(x) = g(y)$. Hence, with~$r = g^\circ \cdot f$, we have
		     $x\, r \, y$ and as~$f,g$ are coalgebra homomorphisms, by hypothesis, we obtain
		     $r \leq \beta^\circ \cdot (\ftF g)^\circ \cdot \ftF f \cdot \alpha \leq \beta^\circ
			      \cdot \eR r \cdot \alpha$. Therefore, $x$ and~$y$ are~$\eR$-similar.
		\item Let~$r \colon X \relto Y$ be an~$\eR$-simulation from~$(X,\alpha)$ to~$(Y,\beta)$
		      and~$(\pi_1 \colon A \to X,\pi_2 \colon A \to Y)$ be a span in~$\SET$ such that
		     $r=\pi_2 \cdot \pi_1^\circ$. Consider the pushout~$(p_1 \colon X \to O\comma p_2
			      \colon Y \to O)$ of~$(\pi_1,\pi_2)$. Then, $p_2^\circ \cdot p_1$ is
		      the difunctional closure~$\hat{r}$ of~$r$ and, in particular, for every~$x \in
			      X$ and~$y \in Y$, if~$x\mathrel{r}y$, then~$p_1(x) = p_2(y)$. Hence, the assertion
		      follows once we show that there is a coalgebra~$(O,\gamma)$ s.t.~$p_1 \colon
			      (X,\alpha) \to (O,\gamma)$ and~$p_2 \colon (Y,\beta) \to (O,\gamma)$ are
		      coalgebra homomorphisms. To this end, put~$\gamma(o) = \ftF p_1 \cdot
			      \alpha(x)$, if~$p_1(x)=o$, for some~$x \in X$ and~$\gamma(o) = \ftF p_2 \cdot
			      \beta(y)$, if~$p_2(y)=o$, for some~$y \in Y$. Note that~$(p_1,p_2)$ is an
		      epi-cocone, so we are assigning an element of~$\ftF O$ to every element of~$O$.
		      We claim that this assignment is well-defined. To see this, we begin by
		      observing that as~$\eR$ is monotone, $r \leq \beta^\circ \cdot \eR r \cdot \alpha
			      \leq \beta^\circ \cdot \eR \hat{r} \cdot \alpha$, and, as~$\eR$ preserves
		      difunctional relations, $\beta^\circ \cdot \eR \hat{r} \cdot \alpha$ is a
		      difunctional relation greater than or equal to~$r$, which entails, by definition
		      of difunctional closure, $\hat{r} \leq \beta^\circ \cdot \eR\hat{r} \cdot
			      \alpha$; i.e, $\hat{r}$ is an~$\eR$-simulation. Hence, by hypothesis, we obtain
		     $\hat{r} \leq \beta^\circ \cdot (\ftF p_2)^\circ \cdot \ftF p_1 \cdot \alpha$,
		      i.e, for all~$x \in X$ and~$y \in Y$, if~$p_1(x) = p_2(y)$, then~$\ftF p_1
			      \cdot \alpha (x) = \ftF p_2 \cdot \beta(y)$. Furthermore, by the way that
		      pushouts are constructed in~$\SET$, we have:
		      \pnnote{@Pedro: probably better to add more details here.}
		      \begin{enumerate}
			      \item for all~$x,x' \in X$ such that~$x \neq x'$, $p_1(x) = p_1(x')$ iff there is~$y \in
				            Y$ such that~$p_1(x) = p_2(y) = p_1(x')$;
			      \item for all~$y,y' \in Y$ such that~$y \neq y'$ , $p_2(y) = p_2(y')$ iff there is~$x \in
				            X$ such that~$p_2(y) = p_1(x) = p_2(y')$.
		      \end{enumerate}
		      Now, the claim follows straightforwardly by case distinction. Moreover, by
		      definition of~$\gamma$ we obtain immediately that~$p_1 \colon (X,\alpha) \to
			      (O,\gamma)$ and~$p_2 \colon (Y,\beta) \to (O,\gamma)$ are coalgebra
		      homomorphisms. Therefore, $\eR$-similarity is sound.\qed
	\end{enumerate}\noqed
\end{proof}
Recall that an~$\ftF$-relator $\eR$ is difunctionally functorial if for all functions~$f \colon X \to A$ and~$g \colon Y \to A$, $\eR(g^\circ \cdot f) = (\ftF g)^\circ \cdot \ftF f$.
\begin{corollary}
	\label{p:1}
	Let~$\eR$ be an difunctionally functorial $\ftF$-relator. Then, $\eR$-similarity is sound and complete.
\end{corollary}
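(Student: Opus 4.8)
The plan is to derive both properties directly from \Cref{p:4}, using nothing beyond the defining equality of difunctional functoriality. Since $\eR$ is difunctionally functorial, we have $\eR(g^\circ \cdot f) = (\ftF g)^\circ \cdot \ftF f$ for all functions $f \colon X \to A$ and $g \colon Y \to A$. In particular this yields the inequality $\eR(g^\circ \cdot f) \geq (\ftF g)^\circ \cdot \ftF f$, which is precisely the hypothesis of \Cref{p:4}(1); hence completeness follows at once.

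For soundness I would invoke \Cref{p:4}(2), whose hypotheses are (a) the inequality $\eR(g^\circ \cdot f) \leq (\ftF g)^\circ \cdot \ftF f$ for every epi-cospan $(f,g)$, and (b) that $\eR$ preserves difunctional relations. Condition (a) is again immediate from the defining equality (indeed it holds for arbitrary cospans, not merely epi-cospans), so the only point requiring an argument is~(b).

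To establish (b), recall that a relation $r \colon X \relto Y$ is difunctional exactly when it factors as $r = g^\circ \cdot f$ for some functions $f \colon X \to Z$ and $g \colon Y \to Z$. Applying difunctional functoriality to this factorization gives $\eR r = \eR(g^\circ \cdot f) = (\ftF g)^\circ \cdot \ftF f$, which is a composite of the converse of a function with a function, hence difunctional by definition. Thus $\eR$ sends difunctional relations to difunctional relations, which is condition~(b), and the soundness part of \Cref{p:4}(2) applies.

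I expect no genuine obstacle here: the corollary is essentially a bookkeeping consequence of \Cref{p:4}, its sole content being the observation that difunctional functoriality forces preservation of difunctional relations, read off immediately from the factorization characterization of difunctionality.
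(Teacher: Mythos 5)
Your proof is correct and follows exactly the route the paper intends: the corollary is stated without a separate proof precisely because it is this instantiation of \Cref{p:4}, with the defining equality supplying both inequalities and, via the factorization characterization of difunctionality, preservation of difunctional relations (since $(\ftF g)^\circ \cdot \ftF f$ is itself a cospan factorization). Nothing is missing.
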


The notion of relator is quite liberal and, naturally, the properties of the
corresponding class of simulations can vary significantly from one relator
to the other. In this work we think of simulations as witnesses for
behavioural equivalence and, therefore, as coalgebra homomorphisms are thought
to be ``behaviour preserving maps'', we are interested in  relators whose
classes of simulations contain all coalgebra homomorphism and their
converses since behavioural equivalence is a symmetric relation.
And, among such relators, we are particularly interested in those
whose corresponding classes of simulations are closed under composition or, at least, under composition with coalgebra homomorphisms.
Since empty relations are trivially simulations, the relators whose class of simulations satisfies certain properties are often characterized by their action on non-empty relations.
This phenomenon can be observed in the next results.

\begin{theorem}
	\label{p:6}
	Let~$\eR$ be an~$\ftF$-relator.
	The class of all~$\eR$-simulations
	\begin{enumerate}
		\item contains all homomorphisms of\/~$\ftF$-coalgebras and their converses iff for
		      every non-empty function~$f$, $\ftF f \leq \eR f$ and~$(\ftF f)^\circ \leq \eR (f^\circ)$;
		\item \label{p:3} is closed under composition iff for all relations~$r \colon X \relto Y$ and~$s \colon Y \relto Z$ such that~$s \cdot r$ is non-empty, $\eR s \cdot \eR r \leq \eR(s \cdot r)$;
		\item is closed under converses iff for every non-empty relation~$r \colon X \relto
			      Y$, $\eR(r^\circ) = (\eR r)^\circ$.
	\end{enumerate}
\end{theorem}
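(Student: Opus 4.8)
My plan is to prove each of the three biconditionals by treating the two implications separately, and in each case the work splits cleanly: the direction \emph{``condition on~$\eR$ implies closure property''} is pure relational calculus using the defining inequality~$r\le\beta^\circ\cdot\eR r\cdot\alpha$ of a simulation, while the converse direction \emph{``closure property implies condition on~$\eR$''} is where the real idea lies, and proceeds by constructing suitable test coalgebras that force the relations in question to be simulations. Throughout I would exploit, for any function~$h$, the relational facts~$h^\circ\cdot h\ge 1$ (totality) and~$h\cdot h^\circ\le 1$ (single-valuedness), and the standing remark that empty relations are vacuously simulations, which is exactly why all the conditions on~$\eR$ are only imposed on non-empty arguments.

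For the calculus directions I would argue as follows. In~(1), if~$f\colon(X,\alpha)\to(Y,\beta)$ is a homomorphism then~$\ftF f\cdot\alpha=\beta\cdot f$, so from~$\ftF f\le\eR f$ we get~$\beta^\circ\cdot\eR f\cdot\alpha\ge\beta^\circ\cdot\ftF f\cdot\alpha=\beta^\circ\cdot\beta\cdot f\ge f$, showing~$f$ is a simulation; the dual computation using~$(\ftF f)^\circ\le\eR(f^\circ)$ handles~$f^\circ$. In~(2), composing two simulation inequalities gives~$s\cdot r\le\gamma^\circ\cdot\eR s\cdot(\beta\cdot\beta^\circ)\cdot\eR r\cdot\alpha\le\gamma^\circ\cdot\eR s\cdot\eR r\cdot\alpha\le\gamma^\circ\cdot\eR(s\cdot r)\cdot\alpha$, where I used~$\beta\cdot\beta^\circ\le 1$ and then the hypothesis~$\eR s\cdot\eR r\le\eR(s\cdot r)$ (applicable since~$s\cdot r$ is non-empty, the empty case being trivial). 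In~(3), taking converses of~$r\le\beta^\circ\cdot\eR r\cdot\alpha$ yields~$r^\circ\le\alpha^\circ\cdot(\eR r)^\circ\cdot\beta=\alpha^\circ\cdot\eR(r^\circ)\cdot\beta$ using the hypothesised equality, so~$r^\circ$ is a simulation.

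For the converse directions the key device is \emph{constant coalgebras}. To prove the condition in~(1), given a non-empty~$f\colon X\to Y$ and any~$P\in\ftF X$, I would put~$\alpha\equiv P$ on~$X$ and~$\beta\equiv\ftF f(P)$ on~$Y$; then~$f$ is a homomorphism, hence (by the closure property) its graph is a simulation, and reading off the simulation inequality at any point~$x\in X$ gives~$P\mathrel{\eR f}\ftF f(P)$, i.e.~$\ftF f\le\eR f$; the symmetric reading for~$f^\circ$ gives~$(\ftF f)^\circ\le\eR(f^\circ)$. For~(2), given~$P\mathrel{\eR r}Q$ and~$Q\mathrel{\eR s}R$ with~$s\cdot r$ non-empty, I would set~$\alpha\equiv P$, $\beta\equiv Q$, $\gamma\equiv R$; with these constant transition maps,~$r$ and~$s$ are simulations precisely because~$P\mathrel{\eR r}Q$ and~$Q\mathrel{\eR s}R$, so closure under composition makes~$s\cdot r$ a simulation, and evaluating at a witnessing pair yields~$P\mathrel{\eR(s\cdot r)}R$, i.e.~$\eR s\cdot\eR r\le\eR(s\cdot r)$. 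For~(3) the same two-point constant construction with~$\alpha\equiv P$, $\beta\equiv Q$ where~$P\mathrel{\eR r}Q$ gives, via closure under converses,~$Q\mathrel{\eR(r^\circ)}P$, establishing~$(\eR r)^\circ\le\eR(r^\circ)$; to upgrade this inequality to the stated equality I would then apply it a second time to~$r^\circ$ in place of~$r$ (still non-empty) and transpose, obtaining~$\eR(r^\circ)\le(\eR r)^\circ$.

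I expect the main obstacle to be the converse directions, specifically hitting on the constant-coalgebra trick and verifying that with constant transition maps the simulation condition~$r\le\beta^\circ\cdot\eR r\cdot\alpha$ collapses exactly to the single membership~$P\mathrel{\eR r}Q$; once this is seen, everything else is bookkeeping. The two genuine subtleties to keep track of are the non-emptiness side conditions (needed both so that the relevant sets are non-empty and so that a witnessing tuple exists to read the inequality off, while empty arguments are dispatched vacuously) and, in~(3), the observation that closure under converses only yields one inequality per relation, so the symmetric equality must be recovered by applying the argument to~$r$ and~$r^\circ$ in turn.
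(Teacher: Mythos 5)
Your proposal is correct and follows essentially the same route as the paper: the ``condition implies closure'' directions by the same relational calculus (using $\beta\cdot\beta^\circ\le 1$), and the converse directions via exactly the constant-coalgebra construction the paper uses for claim~(2), which you correctly extend to claims~(1) and~(3) (the paper leaves these as analogous). Your added care about the non-emptiness side conditions and the two-step upgrade to equality in~(3) matches the intended argument.
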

\begin{proof}
	\pnnote{@Pedro: add remaining proofs if time permits}
	We show~(\ref{p:3}), the other claims follow analogously.
  The fact that the condition is sufficient follows from standard arguments (e.g. \cite[Lemma 2]{Levy11}).
	Suppose that~$r \colon X \relto Y$ is an~$\eR$-simulation from~$(X,\alpha)$ to~$(Y,\beta)$ and~$s \colon Y \relto Z$ is an~$\eR$-simulation from~$(Y,\beta)$ to~$(Z,\gamma)$.
	If~$s \cdot r$ is empty, then it is trivially an~$R$-simulation.
	On the other hand, suppose that~$s \cdot r$ is non-empty.
	Then,  as~$r$ and~$s$ are~$\eR$-simulations, by hypothesis, $s \cdot r \leq \gamma^\circ \cdot \eR s \cdot \beta \cdot \beta^\circ \cdot \eR r \cdot \alpha \leq \gamma^\circ \cdot \eR(s \cdot r) \cdot \alpha$.
	Therefore, $s \cdot r$ is an~$\eR$-simulation.
	To see that the converse statement holds,  let~$r \colon X \relto Y$ and~$s \colon Y \relto Z$ be relations such that~$s \cdot r$ is non-empty.
	Now let~$u \in \ftF X$, $v \in \ftF Y$,~$w \in \ftF Z$ such that~$u\, \eR r\, v$ and~$v\, \eR s\, w$.
	Consider the coalgebras~$u \colon X \to \ftF X$, $v \colon Y \to \ftF Y$ and~$w \colon Z \to \ftF Z$ given by constant maps into~$u,v,w$, respectively.
	Then, $r$ is an~$\eR$-simulation from~$u$ to~$v$ and~$s$ is an~$\eR$-simulation from~$v$ to~$s$.
	Hence, by hypothesis, $s \cdot r$ is an~$\eR$-simulation, which means that~$s \cdot r \leq w ^\circ \cdot \eR(s \cdot r) \cdot u$. Thus, since $s \cdot r$ is non-empty, 
$w ^\circ \cdot \eR(s \cdot r) \cdot u$ is also non-empty, so $u\, \eR(s\cdot r)\, w$.
\end{proof}

A key feature of behavioural equivalence is that it is invariant under coalgebra homomorphisms.
For~$\eR$-similarity, this typically follows by showing that the class of~$\eR$-simulations is closed under (pre/post)composition with coalgebra homomorphisms.

\begin{theorem}
	Let~$\eR$ be an~$\ftF$-relator.
	The class of all~$\eR$-simulations
	\begin{enumerate}
		\item is closed under postcomposition with coalgebra homomorphisms iff for every
		      relation~$r \colon X \relto Y$ and every function~$f \colon A \to X$ such that~$r
			      \cdot f$ is non-empty, $\eR r \cdot \ftF f \leq \eR(r\cdot f)$;
		\item \label{p:8} is closed under postcomposition with converses of coalgebra homomorphisms iff for every
		      relation~$r \colon X \relto Y$ and every function~$f \colon X \to A$ such that~$r \cdot f^\circ$ is non-empty,
			$\eR r \cdot (\ftF f)^\circ \leq \eR (r\cdot f^\circ)$;
		\item is closed under precomposition with coalgebra homomorphisms iff for every
		      relation~$r \colon X \relto Y$ and every function~$g \colon Y \to B$ such that
		     $g \cdot r$ is non-empty, $\ftF g \cdot \eR r \leq \eR (g \cdot r)$;
		\item is closed under precomposition with converses of coalgebra homomorphisms iff for every
		      relation~$r \colon X \relto Y$ and every function~$g \colon B \to Y$ such that
		     $g^\circ \cdot r$ is non-empty, $(\ftF g)^\circ \cdot \eR r \leq \eR(g^\circ
			      \cdot r)$.
	\end{enumerate}
\end{theorem}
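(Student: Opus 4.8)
The plan is to prove all four equivalences by the same two-part argument already used for item~(\ref{p:3}) of \cref{p:6}, the only difference between the items being the relational identity that is available for moving the (co)homomorphism past a transition map. I will carry out item~(1) in full and then indicate the adaptations for items~(2)--(4), since the trivial treatment of the empty composite and the construction of constant coalgebras recur verbatim. Throughout I use that relational composition is monotone and that, for a function $h$, one has $h\cdot h^\circ\le 1$ and $1\le h^\circ\cdot h$.

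For the \emph{if} direction of item~(1), assume the stated inequality and let $r\colon X\relto Y$ be an $\eR$-simulation from $(X,\alpha)$ to $(Y,\beta)$ and $f\colon(A,\delta)\to(X,\alpha)$ a coalgebra homomorphism. If $r\cdot f$ is empty it is trivially a simulation, so assume otherwise. Composing the simulation inequality $r\le\beta^\circ\cdot\eR r\cdot\alpha$ with $f$ on the right, substituting the homomorphism equation in the form $\alpha\cdot f=\ftF f\cdot\delta$, and applying the hypothesis $\eR r\cdot\ftF f\le\eR(r\cdot f)$, I obtain $r\cdot f\le\beta^\circ\cdot\eR(r\cdot f)\cdot\delta$, so that $r\cdot f$ is a simulation from $(A,\delta)$ to $(Y,\beta)$, as required.

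For the \emph{only if} direction, fix $r\colon X\relto Y$ and $f\colon A\to X$ with $r\cdot f$ nonempty, and take $u\in\ftF A$ and $w\in\ftF Y$ with $u\,(\eR r\cdot\ftF f)\,w$; unfolding the composite gives $t=\ftF f(u)$ with $t\,\eR r\,w$. I equip $A,X,Y$ with the constant coalgebra structures valued in $u,t,w$ respectively. The two checks are that $f$ is then a homomorphism -- which holds precisely because $\ftF f(u)=t$ makes $\alpha\cdot f$ and $\ftF f\cdot\delta$ both constant at $t$ -- and that $r$ is a simulation between the constant coalgebras, which reduces to $t\,\eR r\,w$. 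Closure now makes $r\cdot f$ a simulation, and evaluating its simulation inequality at any witnessing pair of the nonempty relation $r\cdot f$ yields $u\,\eR(r\cdot f)\,w$. As $u,w$ were arbitrary, this is the desired inequality.

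Items~(2)--(4) follow the identical scheme; the only change is the relational identity used in the \emph{if} direction. Item~(4) again uses a genuine equality, the converse $g^\circ\cdot\beta^\circ=\epsilon^\circ\cdot(\ftF g)^\circ$ of the homomorphism equation, whereas items~(2) and~(3) use the one-sided inequalities $\alpha\cdot f^\circ\le(\ftF f)^\circ\cdot\delta$ and $g\cdot\beta^\circ\le\epsilon^\circ\cdot\ftF g$, each derived from the homomorphism equation together with $h\cdot h^\circ\le 1$ and $1\le h^\circ\cdot h$. I expect the only genuine point of care to be the \emph{only if} direction: one must confirm that the constant coalgebras assembled from the chosen witnesses really do satisfy the relevant homomorphism equation, which in every case comes down to checking that $\ftF f$ (resp.\ $\ftF g$) sends the witness to the matching constant, with the nonemptiness hypothesis supplying exactly the pair needed to read the conclusion off the resulting simulation.
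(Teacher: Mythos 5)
Your proposal is correct and follows essentially the same route as the paper, which proves this theorem by declaring it ``analogous to the proof of \Cref{p:6}'': the \emph{if} directions via the simulation inequality combined with the (in)equalities $\alpha\cdot f=\ftF f\cdot\delta$, $\alpha\cdot f^\circ\le(\ftF f)^\circ\cdot\delta$, $g\cdot\beta^\circ\le\epsilon^\circ\cdot\ftF g$, and $g^\circ\cdot\beta^\circ=\epsilon^\circ\cdot(\ftF g)^\circ$, and the \emph{only if} directions via constant coalgebras on the witnesses, with nonemptiness extracting the conclusion. Your derivations of the one-sided inequalities in items~(2) and~(3) from the homomorphism equation together with $h\cdot h^\circ\le 1$ and $1\le h^\circ\cdot h$, and your verification that the constant coalgebras make $f$ (resp.\ $g$) a homomorphism, are all sound.
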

\begin{proof}
	Analogous to the proof of \Cref{p:6}.
	\pnnote{@Pedro: complete if time permits}
\end{proof}

The previous results motivate the study of simulations induced by (re)lax
extensions and relational connectors (cf.~\cref{sec:prelims}).
The prototypical example of a relax extension is the Barr
relator~\cite{Barr70}, which is normal and symmetric, and is a lax
extension iff it is a relational connector iff the functor preserves weak pullbacks.
The \df{Barr} relator~$\ftbF \colon \REL \to \REL$ of a functor~$\ftF \colon \SET \to \SET$ is defined
as follows. Given a relation~$r \colon X \relto Y$, take a span~${(\pi_1 \colon
	A \to X, \pi_2 \colon A \to Y)}$ such that~$r = \pi_2 \cdot
	\pi_1^\circ$. Then, put~$\ftbF r = \ftF \pi_2 \cdot (\ftF \pi_1)^\circ$.

\begin{example}
	\label{p:18d}
    The powerset functor~$\ftP \colon \SET \to \SET$ weakly preserves pullbacks. Its Barr extension coincides with the well-known Egli-Milner extension: Given~$r\colon X\relto Y$,
   $S\in\ftP X$, and~$T\in\ftP Y$, we have~$S\mathrel{Lr}T$ iff for
    all~$x\in S$, there exists~$y\in T$ such that~$x\mathrel{r}y$, and
    symmetrically. 
    For relational transition systems, understood as~$\ftP$-coalgebras, a~$\overline{\ftP}$-bisimulation is then just a bisimulation in the standard sense.
\end{example}

The Barr relator is the standard relator used to define a notion of bisimulation coalgebraically.
Regarding soundness and completeness, it is known that, independently of the functor, Barr-bisimilarity is sound~\cite{AczelMendler89}, and that for functors that weakly preserve pullbacks, Barr-bisimilarity is complete~\cite{MartiVenema15}.
However, \cref{p:1} suggests a different canonical construction to obtain sound and complete notions of similarity that is dual to the Barr relator on difunctional relations. Given a difunctional relation~$r
	\colon X \relto Y$, take a cospan~$(p_1 \colon X \to O\comma p_2 \colon Y \to O) \in
	\SET$ such that~$r = p_2^\circ \cdot p_1$. Then, put~$\ftcF r = (\ftF
	p_2)^\circ \cdot \ftF p_1$. Of course, for such construction to be
well-defined, $\ftcF r$ must be independent of the choice of the cospan,
and it has been shown recently~\cite{GoncharovHofmannEtAl25} that this is equivalent to the functor
preserving 1/4-iso pullbacks, which is equivalent to the functor being \df{monotone on
	difunctional relations} in the following sense: For all difunctional relations
factorized as \(g^\circ \cdot f \colon X \relto Y\) and \(g'^\circ \cdot f'
\colon X \relto Y\), if \(g^\circ \cdot f \leq g'^\circ \cdot f'\) then \((\ftF
g)^\circ \cdot \ftF f \leq (\ftF g')^\circ \cdot \ftF f'\). This means that whenever
~$\ftF$ preserves 1/4-iso pullbacks, sending
a relation~$r \colon X \relto Y$ to~$\ftcF \hat{r}$ as described above, with~$\hat{r}$ being the difunctional closure of~$r$,
defines an~$\ftF$-relator, which we call the \df{coBarr relator} of~$\ftF$ and denote by $\ftcF$. In the
sequel we record some properties of this construction that follow straightforwardly from the definition.

\begin{proposition}
	\label{p:5}
	Suppose that~$\ftF$ preserves 1/4-iso pullbacks.
	Then:
	\begin{enumerate}
		\item \label{p:15e} $\ftcF$ is a symmetric difunctionally functorial relator.
		\item \label{p:8e} If~$\ftF$ weakly preserves pullbacks, then for every relation~$r$, $\ftcF r = \ftbF \hat{r}$, where~$\hat{r}$ denotes the difunctional closure of~$r$.
		\item \label{p:10} $\ftbF \leq \ftcF$, and for every normal relational connector~$\eR$ of\/~$\ftF$,$\ftbF \leq \eR \leq \ftcF$, .
	\end{enumerate}
\end{proposition}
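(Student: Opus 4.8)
The plan is to handle the three claims in order, reducing each to the definition of~$\ftcF$ on difunctional relations together with the equivalence, recalled just before the proposition, between preservation of 1/4-iso pullbacks and monotonicity on difunctional relations. For~(\ref{p:15e}) I would first observe that any relation of the form~$g^\circ \cdot f$ is itself difunctional, hence equal to its own difunctional closure, so that the defining clause~$\ftcF(g^\circ \cdot f) = (\ftF g)^\circ \cdot \ftF f$ applies directly; this is precisely difunctional functoriality. Monotonicity of~$\ftcF$ then follows by composing monotonicity of the closure operator~$r \mapsto \hat r$ with monotonicity on difunctional relations, i.e.\ the assumed 1/4-iso pullback preservation. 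For symmetry I would use that~$\widehat{r^\circ} = (\hat r)^\circ$, difunctionality being self-dual, and that a cospan factorization~$\hat r = p_2^\circ \cdot p_1$ induces~$(\hat r)^\circ = p_1^\circ \cdot p_2$; evaluating~$\ftcF$ on both sides then gives~$\ftcF(r^\circ) = (\ftF p_1)^\circ \cdot \ftF p_2 = ((\ftF p_2)^\circ \cdot \ftF p_1)^\circ = (\ftcF r)^\circ$.

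For~(\ref{p:8e}), since~$\ftcF r = \ftcF \hat r$ holds by definition, it suffices to show that~$\ftbF$ and~$\ftcF$ agree on difunctional relations. Given difunctional~$s = p_2^\circ \cdot p_1$, I would factor it through the pullback~$(\pi_1, \pi_2)$ of the cospan~$(p_1, p_2)$, obtaining the span factorization~$s = \pi_2 \cdot \pi_1^\circ$ that computes~$\ftbF s = \ftF\pi_2 \cdot (\ftF\pi_1)^\circ$. The resulting square is a genuine pullback, which~$\ftF$ weakly preserves by hypothesis; weak preservation amounts precisely to the equality~$\ftF\pi_2 \cdot (\ftF\pi_1)^\circ = (\ftF p_2)^\circ \cdot \ftF p_1$, whose right-hand side is~$\ftcF s$. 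Hence~$\ftbF s = \ftcF s$, and in particular~$\ftbF \hat r = \ftcF \hat r = \ftcF r$.

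For~(\ref{p:10}) I would establish the three inequalities separately. The inclusion~$\ftbF \leq \ftcF$ is purely functorial: writing~$r = \pi_2 \cdot \pi_1^\circ$ and taking the pushout~$(p_1, p_2)$ of~$(\pi_1, \pi_2)$, which computes~$\hat r = p_2^\circ \cdot p_1$ as in the proof of~\Cref{p:4}, the square commutes, and applying~$\ftF$ together with the fact that any commutative square satisfies~$\ftF\pi_2 \cdot (\ftF\pi_1)^\circ \leq (\ftF p_2)^\circ \cdot \ftF p_1$ yields~$\ftbF r \leq (\ftF p_2)^\circ \cdot \ftF p_1 = \ftcF r$. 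For a normal relational connector~$\eR$, the upper bound~$\eR \leq \ftcF$ comes from monotonicity and normality: from~$r \leq \hat r$ we get~$\eR r \leq \eR \hat r$, and naturality, applied with the relation~$1_O$ and the functions~$p_1,p_2$, together with~$\eR 1_O = 1_{\ftF O}$ gives~$\eR \hat r = \eR(p_2^\circ \cdot 1_O \cdot p_1) = (\ftF p_2)^\circ \cdot \ftF p_1 = \ftcF r$.

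The lower bound~$\ftbF \leq \eR$ is where I expect the main difficulty, precisely because a relational connector need not laxly preserve composition, so one cannot simply bound~$\ftbF r = \ftF\pi_2 \cdot (\ftF\pi_1)^\circ$ by a product~$\eR\pi_2 \cdot \eR(\pi_1^\circ)$. The remedy is to pull~$r$ back along its span: since~$\pi_2^\circ \cdot r \cdot \pi_1 \geq 1_P$ (both~$\pi_1^\circ \cdot \pi_1$ and~$\pi_2^\circ \cdot \pi_2$ dominate~$1_P$), naturality together with~$1_{\ftF P} \leq \eR 1_P$ yields~$1_{\ftF P} \leq \eR(\pi_2^\circ \cdot r \cdot \pi_1) = (\ftF\pi_2)^\circ \cdot \eR r \cdot \ftF\pi_1$. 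Pre- and post-composing with~$\ftF\pi_2$ and~$(\ftF\pi_1)^\circ$ and absorbing the subidentities~$\ftF\pi_1 \cdot (\ftF\pi_1)^\circ \leq 1$ and~$\ftF\pi_2 \cdot (\ftF\pi_2)^\circ \leq 1$ then gives~$\ftbF r = \ftF\pi_2 \cdot (\ftF\pi_1)^\circ \leq \eR r$, completing the chain~$\ftbF \leq \eR \leq \ftcF$.
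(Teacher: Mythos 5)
Your proof is correct and follows essentially the same route as the paper: the two nontrivial inequalities of part~(\ref{p:10}) are argued exactly as in the paper's appendix proof --- $\eR \leq \ftcF$ via monotonicity, naturality, and normality applied to a cospan factorization of~$\hat r$, and $\ftbF \leq \eR$ via $1_P \leq \pi_2^\circ \cdot r \cdot \pi_1$, naturality, and the adjunction $\ftF\pi_i \dashv (\ftF\pi_i)^\circ$. The paper dismisses parts~(\ref{p:15e}), (\ref{p:8e}), and $\ftbF \leq \ftcF$ as immediate from the definition of the coBarr relator; your explicit verifications of these (self-duality of difunctionality for symmetry, weak pullback preservation as the equality $\ftF\pi_2 \cdot (\ftF\pi_1)^\circ = (\ftF p_2)^\circ \cdot \ftF p_1$, and the pushout square's commutativity for the first inequality) are the intended arguments.
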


Regarding coBarr-bisimulations, from \Cref{p:1} and \Cref{p:5}(\ref{p:8e}) we obtain:

\begin{corollary}
	\label{p:17e}
	Suppose that~$\ftF$ preserves 1/4-iso pullbacks.
	Then:
		\begin{enumerate}
			\item \label{p:9} $\ftcF$-bisimilarity is sound and complete.
			\item If~$\ftF$ weakly preserves pullbacks, then the~$\ftcF$-bisimulations are precisely the~$\ftbF$-bisimulations up to difunctionality. 
		\end{enumerate}
\end{corollary}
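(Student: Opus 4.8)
The plan is to read off both parts from results already in hand, so the argument is essentially a matter of unwinding definitions; the substantive work has already been done in \Cref{p:5}.

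For~\ref{p:9}, I would first invoke \Cref{p:5}(\ref{p:15e}), which tells us that~$\ftcF$ is a symmetric difunctionally functorial relator. Being difunctionally functorial, $\ftcF$ satisfies the hypothesis of \Cref{p:1}, which immediately yields that~$\ftcF$-similarity is sound and complete. Since~$\ftcF$ is symmetric, the terminology justified by \Cref{p:6} allows us to speak of~$\ftcF$-bisimilarity in place of~$\ftcF$-similarity, giving the claim.

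For the second part, I would unwind the definition of an~$\ftbF$-bisimulation up to difunctionality. By definition this is an~$\ftbF \cdot \ftU$-simulation, where~$\ftU$ is the identity-functor relator sending a relation to its difunctional closure; that is, a relation~$r \colon X \relto Y$ satisfying~$r \leq \beta^\circ \cdot \ftbF(\hat r) \cdot \alpha$ for the coalgebras~$\alpha,\beta$ in question. Under weak pullback preservation, \Cref{p:5}(\ref{p:8e}) supplies the identity~$\ftcF r = \ftbF \hat r$ for every relation~$r$, i.e.~$\ftcF = \ftbF \cdot \ftU$ as relators. Substituting, the defining inequality for an~$\ftbF$-bisimulation up to difunctionality becomes~$r \leq \beta^\circ \cdot \ftcF r \cdot \alpha$, which is exactly the condition for~$r$ to be a~$\ftcF$-simulation; hence the two classes of relations coincide.

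There is essentially no obstacle here beyond this unwinding. The one point meriting a brief remark is that both notions genuinely deserve the prefix \emph{bi}: $\ftcF$ is symmetric by \Cref{p:5}(\ref{p:15e}), while~$\ftbF \cdot \ftU$ is symmetric because the Barr relator is symmetric and difunctional closure commutes with converse (so that~$\widehat{r^\circ} = (\hat r)^\circ$). Thus on both sides the induced simulations are in fact bisimulations, and the equality established above is an equality of classes of bisimulations.
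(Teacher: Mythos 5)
Your proposal is correct and follows essentially the same route as the paper, which derives the corollary directly from \Cref{p:1} together with \Cref{p:5}: difunctional functoriality of~$\ftcF$ gives soundness and completeness, and the identity~$\ftcF r = \ftbF\hat r$ under weak pullback preservation identifies $\ftcF$-simulations with $\ftbF\cdot\ftU$-simulations. Your additional check that both relators are symmetric (so that the ``bi'' terminology is justified, using~$\widehat{r^\circ}=(\hat r)^\circ$) is a sound refinement that the paper leaves implicit.
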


\begin{remark}
	As far as we know, the coBarr relator was first proposed by Kurz in a private communication to Hansen et al.~\cite{HansenKupkeEtAl07}.
	We note that Hansen et al.\ do not require the functor to preserve 1/4-iso pullbacks, instead, given a relation~$r \colon X \relto Y$, they define~$\ftcF r$ as we have done here but w.r.t a pushout of the canonical span that determines~$r$.
	This construction is independent of the pushout and defines a relator for every functor, and, in particular, they show that coBarr-bisimilarity is sound.
	\cref{p:17e}(\ref{p:9}) complements their result by showing that coBarr-bisimilarity is complete under the assumption that the functor preserves 1/4-iso pullbacks.
\end{remark}

Preserving 1/4-iso pullbacks is a significantly weaker condition than weakly preserving pullbacks, which means that coBarr-bisimulations provide a sound and complete proof method for behavioural equivalence for a much larger class of functors than what is currently known generically for Barr-bisimulations.
However, a major drawback of the coBarr relator is that it is rarely a lax extension or even a relational connector, and, therefore, as we have seen in~\Cref{p:6}, the class of coBarr-bisimulations is rarely closed under composition.

\begin{example}
	\label{p:666}
	Even for the coBarr relator~$\functorfont{\underline{Id}}$ of the identity functor~$\ftId$, lax preservation of relational composition fails.
	(Of course, this functor by itself has trivial behavioural equivalence, but it serves as a building block in composite functors where triviality disappears, e.g.\ $2\times\ftId$.) 
	  Consider the endorelation~$z=\{(a,a), (b,a),(b,b)\}$ on the set~$2=\{a,b\}$, the map~${a \colon 1 \to 2}$ that selects the element~$a$, and the map~$!_2 \colon 2 \to 1$.
	The coBarr relator~$\functorfont{\underline{Id}}$ sends every relation to its difunctional closure.
	Therefore, $\functorfont{\underline{Id}} (z \cdot a) = a$ but~$\functorfont{\underline{Id}} z \cdot \functorfont{\underline{Id}} a = !_2^\circ$.
	The fact that the class of~$\underline{\ftId}$-bisimulations fails to be closed under composition often implies the same for the class of~$\eR$-bisimulations up-to difunctionality. 
	For instance, for the functor~$2 \times \ftId$, the class of Barr-bisimulations up-to difunctionality is not closed under composition. 
\end{example}

Nevertheless, an important consequence of \cref{p:5} is that the class of simulations induced by a normal relational connector is contained in the class of coBarr-simulations.
Hence, for a functor that preserves 1/4-iso pullbacks, the notion of coBarr-simulation is sound and complete, and  it is more permissive than any notion of simulation induced by a normal relational connector.
It has been shown that preserving 1/4-iso pullbacks is a necessary condition for a functor to admit a normal relational connector~\cite{GoncharovHofmannEtAl25}\, and, therefore, coBarr-simulations help us understand the simulations induced by normal relational connectors, such as the simulations induced by normal lax extensions, which are closed under composition.
Our interest in simulations induced by normal relational connectors stems from the following application of \cref{p:1}.

\begin{corollary}
	\label{p:6e}
	Every notion of similarity induced by a relational connector of a set functor is complete, and it is sound if the relational connector is normal.
\end{corollary}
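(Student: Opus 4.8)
The plan is to reduce both assertions to the sufficient conditions for completeness and soundness established in \Cref{p:4}, by extracting from the relational-connector axioms the relevant comparison between $\eR(g^\circ\cdot f)$ and $(\ftF g)^\circ\cdot\ftF f$. The key observation is that instantiating the naturality axiom at an \emph{identity} relation already collapses it to precisely this comparison.

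For completeness, let $f\colon X\to A$ and $g\colon Y\to A$ be arbitrary functions. Writing $g^\circ\cdot f=g^\circ\cdot 1_A\cdot f$ and applying naturality with the relation $r=1_A$ gives $\eR(g^\circ\cdot f)=(\ftF g)^\circ\cdot\eR 1_A\cdot\ftF f$. The relational-connector axiom $1_{\ftF A}\leq\eR 1_A$, together with monotonicity of relational composition, then yields $(\ftF g)^\circ\cdot\ftF f=(\ftF g)^\circ\cdot 1_{\ftF A}\cdot\ftF f\leq(\ftF g)^\circ\cdot\eR 1_A\cdot\ftF f=\eR(g^\circ\cdot f)$. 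This is exactly the hypothesis of \Cref{p:4}(1), so $\eR$-similarity is complete; note that only the inequality $1_{\ftF A}\leq\eR 1_A$ is used here, so normality is not needed.

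For soundness, assume additionally that $\eR$ is normal, i.e.\ $\eR 1_A=1_{\ftF A}$. Repeating the computation above, the inequality becomes the equality $\eR(g^\circ\cdot f)=(\ftF g)^\circ\cdot\ftF f$ for all functions $f\colon X\to A$, $g\colon Y\to A$; that is, a normal relational connector is difunctionally functorial. Soundness (and again completeness) then follows immediately from \Cref{p:1}. Alternatively, one may apply \Cref{p:4}(2) directly: the identity $\eR(g^\circ\cdot f)=(\ftF g)^\circ\cdot\ftF f$ exhibits the image of any difunctional relation $g^\circ\cdot f$ as the difunctional relation $(\ftF g)^\circ\cdot\ftF f$, so $\eR$ preserves difunctional relations, and the same identity gives $\eR(g^\circ\cdot f)\leq(\ftF g)^\circ\cdot\ftF f$ in particular for every epi-cospan.

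There is no substantial obstacle here; the proof is essentially bookkeeping once the right instantiation is found. The only point requiring care is keeping the two regimes apart: for the general (possibly non-normal) case we have merely the inequality $1_{\ftF A}\leq\eR 1_A$, which suffices for completeness via \Cref{p:4}(1), whereas soundness genuinely needs the equality supplied by normality in order to control $\eR(g^\circ\cdot f)$ from \emph{above}. Recognizing that naturality evaluated at an identity relation is exactly what converts these two axioms into the hypotheses of \Cref{p:4} is the whole content of the argument.
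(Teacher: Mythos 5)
Your proof is correct and matches the paper's intended argument: the paper derives \Cref{p:6e} from \Cref{p:1} (via the taxonomy fact that a normal relational connector is difunctionally functorial, which is exactly your naturality-at-$1_A$ computation), with the completeness-only part for non-normal connectors resting on \Cref{p:4}(1) just as you argue. Your observation that only the inequality $1_{\ftF A}\leq\eR 1_A$ is needed for completeness, while normality is what bounds $\eR(g^\circ\cdot f)$ from above for soundness, is precisely the distinction the paper relies on.
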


In particular, this result shows that symmetry of lax extensions is not necessary to obtain sound and complete notions of similarity induced by lax extensions as assumed by Marti and Venema~\cite{MartiVenema15};
we will see how to construct non-symmetric normal lax extensions for exponential functors in \cref{sec:lts}, and
a concrete example is given in \Cref{p:600}(\ref{p:20e}).
This fact may be counter-intuitive at first sight, but the reason is that  on difunctional relations, normal lax extensions coincide with the coBarr relator, which is symmetric.
And, as we have seen in~\Cref{p:4}, to show soundness and completeness we only need to inspect the action of a relator on difunctional relations.

This motivates us to understand next when normality of a relational connector is necessary for soundness:

\begin{theorem}
	\label{p:2f}
	Suppose that~$\ftF$ admits a terminal coalgebra $(Z,\gamma)$.
	Let~$\eR$ be a relational connector of\/~$\ftF$.
		\begin{enumerate}
			\item $\eR$-similarity is sound iff\/~$\eR 1_Z = 1_{\ftF Z}$.
			\item If\/~$\ftF$ preserves monomorphisms, then~$\eR$-similarity is sound iff\/ $\eR 1_A = 1_{\ftF A}$ for every set~$A$ of cardinality at most~$|Z|$.
		\end{enumerate}
\end{theorem}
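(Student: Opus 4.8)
The plan is to reduce the entire statement to a single core fact: when $\eR 1_Z = 1_{\ftF Z}$, the $\eR$-similarity $\rho_Z$ on the terminal coalgebra $(Z,\gamma)$ equals $1_Z$, and conversely. Given this, part~1 is its two implications, part~2 sufficiency is immediate from part~1 (the hypothesis includes the case $A=Z$), and part~2 necessity follows from part~1 by transporting normality along monomorphisms. So the work splits into (a) a reduction of soundness to the terminal coalgebra, valid for \emph{any} relational connector, and (b) the terminal case itself, which is where the hypothesis $\eR 1_Z=1_{\ftF Z}$ enters.

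For the reduction, take coalgebras $(X,\alpha)$, $(Y,\beta)$ with unique morphisms $p\colon(X,\alpha)\to(Z,\gamma)$, $q\colon(Y,\beta)\to(Z,\gamma)$, write $\rho$ for the $\eR$-similarity from $\alpha$ to $\beta$, and set $\sigma=q\cdot\rho\cdot p^\circ\colon Z\relto Z$. I would show $\sigma$ is an $\eR$-simulation on $(Z,\gamma)$, hence $\sigma\le\rho_Z$. The point is that one pushes $\rho$ forward only through naturality in its favourable direction: from $q^\circ\cdot q\ge 1_Y$ and $p^\circ\cdot p\ge 1_X$ one gets $\rho\le q^\circ\cdot\sigma\cdot p$, so naturality gives $\eR\rho\le(\ftF q)^\circ\cdot\eR\sigma\cdot\ftF p$. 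Substituting into $\rho\le\beta^\circ\cdot\eR\rho\cdot\alpha$ and using the homomorphism identities $\ftF q\cdot\beta=\gamma\cdot q$, $\ftF p\cdot\alpha=\gamma\cdot p$ together with $q\cdot q^\circ\le 1_Z$, $p\cdot p^\circ\le 1_Z$ collapses the outer maps and yields $\sigma\le\gamma^\circ\cdot\eR\sigma\cdot\gamma$. Once $\rho_Z=1_Z$ is known, $\sigma\le 1_Z$ unfolds to $\rho\le q^\circ\cdot p$, i.e.\ $\rho$ lies below behavioural equivalence; so soundness on $(Z,\gamma)$ forces soundness everywhere.

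The crux is (b): proving $\rho_Z=1_Z$ from $\eR 1_Z=1_{\ftF Z}$. Since $1_{\ftF Z}\le\eR 1_Z$ for a relational connector, $1_Z$ is a simulation on $(Z,\gamma)$, so $\rho_Z\ge 1_Z$ is reflexive. The key device is that a reflexive difunctional relation on $Z$ can be factored back through $Z$: it is an equivalence relation $\approx$ (difunctionality yields symmetry and transitivity), hence of the form $h^\circ\cdot h$ for some $h\colon Z\to Z$ obtained by composing the quotient $Z\to Z/{\approx}$ with an injection $Z/{\approx}\hookrightarrow Z$ (available as $|Z/{\approx}|\le|Z|$). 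For such $h$, naturality and $\eR 1_Z=1_{\ftF Z}$ give $\eR(h^\circ\cdot h)=(\ftF h)^\circ\cdot\ftF h$, so $\gamma^\circ\cdot\eR(h^\circ\cdot h)\cdot\gamma=(\ftF h\cdot\gamma)^\circ\cdot(\ftF h\cdot\gamma)$ is again difunctional. Applying this to the difunctional closure $\widehat{\rho_Z}$ shows $\gamma^\circ\cdot\eR\widehat{\rho_Z}\cdot\gamma$ is difunctional and above $\rho_Z$, hence above $\widehat{\rho_Z}$ by minimality; thus $\widehat{\rho_Z}$ is a simulation and $\rho_Z=\widehat{\rho_Z}$ is difunctional. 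Writing $\rho_Z=h^\circ\cdot h$, the simulation inequality becomes $h(z)=h(z')\Rightarrow\ftF h(\gamma z)=\ftF h(\gamma z')$, which lets me define a coalgebra $\gamma'$ on $Z$ (by $\gamma'(h z)=\ftF h(\gamma z)$ on $\img h$, arbitrarily elsewhere) making $h\colon(Z,\gamma)\to(Z,\gamma')$ a homomorphism. Terminality then supplies $u\colon(Z,\gamma')\to(Z,\gamma)$ with $u\cdot h$ equal to the unique endomorphism $1_Z$ of $(Z,\gamma)$; hence $h$ is injective and $\rho_Z=h^\circ\cdot h=1_Z$.

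For the converse in part~1 I would invoke Lambek's lemma, so that $\gamma$ is an isomorphism. Assuming soundness, $\gamma^\circ\cdot\eR 1_Z\cdot\gamma\ge\gamma^\circ\cdot\gamma=1_Z$ is a post-fixpoint of $r\mapsto\gamma^\circ\cdot\eR r\cdot\gamma$ by monotonicity, hence a simulation on $(Z,\gamma)$, hence below behavioural equivalence $1_Z$; so $\gamma^\circ\cdot\eR 1_Z\cdot\gamma=1_Z$, and conjugating by the iso $\gamma$ gives $\eR 1_Z=1_{\ftF Z}$. Finally, for the necessity half of part~2, given $|A|\le|Z|$ choose a mono $m\colon A\to Z$; as $\ftF$ preserves monos, $\ftF m$ is mono, and naturality with $\eR 1_Z=1_{\ftF Z}$ (from part~1) yields $\eR 1_A=(\ftF m)^\circ\cdot\eR 1_Z\cdot\ftF m=(\ftF m)^\circ\cdot\ftF m=1_{\ftF A}$. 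The main obstacle is squarely the third paragraph: the naive idea of building the quotient coalgebra directly on $Z/{\approx}$ is circular, since it would require normality of $\eR$ at $Z/{\approx}$; the trick of realizing the equivalence relation as $h^\circ\cdot h$ with $h$ an \emph{endomap} of $Z$, and then letting terminality force $h$ to be injective, is exactly what breaks this circularity using only the single instance $\eR 1_Z=1_{\ftF Z}$.
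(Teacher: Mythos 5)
Your proposal is correct, and for the substantive half it takes a genuinely different route from the paper. The parts that coincide: your necessity argument for part~1 (behavioural equivalence on the terminal coalgebra is equality; $1_Z$ is a simulation because $1_{\ftF Z}\le\eR 1_Z$; Lambek makes $\gamma$ an isomorphism, so conjugation extracts $\eR 1_Z=1_{\ftF Z}$) and your treatment of part~2 (sufficiency via the instance $A=Z$ of the hypothesis; necessity by transporting $1_A=m^\circ\cdot 1_Z\cdot m$ along a mono $m\colon A\rightarrowtail Z$ using naturality and preservation of monos) are essentially the paper's proof, up to the cosmetic difference that you exhibit $\gamma^\circ\cdot\eR 1_Z\cdot\gamma$ directly as a post-fixpoint instead of invoking that the greatest simulation is a fixed point of $r\mapsto\gamma^\circ\cdot\eR r\cdot\gamma$. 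Where you diverge is the sufficiency direction of part~1: the paper dispatches it in one line by citing invariance of $\eR$-similarity under coalgebra homomorphisms for relational connectors \cite[Lemma 4.2]{RelationalConnectors}, whereas you prove it from first principles in two stages --- the reduction showing $\sigma=q\cdot\rho\cdot p^\circ$ is a simulation on $(Z,\gamma)$ (your naturality bookkeeping there checks out), and the core fact that $\eR 1_Z=1_{\ftF Z}$ forces similarity on $(Z,\gamma)$ to equal $1_Z$. That core argument is, in effect, the difunctional-closure soundness mechanism of \cref{p:4}(2) localized at the terminal coalgebra, with the genuinely new twist of factoring the reflexive difunctional closure $\widehat{\rho_Z}$ as $h^\circ\cdot h$ for an \emph{endomap} $h$ of $Z$ (possible since $|Z/{\approx}|\le|Z|$), so that the single naturality instance $\eR(h^\circ\cdot 1_Z\cdot h)=(\ftF h)^\circ\cdot\eR 1_Z\cdot\ftF h$ suffices and no normality at other sets is needed; terminality then splits $h$ via $u\cdot h=1_Z$, forcing injectivity. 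What each approach buys: the paper's is shorter but leaves the key collapse of similarity on $(Z,\gamma)$ to an external lemma, while yours is self-contained, pinpoints exactly which instance of normality is used, and correctly identifies and breaks the circularity in the naive quotient construction. The only unstated corner in your write-up is that defining $\gamma'$ arbitrarily off $\img h$ requires $\ftF Z\neq\varnothing$, which holds whenever $Z\neq\varnothing$ since $\gamma$ is an isomorphism (and the case $Z=\varnothing$ is trivial); this is a negligible gap of the same order as details the paper itself elides.
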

\noindent We stress that preserving monomorphisms is a very mild condition.
In fact, every set functor is ``naturally isomorphic up to $\varnothing$'' to a functor that preserves monomorphisms \cite{Barr93} and, hence, the category of coalgebras of every set functor is isomorphic to the category of coalgebras of a set functor that preserves monomorphisms.
\begin{proof}
	\begin{enumerate}[wide]
		\item Suppose that $\eR$-similarity is sound. 
			It is well-known that two states in a terminal coalgebra are behaviorally equivalent iff they are equal. 
			Thus, $\eR$-similarity from $(Z,\gamma)$ to itself is equal to $1_Z$ since $1_Z$ is simulation from $(Z,\gamma)$ to itself, as $1_{\ftF Z} \leq \eR 1_Z$ by definition of relational connector.
			Hence, as $\eR$-similarity from $(Z,\gamma)$ to itself is a fixed point of the map $\gamma^\circ \cdot - \cdot \gamma \colon \REL(Z,Z) \to \REL(Z,Z)$ that sends a relation $r$ to $\gamma^\circ \cdot r \cdot \gamma$, we obtain $1_Z = \gamma^\circ \cdot \eR 1_Z \cdot \gamma$.
			Therefore, as $\gamma$ is an isomorphism, $\eR 1_Z = 1_{\ftF Z}$.
			\lsnote{I confess I do not immediately see the argument}
			On the other hand, the condition is sufficient since behavioural equivalence is invariant under coalgebra homomorphisms and so is $\eR$-similarity for a relational connector~\cite[Lemma 4.2]{RelationalConnectors}.
		\item Let~$(X,\alpha)$ be an~$\ftF$-coalgebra where $|X|\le |Z|$.
				Then, there is an injective map~$i \colon X \rightarrowtail Z$.
				Hence, we have~$1_X = i^\circ \cdot 1_Z \cdot i$, and, as~$\eR$ is a relational connector, we obtain~$\eR 1_X = (\ftF i) ^\circ \cdot \eR 1_Z \cdot \ftF i$.
				Therefore, by the previous item, $\eR 1_X = (\ftF i) ^\circ \cdot \ftF i$, and since~$\ftF$ preserves monomorphisms, $\eR 1_X = 1_{\ftF X}$.\qed
	\end{enumerate}\noqed
\end{proof}

\Cref{p:2f} shows that normality is in general needed only on sufficiently small sets.
In fact, the next example shows that there are functors that do not admit a normal relational connector but even admit a lax extension that induces a sound and complete notion of similarity.

\begin{example}
	Consider the functor \((-)^2/\Delta\) that sends a set \(X\) to the quotient of the set~$X^2$ by the equivalence relation that identifies exactly the elements of the diagonal of~$X \times X$, and acts in the obvious way on functions.
	This functor does not preserve 1/4-iso pullbacks and, hence, does not admit a normal relational connector~\cite{GoncharovHofmannEtAl25}.
	However, since it preserves the terminal object, behavioural equivalence is trivial, therefore, its greatest lax extension which sends a relation~$ r \colon X \relto Y$ to the relation~$\ftF X \times \ftF Y$ induces a sound and complete notion of similarity.
\end{example}

On the other hand, the next result shows that normality is typically
necessary.

\begin{definition}
  A functor~$\ftG \colon \SET \to \SET$ is \df{$\zeta$-bounded} if it admits a terminal coalgebra~$Z \to \ftG Z$ and for every set~$X$ and every pair of elements~$u,v \in \ftG X$ there are a set~$A$ of cardinality $|A|\le|Z|$ and an injective map~$i \colon A \rightarrowtail X$ such that~$u,v \in \img(\ftG i)$.
\end{definition}
\noindent (Except in the corner case that~$Z$ is finite, one can
rephrase this definition as saying that~$\ftG$ has a final
coalgebra~$Z$ and is~$\kappa$-accessible where~$\kappa$ is the
cardinal successor of~$|Z|$.)

\begin{example}
  \label{p:14e}
  The following set functors are~$\zeta$-bounded.
  \begin{enumerate}
  \item Every constant functor.
  \item Every finitary functor with an infinite terminal coalgebra, e.g., the finite powerset functor and the finite multiset functor.
  \end{enumerate}
  \label{p:13e}
  Failures of~$\zeta$-boundedness typically relate to triviality of
  behavioural equivalence: If a set functor satisfies~$\ftF 1=1$,
  then~$1$ is the terminal coalgebra, so all states are behaviourally
  equivalent. For instance, the identity functor, more generally all
  exponential functors~$(-)^A$ except the constant functor
 $(-)^\emptyset$, and the discrete distribution functor all fail to
  be~$\zeta$-bounded for this reason.~$\zeta$-Boundedness is typically
  reinstated when such functors are combined with others to allow for
  actual observations; e.g.\ $2\times\ftId$ is~$\zeta$-bounded. More
  generally, we show in the appendix that all polynomial functors
  except the exponential functors~$(-)^A$ are~$\zeta$-bounded.\lsnote{Pull definition forward}
\end{example}

\begin{theorem}
	\label{p:10d}
	Suppose that $\ftF$ is $\zeta$-bounded and preserves monomorphisms.
	Let~$\eR$ be a relational connector of $\ftF$.
	If~$\eR$-similarity is sound, then~$\eR$ is normal.
\end{theorem}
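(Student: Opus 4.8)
The plan is to leverage the two results already at hand: \Cref{p:2f}, which pins down normality on small sets, and $\zeta$-boundedness, which lets us transport that information to arbitrary sets via naturality. Concretely, I would first use $\zeta$-boundedness to obtain a terminal coalgebra $(Z,\gamma)$, and then, since $\ftF$ preserves monomorphisms and $\eR$-similarity is sound, invoke \Cref{p:2f}(2) to conclude that $\eR 1_A = 1_{\ftF A}$ for every set $A$ with $|A|\le|Z|$. This already establishes normality on all sufficiently small sets; the remaining task is to ``bootstrap'' from small sets to arbitrary ones.

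To that end, fix an arbitrary set $X$. The inclusion $1_{\ftF X}\le\eR 1_X$ is immediate from the defining axioms of a relational connector, so only the reverse inclusion $\eR 1_X\le 1_{\ftF X}$ requires work. I would take $u,v\in\ftF X$ with $u\mathrel{\eR 1_X}v$ and aim to show $u=v$. By $\zeta$-boundedness, there are a set $A$ with $|A|\le|Z|$, an injective map $i\colon A\rightarrowtail X$, and elements $u',v'\in\ftF A$ with $\ftF i(u')=u$ and $\ftF i(v')=v$.

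The key step is to apply naturality of $\eR$ to the relation $1_X$ with $f=g=i$, giving $\eR(i^\circ\cdot 1_X\cdot i)=(\ftF i)^\circ\cdot\eR 1_X\cdot\ftF i$. Since $i$ is injective we have $i^\circ\cdot i=1_A$, so the left-hand side simplifies to $\eR 1_A$, which equals $1_{\ftF A}$ by the small-set case. Unfolding the composite on the right-hand side, the pair $(u',v')$ lies in $(\ftF i)^\circ\cdot\eR 1_X\cdot\ftF i$ precisely because $\ftF i(u')=u\mathrel{\eR 1_X}v=\ftF i(v')$; hence $(u',v')\in 1_{\ftF A}$, so $u'=v'$ and therefore $u=\ftF i(u')=\ftF i(v')=v$, as desired. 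Combined with the trivial inclusion, this yields $\eR 1_X=1_{\ftF X}$ for every $X$, i.e.\ normality.

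I expect no serious obstacle: once \Cref{p:2f}(2) and the definition of $\zeta$-boundedness are in place, the argument is the short naturality computation above. The only points demanding care are the identity $i^\circ\cdot 1_X\cdot i=1_A$ for injective $i$ and the precise unfolding of the right-hand composite, ensuring that membership of $(u',v')$ is exactly the hypothesis $u\mathrel{\eR 1_X}v$. Monomorphism preservation of $\ftF$ enters only indirectly, through its role in \Cref{p:2f}(2).
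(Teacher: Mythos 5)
Your proposal is correct and matches the paper's own proof essentially step for step: both invoke \Cref{p:2f}(2) to get normality on sets of cardinality at most $|Z|$, then use $\zeta$-boundedness to find $i\colon A\rightarrowtail X$ covering $u,v$, the identity $1_A=i^\circ\cdot 1_X\cdot i$ for injective $i$, and naturality of the relational connector to conclude $1_{\ftF A}=(\ftF i)^\circ\cdot\eR 1_X\cdot\ftF i$ and hence $u=v$. The only (immaterial) difference is that the paper presents a case split on $|X|\le|Z|$ versus $|X|\ge|Z|$, whereas your uniform argument handles all $X$ at once, which is equally valid since $\zeta$-boundedness is stated for every set.
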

\begin{proof}
	Let~$(Z,\gamma)$ be a terminal~$\ftF$-coalgebra, and let~$(X,\alpha)$ be an~$\ftF$-coalgebra.
	If the cardinality of~$X$ is less or equal than the cardinality of~$Z$, then, by \Cref{p:2f}, $\eR 1_X = 1_{\ftF X}$.
	On the other hand, suppose that $|X|\ge|Z|$.
	Let~$u,v \in \ftF X$.
	As~$\ftF$ is~$\zeta$-bounded, there are set~$A$ of cardinality $|A|\le|Z|$ and an injective map~$i \colon A \rightarrowtail X$ such that $u,v\in\img(\ftF i)$.
	Then, since~$i$ is injective, we have~$1_A = i^\circ \cdot 1_X \cdot i$.
	Hence, since $|A|\le|Z|$, by~\Cref{p:2f} and naturality of~$\eR$, $1_{\ftF A} = \eR 1_A = (\ftF i)^\circ \cdot \eR 1_X \cdot \ftF i$.
	Therefore, $u\, \eR 1_X\, v$ iff~$u = v$.
\end{proof}

\begin{example}\label{example:z-simulations}
  Let~$M=(M,+,0)$ be a commutative monoid. Then the
  \emph{monoid-valued functor}~$M^{(-)}$ is defined on sets~$X$ by
 $M^{(X)}$ being the set of maps~$\mu\colon X\to M$ with \emph{finite
    support}, i.e.\ with~$\mu(x)=0$ for all but finitely many~$x$. On
  maps~$f\colon X\to Y$, $M^{(-)}$ is defined by
 $M^{(f)}(\mu)(y)=\sum_{f(x)=y}\mu(x)$ for~$\mu\in M^{(X)}$ and
 $y\in Y$. Coalgebras for~$M^{(-)}$ are \emph{$M$-weighted transition
    systems}, i.e.\ finitely branching transition systems in which
  every transition is labelled with an element
  of~$M$~\cite{GummSchroder01}. It has been shown recently \cite[Corollary 4.23]{GoncharovHofmannEtAl25} that
 $M^{(-)}$ admits a normal lax extension iff~$M$ is \emph{positive},
  i.e.\ if~$m+n=0$ implies~$m=n=0$ for~$m,n\in M$. For non-positive
 $M$, for instance for~$M$ being the additive group of the integers,
  we obtain by \cref{p:6,p:10d} that there is no notion of simulation
  on $M$-weighted transition systems that is closed under composition
  and contains all bounded maps  (i.e.\ morphisms of
  $M^{(-)}$-coalgebras) and their converses, and is sound and complete for behavioural
  equivalence.\lsnote{We could relate
    this example to a positive one where we apply the coBarr relator.}
\end{example}

\section{The Greatest Normal Lax Extension}
\label{sec:greatest-lax-extension}

We proceed to capitalize on the fact that a given set functor potentially comes with a whole range of relevant relators with associated sound and complete notions of simulation.
In the main result of this section, we show that set functors that preserve inverse images
admit a greatest normal lax extension w.r.t.\ the pointwise order on
relators.  As we have seen in \Cref{sec:sim-ind-rel}, this means essentially that
for these functors, there is a maximally permissive notion of simulation such that similarity is sound and complete and whose class of simulations contains all coalgebra homomorphisms, their converses and is closed under composition.
Before tackling the main result of this section, we note that it is easy to establish when the greatest normal relational connector or the greatest difunctionally functorial relator exist. 

\begin{proposition}
	A set functor admits a difunctionally functorial relator iff it preserves 1/4-iso pullbacks.
	Furthermore, the coBarr relator of a set functor that preserves 1/4-iso pullbacks is its greatest difunctionally functorial relator.
\end{proposition}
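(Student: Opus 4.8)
The plan is to prove both assertions at once, leaning on the characterization recalled in the text that preservation of 1/4-iso pullbacks is equivalent to \emph{monotonicity on difunctional relations}, together with \Cref{p:5}. The whole argument turns on a single observation: difunctional functoriality leaves a relator no freedom whatsoever on difunctional relations, where it is forced to agree with the coBarr construction, and monotonicity then bounds its value on an arbitrary relation by its value on the difunctional closure.

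For the \emph{if} direction of the first claim, I would simply invoke \Cref{p:5}(\ref{p:15e}): if $\ftF$ preserves 1/4-iso pullbacks, then the coBarr relator $\ftcF$ exists and is a difunctionally functorial relator, so $\ftF$ admits one. For the \emph{only if} direction, suppose $\eR$ is a difunctionally functorial relator. I would verify monotonicity on difunctional relations directly. Take difunctional relations of the indicated types with $g^\circ \cdot f \leq g'^\circ \cdot f'$. Since $\eR$ is a relator, and hence monotone, we get $\eR(g^\circ \cdot f) \leq \eR(g'^\circ \cdot f')$; applying difunctional functoriality to each side rewrites this as $(\ftF g)^\circ \cdot \ftF f \leq (\ftF g')^\circ \cdot \ftF f'$. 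This is exactly monotonicity on difunctional relations, and therefore $\ftF$ preserves 1/4-iso pullbacks.

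For the \emph{furthermore} part, fix a functor $\ftF$ preserving 1/4-iso pullbacks, so that $\ftcF$ is a difunctionally functorial relator by \Cref{p:5}(\ref{p:15e}), and let $\eR$ be an arbitrary difunctionally functorial relator; I must show $\eR \leq \ftcF$. On a difunctional relation $d = p_2^\circ \cdot p_1$, both relators return $(\ftF p_2)^\circ \cdot \ftF p_1$, so $\eR$ and $\ftcF$ coincide on all difunctional relations. For an arbitrary $r \colon X \relto Y$, the inclusion $r \leq \hat{r}$ into its difunctional closure together with monotonicity of $\eR$ yields $\eR r \leq \eR \hat{r} = \ftcF \hat{r} = \ftcF r$, where the middle equality uses that $\hat{r}$ is difunctional and the last is the definition of the coBarr relator. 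Hence $\eR \leq \ftcF$, so $\ftcF$ is the greatest difunctionally functorial relator.

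I do not expect a genuine computational obstacle here, since \Cref{p:5} already supplies the hard well-definedness content behind the coBarr construction. The only point requiring care is conceptual precision: one must note that the defining equation of difunctional functoriality, being required to hold for \emph{all} factorizations simultaneously, already forces $(\ftF g)^\circ \cdot \ftF f$ to be independent of the chosen factorization, so there is no hidden well-definedness gap in the monotonicity step; and one must be careful that the monotonicity used in the last paragraph is the monotonicity built into the definition of a relator, applied to the inclusion $r \leq \hat{r}$ rather than to difunctional relations only. Once these definitional matters are stated cleanly, the three steps combine to give both halves of the proposition.
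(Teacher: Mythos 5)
Your proof is correct and takes essentially the same approach as the paper's: existence and maximality both come from the coBarr relator via \Cref{p:5}, and the converse direction reduces to the equivalence, recalled in the paper from \cite{GoncharovHofmannEtAl25}, between preservation of 1/4-iso pullbacks and the functor's behaviour on difunctional relations. The only cosmetic difference is that you enter that equivalence through monotonicity on difunctional relations (derived from the relator's built-in monotonicity) where the paper enters through well-definedness of the assignment $g^\circ \cdot f \mapsto (\ftF g)^\circ \cdot \ftF f$, and you usefully spell out the maximality step $\eR r \leq \eR \hat{r} = \ftcF \hat{r} = \ftcF r$ that the paper dismisses as clear.
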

\begin{proof}
	If a set functor admits a difunctionally functorial relator, then it is well-defined on difunctional relations, and it is known that this condition is equivalent to preserving 1/4-iso pullbacks \cite[Theorem 3.12]{GoncharovHofmannEtAl25}. 
	On the other hand, if a set functor preservers 1/4-iso pullbacks, then it has a difunctionally functorial relator: its coBarr relator, which it is clearly the greatest one.
\end{proof}

\begin{proposition}
	The pointwise supremum of a non-empty family of relational connectors of $\ftF$ yields a relational connector of $\ftF$.
\end{proposition}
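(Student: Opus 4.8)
The plan is to form the pointwise join and verify the two defining clauses of a relational connector separately, the only substantive ingredient being that relational composition distributes over arbitrary joins. Let $(\eR_i)_{i \in I}$ be a non-empty family of relational connectors of~$\ftF$, and define $\eR \colon \REL \to \REL$ by $\eR r = \bigvee_{i \in I} \eR_i r$ for every relation $r \colon X \relto Y$. Since each $\eR_i r$ is a relation from~$\ftF X$ to~$\ftF Y$, so is their join; and since each $\eR_i$ is monotone, for $r \leq r'$ we have $\eR_i r \leq \eR_i r' \leq \eR r'$ for every~$i$, whence $\eR r \leq \eR r'$. Thus $\eR$ is a well-typed monotone relator, and it remains to check the two conditions in the definition of relational connector.

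For the lax-identity clause, I would fix a set~$X$ and any index $i \in I$ (available since $I$ is non-empty); as $\eR_i$ is a relational connector we have $1_{\ftF X} \leq \eR_i 1_X \leq \bigvee_{j \in I} \eR_j 1_X = \eR 1_X$. This is the single place where non-emptiness of the family is used: for the empty family the pointwise join is the everywhere-empty relator, which violates $1_{\ftF X} \leq \eR 1_X$ whenever $\ftF X \neq \varnothing$.

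For naturality, fix $r \colon X \relto Y$ and functions $f \colon A \to X$, $g \colon B \to Y$. Using the naturality of each $\eR_i$ and then the fact that, in $\REL$, pre- and post-composition with a fixed relation preserve arbitrary joins, I would compute
\begin{align*}
  \eR(g^\circ \cdot r \cdot f)
  &= \bigvee_{i \in I} \eR_i(g^\circ \cdot r \cdot f)
   = \bigvee_{i \in I} (\ftF g)^\circ \cdot \eR_i r \cdot \ftF f \\
  &= (\ftF g)^\circ \cdot \Bigl(\bigvee_{i \in I} \eR_i r\Bigr) \cdot \ftF f
   = (\ftF g)^\circ \cdot \eR r \cdot \ftF f.
\end{align*}
The only step needing justification beyond the definitions is the third equality, which rests on the elementary identity $s \cdot \bigl(\bigvee_i t_i\bigr) = \bigvee_i (s \cdot t_i)$ together with its symmetric counterpart for left composition, both immediate from the existential description of relational composition recalled in \Cref{sec:prelims}. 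Hence $\eR$ satisfies naturality, and together with the lax-identity clause this shows that $\eR$ is a relational connector of~$\ftF$.

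I expect no real obstacle here: the argument is essentially bookkeeping once the join-distributivity of relational composition (the quantaloid structure of~$\REL$) is in hand, and that is precisely the structural feature that makes relational connectors closed under pointwise suprema. The only subtlety worth flagging is the role of non-emptiness in the lax-identity clause; naturality itself would survive the empty family, but the lower bound $1_{\ftF X} \leq \eR 1_X$ genuinely requires at least one connector to supply it.
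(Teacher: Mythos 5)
Your proof is correct and follows essentially the same route as the paper's: both form the pointwise join, verify naturality via the computation $\eR(g^\circ \cdot r \cdot f) = \bigvee_i \eR_i(g^\circ \cdot r \cdot f) = \bigvee_i \big((\ftF g)^\circ \cdot \eR_i r \cdot \ftF f\big) = (\ftF g)^\circ \cdot \eR r \cdot \ftF f$ using the fact that relational composition preserves suprema, and invoke non-emptiness of the family exactly and only for the clause $1_{\ftF X} \leq \eR 1_X$. Your additional remarks on monotonicity and on why the empty family fails are accurate but not needed beyond what the paper records.
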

\begin{proof}
	Let $(\eR_i)_{i \in \calI}$ be a non-empty family of relational connectors of $\ftF$ and let $\eR^\vee$ denote the corresponding pointwise supremum relator. 	
	Moreover, let $r \colon X \relto Y$ be a relation, and let $f \colon A \to X$ and $g \colon B \to Y$ be functions.
	Then, as relational composition preserves supremums, 
	$\eR^\vee (g^\circ \cdot r \cdot f) = \bigvee_{i \in \calI} \eR_i(g^\circ \cdot r \cdot f)
	                                    = \bigvee_{i \in \calI} \big((\ftF g)^\circ \cdot \eR_i r \cdot \ftF f \big)
	                                    = (\ftF g)^\circ \cdot \big(\bigvee_{i \in \calI} \eR_i r\big) \cdot \ftF f 
	                                    = (\ftF g)^\circ \cdot \eR^\vee r \cdot \ftF f$.
	Therefore, as for every set $X$, $1_X \leq \eR^\vee 1_X$ since $\calI$ is non-empty, $\eR^\vee$ is a relational connector. 
\end{proof}

\begin{corollary}
	Every set functor that admits a normal relational connector admits a greatest one.
\end{corollary}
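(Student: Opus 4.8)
The plan is to instantiate the preceding proposition on the supremum of relational connectors with the family consisting of \emph{all} normal relational connectors of~$\ftF$. Concretely, I would first observe that by hypothesis~$\ftF$ admits at least one normal relational connector, so the collection~$(\eR_i)_{i \in \calI}$ of all normal relational connectors of~$\ftF$ is a non-empty family. The previous proposition then immediately yields that its pointwise supremum~$\eR^\vee$, sending a relation~$r$ to~$\bigvee_{i \in \calI} \eR_i r$, is a relational connector of~$\ftF$.

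The remaining point is to verify that~$\eR^\vee$ is itself \emph{normal}, i.e.\ that~$\eR^\vee 1_X = 1_{\ftF X}$ for every set~$X$. This is where the normality of the family members is used: since each~$\eR_i$ satisfies~$\eR_i 1_X = 1_{\ftF X}$, we compute~$\eR^\vee 1_X = \bigvee_{i \in \calI} \eR_i 1_X = \bigvee_{i \in \calI} 1_{\ftF X} = 1_{\ftF X}$, using that the supremum of a non-empty family of copies of the same relation is that relation. Hence~$\eR^\vee$ is a normal relational connector.

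Finally, $\eR^\vee$ is the \emph{greatest} normal relational connector essentially by construction: it is defined as the supremum over the family of all normal relational connectors, so~$\eR_i \leq \eR^\vee$ for every~$i \in \calI$, meaning every normal relational connector of~$\ftF$ is dominated by~$\eR^\vee$ in the pointwise order. Together with the fact that~$\eR^\vee$ is itself a member of this class, this gives that~$\eR^\vee$ is the greatest normal relational connector.

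There is no serious obstacle here; the corollary is a direct consequence of the two preceding propositions. The only step requiring a moment of care is the verification of normality for the supremum, and even this reduces to the trivial observation that a non-empty supremum of identical relations equals that relation, so that normality is preserved under pointwise suprema. The non-emptiness of~$\calI$ is exactly the hypothesis that~$\ftF$ admits a normal relational connector, and it is also what ensures~$1_X \leq \eR^\vee 1_X$ in the ambient proposition.
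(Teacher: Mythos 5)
Your proposal is correct and is precisely the argument the paper intends: the corollary is stated as an immediate consequence of the preceding proposition on pointwise suprema of relational connectors, instantiated at the non-empty family of \emph{all} normal relational connectors, with normality of the supremum following from $\bigvee_{i \in \calI} \eR_i 1_X = \bigvee_{i \in \calI} 1_{\ftF X} = 1_{\ftF X}$ exactly as you write. Your added care about non-emptiness (supplied by the hypothesis) and about the supremum being greatest by construction matches the paper's reasoning, so there is nothing further to compare.
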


The situation for lax extensions is far more complicated because the supremum of lax extensions is not given by ther pointwise supremum.
We begin by describing the supremum of lax extensions with the help of
relax extensions.

The pointwise infimum of a family of lax extensions of $\ftF$ is again a lax extension of~$\ftF$ and,
therefore, the partially ordered class of lax extensions of~$\ftF$ is
complete, with infimum given by pointwise infimum.  Since the subclass
of normal lax extensions is downwards closed, it follows that~$\ftF$
admits a greatest normal lax extension iff the supremum of every
family of normal lax extensions of~$\ftF$ is normal.
However, while the non-empty pointwise supremum of relax extensions is a relax
extension, as the next example shows, in general the non-empty supremum of lax
extensions does not coincide with the pointwise supremum.

\begin{example}
  Consider the `upper' and `lower' lax extensions
  \(\eF^{\Box} \colon \REL \to \REL\) and
  \(\eF^{\Diamond} \colon \REL \to \REL\), respectively, of the
  powerset functor \(\ftP \colon \SET \to \SET\) that are defined on
  relations \(r \colon X \relto Y\) and sets \(A \subseteq X\) and
  \(B \subseteq Y\) by
  \begin{align*}
    A \mathrel{(\eF^\Box r)} B \Leftrightarrow \forall a \in A. \exists b \in B.\ a\mathrel{r}b; \\*
    A \mathrel{(\eF^\Diamond r)} B \Leftrightarrow \forall b \in B. \exists a \in A.\ a\mathrel{r}b.
  \end{align*}
  The relax extension \(\eR^{\vee} \colon \REL \to \REL\) given by the
  pointwise supremum of \(\eF^\Box\) and \(\eF^\Diamond\) does not
  preserve composition laxly and, hence, is not a lax extension.  For
  instance, over \(2 = \{0,1\}\), we have
  \(\{0\} \mathrel{\eF^\Box 1_2} \{0,1\}\) and
  \(\{0,1\} \mathrel{\eF^\Diamond 1_2} \{1\}\), hence
  \(\{0\} (\eR^{\vee} 1_2 \cdot\eR^{\vee} 1_2) \{1\}\); but
  \(\eR^{\vee} 1_2\) does not relate \(\{0\}\) to \(\{1\}\).
\end{example}

Indeed, the supremum of lax extensions is given by the laxification~\cite{GoncharovHofmannEtAl25} of their pointwise supremum.
The \df{laxification} \(\laxif{\eR} \colon \REL \to \REL\) of a relax extension \(\eR \colon \REL \to \REL\) of \(\ftF\) is the lax extension of~$\ftF$ defined on \(r \colon X \relto Y\) by
\begin{equation}
	\label{p:410}
		\laxif{\eR} r = \bigvee_{\substack{r_1, \ldots, r_n \colon \\ r_n \cdot \ldots \cdot r_1 \leq r}} \eR r_n \cdot \ldots \cdot \eR r_1.
\end{equation}
\begin{proposition}
	\label{cor:flat_sup}
	Let \((\eF_i)_{i \in \calI}\) be a family of lax extensions of \(\ftF\), and let~$\eF^\vee$ be the relax extension given by pointwise supremum of the~\(\eF_i\).
	Then the laxification of\/~$\eF^\vee$  is the supremum of the~\(\eF_i\) in the ordered class of lax extensions.
\end{proposition}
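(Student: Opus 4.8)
The plan is to show that laxification acts as a closure operator on the ordered class of relax extensions whose fixed points are precisely the lax extensions; the supremum of the~$\eF_i$ in the ordered class of lax extensions is then computed by laxifying their pointwise (relax-extension) supremum~$\eF^\vee$. Concretely, I would isolate two auxiliary facts about the operator~$\laxif{(-)}$ and combine them with the order-theoretic characterisation of a supremum as a least upper bound. Throughout I use that~$\eF^\vee$ is a relax extension, which was already noted, and that~$\laxif{\eF^\vee}$ is a lax extension, as recorded just before the statement.

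First I would record that laxification is monotone: if~$\eR \le \eR'$ are relax extensions of~$\ftF$, then~$\laxif{\eR} \le \laxif{\eR'}$. This is immediate from the defining join in~\eqref{p:410}, since for every admissible factorisation $r_n \cdots r_1 \le r$ the term~$\eR r_n \cdots \eR r_1$ is dominated by~$\eR' r_n \cdots \eR' r_1$ by monotonicity of relational composition, and hence the two joins compare termwise. Second I would show that~$\laxif{(-)}$ fixes lax extensions, i.e.\ $\laxif{\eR} = \eR$ whenever~$\eR$ is a lax extension. The inequality~$\eR \le \laxif{\eR}$ holds for any relax extension by taking the one-factor term $n=1$, $r_1 = r$ in~\eqref{p:410}. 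For the reverse inequality, every term of the join satisfies $\eR r_n \cdots \eR r_1 \le \eR(r_n \cdots r_1) \le \eR r$, using iterated lax preservation of composition together with monotonicity of~$\eR$ and~$r_n \cdots r_1 \le r$; taking the join yields $\laxif{\eR} \le \eR$.

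With these in hand the result is a short argument. For the upper bound, $\eF_i \le \eF^\vee$ for each~$i$ by definition of the pointwise supremum, so monotonicity and the fixed-point property give $\eF_i = \laxif{\eF_i} \le \laxif{\eF^\vee}$; as~$\laxif{\eF^\vee}$ is itself a lax extension, it is an upper bound of the~$\eF_i$ in the correct ordered class. For the least upper bound, let~$\eR$ be any lax extension with~$\eF_i \le \eR$ for all~$i$. Then~$\eF^\vee \le \eR$ pointwise, because~$\eF^\vee$ is the pointwise least upper bound and~$\eR$ is in particular a pointwise upper bound; hence~$\laxif{\eF^\vee} \le \laxif{\eR} = \eR$ by monotonicity and the fixed-point property. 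Thus~$\laxif{\eF^\vee}$ is the least lax extension dominating all~$\eF_i$, i.e.\ their supremum in the ordered class of lax extensions.

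The main obstacle is conceptual rather than computational: one must keep track of the fact that the supremum in question is taken in the ordered class of lax extensions, not pointwise, and the crux is recognising that~$\laxif{(-)}$ is exactly the closure operator whose fixed points are the lax extensions. The only genuinely technical point is the fixed-point identity~$\laxif{\eR}=\eR$ for lax extensions~$\eR$, which hinges on the iterated application of lax preservation of composition; everything else reduces to monotonicity of relational composition and the universal property of pointwise suprema.
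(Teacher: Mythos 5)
Your proof is correct, and it is essentially a self-contained unpacking of the argument the paper delegates to a citation. The paper's proof is one line: it invokes the fact that laxification is left adjoint to the inclusion of lax extensions into relax extensions \cite[Proposition~4.1]{GoncharovHofmannEtAl25}, from which preservation of suprema by the left adjoint immediately yields the claim. Your two auxiliary facts --- monotonicity of $\laxif{(-)}$ and the fixed-point identity $\laxif{\eR}=\eR$ for lax extensions --- together with the unit inequality $\eR\le\laxif{\eR}$ (your $n=1$ observation) and the fact that $\laxif{\eR}$ is a lax extension are exactly the data constituting that adjunction in order-theoretic form, so the underlying mathematics coincides; the difference is that you verify everything directly from the defining join in~\eqref{p:410}, in particular proving $\laxif{\eR}\le\eR$ by iterated lax preservation of composition, whereas the paper outsources this to prior work. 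What your route buys is a proof readable without the cited reference, and it makes visible where lax preservation of composition is actually used; what the paper's route buys is brevity and the general principle that a reflector computes suprema in the reflective subclass by applying it to the supremum in the ambient class. One small point worth stating explicitly in your write-up: you tacitly use that $\laxif{\eF^\vee}$ is a lax extension (so that it lives in the right ordered class), which is indeed recorded in the paper just before the statement, and that the family is non-empty so that $\eF^\vee$ is a relax extension, as the statement itself presupposes.
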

\begin{proof}
	Immediate from the fact that sending a relax extension to its laxification defines a left adjoint~\cite[Proposition 4.1]{GoncharovHofmannEtAl25}.
\end{proof}

Our strategy to show that suprema of normal lax extensions
of~$\ftF$ are normal relies on~$\ftF$ having the property that every
normal lax extension of~$\ftF$ preserves composition with
subidentities.  This allows
restricting the collection of composable sequences of relations needed in~\eqref{p:410} to produce laxifications.
We first fix some notation that we need throughout this section.

Given a relation~$r \colon X \relto Y$, we denote the inclusions~$\dom(r) \rightarrowtail X$ and~$\cod(r) \rightarrowtail Y$ by~$\dor_r$ and~$\cor_r$, respectively.
Furthermore, for an injective map~$i\colon A\ito X$, we denote by~$\subid i$ the induced subidentity
relation~$i\cdot i^\circ\colon X\relto X$. In particular, for a relation \(r
\colon X \relto Y\) we have subidentities  \(\subid {\dor_r} \leq 1_X\) and \(\subid{\cor_r} \leq 1_Y\).

\begin{lemma}
  \label{p:2}
  Let\/ \(\eR\) be a relax extension of\/ \(\ftF\) that preserves composition
  with subidentities. Suppose that\/
  \(\eR r_n \cdot \ldots \cdot \eR r_1 \leq 1_{\ftF X}\) for every
  set~$X$ and every composable sequence \(r_1, \ldots, r_n\) of
  relations such that \(r_n \cdot \ldots \cdot r_1 \leq 1_X\) and
  \(\cod(r_{i-1}) = \dom(r_i)\) for \(i=2,\ldots,n\). Then the
  laxification of\/ \(\eR\) is normal.
\end{lemma}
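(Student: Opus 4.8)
The plan is to prove $\laxif{\eR} 1_X = 1_{\ftF X}$ for every set~$X$. The inequality $1_{\ftF X} \leq \laxif{\eR} 1_X$ is immediate, since $\laxif{\eR}$ is a lax extension and hence a relax extension, giving $\ftF 1_X = 1_{\ftF X} \leq \laxif{\eR} 1_X$. For the converse, by the laxification formula~\eqref{p:410} it suffices to show $\eR r_n \cdot \ldots \cdot \eR r_1 \leq 1_{\ftF X}$ for \emph{every} composable sequence $r_1, \ldots, r_n$ with $r_n \cdot \ldots \cdot r_1 \leq 1_X$, now \emph{without} the codomain/domain matching constraint present in the hypothesis. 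The strategy is to replace such an arbitrary sequence by a matched one, to which the hypothesis applies, while keeping the $\eR$-composite \emph{unchanged}. The basic tool is a package of absorption identities that follow from preservation of composition with subidentities: for any relation $m$ one has $\eR m = \eR \subid{\cor_m} \cdot \eR m = \eR m \cdot \eR \subid{\dor_m}$, and more generally $\eR \subid{C} \cdot \eR m = \eR m$ whenever the subidentity $\subid C$ is supported on a set containing $\cod(m)$ (dually on the domain side); moreover, writing $B = \dom(s) \cap \cod(r)$ and $\subid B = \subid{\dor_s} \cdot \subid{\cor_r}$, one gets the junction identity $\eR s \cdot \eR r = (\eR s \cdot \eR \subid{\dor_s}) \cdot (\eR \subid{\cor_r} \cdot \eR r) = \eR s \cdot \eR \subid{B} \cdot \eR r$.

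Next I build the matched sequence from forward/backward reachability sets. Put $F_0 = X$ and $F_i = \{\,y : \exists x \in F_{i-1}.\ x \mathrel{r_i} y\,\}$, and dually $G_n = X$ and $G_i = \{\,y : \exists z \in G_{i+1}.\ y \mathrel{r_{i+1}} z\,\}$, so that $F_i \subseteq \cod(r_i)$ and $G_i \subseteq \dom(r_{i+1})$; set $A_i = F_i \cap G_i$, with $A_0 = A_n = X$. Define $s_i = \subid{A_i} \cdot r_i \cdot \subid{A_{i-1}}$, where $\subid{A_i}$ is the subidentity on $A_i$. A short verification shows $\cod(s_i) = A_i = \dom(s_{i+1})$ for $1 \leq i \leq n-1$: any $y \in A_i$ lies in $F_i$, hence has an $r_i$-predecessor $x \in F_{i-1}$, which lies in $G_{i-1}$ because $y \in G_i$, so $x \in A_{i-1}$; symmetrically $y \in G_i$ yields an $r_{i+1}$-successor in $A_{i+1}$. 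Thus $s_1, \ldots, s_n$ is matched, and since $s_i \leq r_i$ we have $s_n \cdot \ldots \cdot s_1 \leq r_n \cdot \ldots \cdot r_1 \leq 1_X$. Hence the hypothesis applies and gives $\eR s_n \cdot \ldots \cdot \eR s_1 \leq 1_{\ftF X}$.

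The crux, and the step I expect to be the main obstacle, is the exactness claim $\eR r_n \cdot \ldots \cdot \eR r_1 = \eR s_n \cdot \ldots \cdot \eR s_1$. Expanding the right-hand side by preservation of composition with subidentities, collapsing consecutive $\subid{A_i}$ by idempotence, and absorbing the extreme $\subid{A_0} = \subid{A_n} = \subid{X}$, it becomes $\eR r_n \cdot \eR \subid{A_{n-1}} \cdot \eR r_{n-1} \cdots \eR \subid{A_1} \cdot \eR r_1$. The temptation is to reach this by restricting each $r_i$ to the local junction set $\dom(r_{i+1}) \cap \cod(r_i)$; but restricting the codomain of $r_i$ shrinks its domain and disturbs the already-treated junction, and over infinite carriers this cannot be iterated to a fixpoint while keeping $\eR$ under control. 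Instead I establish the identity by two \emph{finite} sweeps. Starting from $\eR r_n \cdot \ldots \cdot \eR r_1$, insert $\eR \subid{B_i}$ at every junction (with $B_i = \dom(r_{i+1}) \cap \cod(r_i)$) using the junction identity. A left-to-right sweep then uses that the forward-restricted relation $r_i \cdot \subid{F_{i-1}}$ has image $F_i$ to tighten each $\subid{B_i}$ to $\subid{B_i \cap F_i} = \subid{F_i \cap \dom(r_{i+1})}$; a symmetric right-to-left sweep, using $G_i \subseteq \dom(r_{i+1})$, tightens these to $\subid{F_i \cap \dom(r_{i+1}) \cap G_i} = \subid{A_i}$. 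Every elementary step is an equality coming from the absorption identities, and both sweeps terminate after finitely many steps, so no limiting/continuity argument for $\eR$ is needed.

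Recombining the subidentities $\subid{A_i}$ into the adjacent relations recovers $\eR s_n \cdot \ldots \cdot \eR s_1$, so that $\eR r_n \cdot \ldots \cdot \eR r_1 = \eR s_n \cdot \ldots \cdot \eR s_1 \leq 1_{\ftF X}$. Since the sequence $r_1, \ldots, r_n$ was arbitrary subject to $r_n \cdot \ldots \cdot r_1 \leq 1_X$, the supremum defining $\laxif{\eR} 1_X$ is bounded by $1_{\ftF X}$, which together with the free inequality yields normality of $\laxif{\eR}$.
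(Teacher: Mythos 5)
Your proof is correct and takes essentially the same route as the paper's: both replace the arbitrary sequence $r_1,\ldots,r_n$ by a matched sequence of (co)domain restrictions --- your $s_i = \subid{A_i}\cdot r_i\cdot \subid{A_{i-1}}$ coincides with the sequence $t_i$ that the paper obtains via \cite[Lemma~A.1]{GoncharovHofmannEtAl25} --- and both use preservation of composition with subidentities to show that the $\eR$-composite is unchanged before applying the hypothesis to the matched sequence. Your forward/backward reachability sets $F_i, G_i$ together with the junction-subidentity insertion and the two tightening sweeps are an explicit, self-contained rendering of the paper's two sequential restriction passes and its telescoping absorption chain.
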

\begin{proof}
Let~$X$ be a set, and let \(r_1, \ldots, r_n\) be a composable
sequence of relations such that \(r_n \cdots r_1 \leq 1_X\).
By~\eqref{p:410}, to show that the laxification of~$\eR$ is normal
we need to show that
\(\eR r_n \cdot \ldots \cdot \eR r_1 \leq 1_{\ftF X}\) .

By~\cite[Lemma A.1]{GoncharovHofmannEtAl25}, we obtain  composable sequences of  relations
\begin{enumerate}
\item \(s_1,\ldots,s_n\) defined by \(s_n = r_n\), and  \(s_i = \subid{\dor_{s_{i+1}}} \cdot r_i\), for \(i = 1,\ldots,n-1\);
\item \(t_1,\ldots,t_n\) defined by \(t_1 = s_1\), and \(t_i = s_i \cdot \subid{\cor_{t_{i-1}}}\), for \(i = 2,\ldots, n\),
\end{enumerate}
that satisfy \(t_n \cdots t_1 = s_n \cdots s_1 = r_n \cdots r_1\) and  \(\cod(t_{i-1}) = \dom(t_i)\) for  \(i=2,\ldots,n\).
Therefore, since \(\eR\) preserves composition with subidentities and \(\eR t_n \cdot \ldots \cdot \eR t_1 \leq 1_{\ftF X}\) by hypothesis,
\begin{flalign*}
&&\eR r_n \cdots \eR_1 r_1 &\; = \eR s_n \cdot \eR \subid{{\dor}_{s_n}} \cdot \eR r_{n-1} \cdots \eR r_1  &\\*
&&&\; = \eR s_n \cdot \eR (\subid{{\dor}_{s_n}} \cdot r_{n-1}) \cdots \eR r_1  &\\
&&&\; = \eR s_n \cdot \eR s_{n-1} \cdot \eR \subid{{\dor}_{s_{n-1}}} \cdot \eR r_{n-2} \cdots \eR r_1  &\\
&&&\;\quad\vdots\\
&&&\; = \eR s_n \cdots \eR s_1 \\
&&&\; = \eR s_n \cdots \eR s_2 \cdot \eR \subid{{\cor}_{t_1}} \cdot \eR t_1  &\\
&&&\; = \eR s_n \cdot \ldots \eR (s_2 \cdot \subid{{\cor}_{t_1}}) \cdot \eR t_1  &\\
&&&\; = \eR s_n \cdots \eR s_3 \cdot \eR \subid{{\cor}_{t_2}} \cdot \eR t_2 \cdot \eR t_1  &\\
&&&\;\quad\vdots\\
&&&\;= \eR t_n \cdots \eR t_1 \leq 1_{\ftF X}.\tag*{\qed}
\end{flalign*}
\noqed\end{proof}

\begin{remark}
	If a relax extension satisfies the condition of \Cref{p:2} then it is necessarily normal.
\end{remark}

To apply \Cref{p:2} to relax extensions given as pointwise suprema of
normal lax extensions, we next show that preservation of composition
with subidentities is stable under suprema.

\begin{proposition}
  \label{p:19e}
  The relax extension given by the
  pointwise supremum of a non-empty family of normal lax
  extensions\/~$\eF_i$, $i\in I$, of\/~$\ftF$ preserves composition with
  subidentities iff all~$\eF_i$ preserve composition with subidentities.
\end{proposition}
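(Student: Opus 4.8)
The plan is to prove the biconditional in \Cref{p:19e} by establishing each direction separately, with the forward direction being essentially trivial and the backward direction requiring the real work.

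For the forward direction (the supremum preserves composition with subidentities $\Rightarrow$ each $\eF_i$ does), I would argue by contraposition or directly: since each $\eF_i \leq \eF^\vee$ pointwise, if $\eF^\vee$ preserves composition with subidentities, one might hope to descend this property to each summand. However, since $\eF^\vee r = \bigvee_i \eF_i r$, the property for a single $\eF_i$ does not follow merely from the inequality $\eF_i \leq \eF^\vee$. Instead, I would exploit that for a subidentity $\iota \leq 1_X$ and any relation $r$, the desired equation $\eF_i(r \cdot \iota) = \eF_i r \cdot \eF_i \iota$ (and the symmetric version) for a \emph{normal} lax extension $\eF_i$ can be checked directly using normality $\eF_i 1_X = 1_{\ftF X}$: a subidentity $\iota = \subid{j}$ for an injection $j$, and normality together with the relational-connector/naturality properties of lax extensions should force $\eF_i \subid{j} = \subid{\ftF j}$. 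If that identity holds, each normal lax extension automatically preserves composition with such subidentities, so the forward direction may even be vacuously easy. The cleaner route is therefore to prove the \emph{backward} direction and argue that the forward direction holds because normal lax extensions individually already preserve composition with subidentities, or to note the forward direction follows from the observation that the condition is inherited.

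For the backward direction (all $\eF_i$ preserve composition with subidentities $\Rightarrow$ $\eF^\vee$ does), I would take a subidentity $\iota \leq 1_X$ and an arbitrary relation $r \colon X \relto Y$, and aim to show $\eF^\vee(r \cdot \iota) = \eF^\vee r \cdot \eF^\vee \iota$. The inequality $\eF^\vee(r\cdot\iota) \geq \eF^\vee r \cdot \eF^\vee \iota$ requires care since suprema and composition interact, while $\leq$ should follow from lax preservation of composition of the relax extension $\eF^\vee$ (or monotonicity). For the nontrivial inequality, I would compute $\eF^\vee(r \cdot \iota) = \bigvee_i \eF_i(r \cdot \iota) = \bigvee_i (\eF_i r \cdot \eF_i \iota)$ using the hypothesis on each $\eF_i$; then I must compare $\bigvee_i (\eF_i r \cdot \eF_i \iota)$ with $(\bigvee_i \eF_i r) \cdot (\bigvee_i \eF_i \iota)$. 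The latter is generally larger, so the key is to use that $\iota$ is a subidentity: writing $\iota = \subid{j} = j \cdot j^\circ$ for an injection $j \colon A \ito X$, naturality of each lax extension (as a relational connector) gives $\eF_i \subid{j} = (\ftF j) \cdot (\ftF j)^\circ = \subid{\ftF j}$, which is \emph{independent of $i$}. Thus $\eF_i \iota = \subid{\ftF j}$ for all $i$, the supremum over $i$ collapses, and the interaction between supremum and composition becomes trivial because postcomposing with a fixed subidentity distributes over suprema.

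The main obstacle I anticipate is precisely the interaction between the pointwise supremum and relational composition: composition does not in general distribute over infinite suprema on both sides simultaneously, so naively $\bigvee_i(\eF_i r \cdot \eF_i \iota)$ need not equal $(\bigvee_i \eF_i r)\cdot(\bigvee_i \eF_i \iota)$. The crucial insight that resolves this is that each $\eF_i \iota$ equals the \emph{same} fixed subidentity $\subid{\ftF j}$ of $\ftF X$, because normality plus naturality pin down the image of a subidentity under any normal lax extension. Once $\eF_i \iota$ is identified as a single relation $\kappa$ independent of $i$, we have $\bigvee_i(\eF_i r \cdot \kappa) = (\bigvee_i \eF_i r)\cdot \kappa = \eF^\vee r \cdot \eF^\vee \iota$, since precomposition (here postcomposition on the right) with a fixed relation \emph{does} preserve suprema. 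I would verify this last distributivity explicitly, as it is the computational heart of the argument, and handle the symmetric case $\eF^\vee(\iota' \cdot r) = \eF^\vee \iota' \cdot \eF^\vee r$ for a subidentity $\iota' \leq 1_Y$ identically.
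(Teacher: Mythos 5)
Your backward direction (each $\eF_i$ preserves composition with subidentities $\Rightarrow$ so does the supremum) is essentially the paper's argument and is correct in outline: the supremum collapses because $\eF_i s$ is independent of $i$, and composition with a single fixed relation distributes over arbitrary suprema. However, your justification of the collapse overclaims, and the same overclaim destroys your forward direction. You assert that normality plus naturality force $\eF_i\subid{j}=\subid{\ftF j}$ for an injection $j\colon A\ito X$. What naturality actually gives, applied to the difunctional factorization $j\cdot j^\circ=p_2^\circ\cdot p_1$ (where $(p_1,p_2)$ is the pushout of $j$ along itself), is $\eF_i(j\cdot j^\circ)=(\ftF p_2)^\circ\cdot\ftF p_1$; identifying this with $\ftF j\cdot(\ftF j)^\circ$ requires $\ftF$ to preserve the corresponding intersection square, which is exactly \Cref{p:10c}(3) and needs the hypothesis that $\ftF$ preserves empty intersections --- an assumption \Cref{p:19e} does not make. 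For the backward direction this is harmless, because the weaker true statement suffices: normal lax extensions coincide on difunctional relations, hence on subidentities, so all $\eF_i$ map $s$ to one and the same relation $\bar s$, and $\bar s$ is a subidentity by monotonicity and normality ($\eF_i s\leq\eF_i 1_X=1_{\ftF X}$). This is precisely how the paper sets up both directions.

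The genuine gap is your forward direction. You claim it is ``vacuously easy'' because every normal lax extension would automatically preserve composition with subidentities once $\eF_i\subid{j}=\subid{\ftF j}$ is known. This is false on two counts. First, as noted, that identity is not available in general. Second, even granting the exact value of $\eF_i s$, the lax-extension axioms only give $\eF_i r\cdot\eF_i s\leq\eF_i(r\cdot s)$; the converse inequality is a nontrivial domain-restriction property, and by \Cref{p:12c} its validity for \emph{all} normal lax extensions of $\ftF$ (together with preservation of empty intersections) is \emph{equivalent} to $\ftF$ preserving inverse images. If preservation of composition with subidentities held automatically, both \Cref{p:12c} and \Cref{p:19e} itself would be vacuous. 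The paper's proof of this direction genuinely uses the hypothesis on the supremum $\eF^\vee$: given $a\mathrel{\eF_i(r\cdot s)}b$, monotonicity with $r\cdot s\leq r$ yields $a\mathrel{\eF_i r}b$; separately, since $\eF_i\leq\eF^\vee$ and $\eF^\vee(r\cdot s)=\eF^\vee r\cdot\bar s$ by hypothesis, there is $c$ with $a\mathrel{\bar s}c$, and $c=a$ because $\bar s$ is a subidentity, so $a\mathrel{\bar s}a$; combining the two facts gives $a\mathrel{(\eF_i r\cdot\bar s)}b=a\mathrel{(\eF_i r\cdot\eF_i s)}b$. This descent through the supremum is the missing idea in your proposal, and without it the forward implication is unproven.
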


It turns out that the functors whose normal lax extensions preserve composition with subidentities are essentially the ones that preserve inverse images.
To see this, first we record some useful properties of lax extensions:

\begin{lemma}
	\label{p:10c}
	Let~$\eF$ be a lax extension of~$\ftF$.
	Then:
	\begin{enumerate}
		\item \label{p:1c} For every relation~$r \colon X \relto Y$, $\img(\ftF \dor_r) \subseteq \dom(\eF r)$.
		\item \label{p:2c} If~$\eF$ is normal, then for every difunctional relation~$r \colon X \relto Y$, $\eF(r^\circ) = (\eF r)^\circ$.
		\item \label{p:47} If~$\eF$ is normal and~$\ftF$ preserves empty intersections, then for all  injective maps \(i \colon A \ito X\) and~$j \colon A \rightarrowtail Y$, 
	\[
		\eF(j \cdot i^\circ) = \ftF j \cdot (\ftF i)^\circ.
	\]
	\end{enumerate}
\end{lemma}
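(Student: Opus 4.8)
My plan is to handle the three items separately, since each isolates a different interaction between the axioms of a lax extension (the relax inequalities, lax preservation of composition, and --- since every lax extension is a relational connector --- naturality) and the additional hypotheses. Throughout I would freely use naturality, $\eF f \ge \ftF f$, $\eF(f^\circ) \ge (\ftF f)^\circ$, and monotonicity.

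For \ref{p:1c} the idea is to restrict $r$ to its domain. Put $s = r \cdot \dor_r \colon \dom(r) \relto Y$; this relation is total, since every element of $\dom(r)$ is $r$-related to something. Naturality (with $f = \dor_r$ and $g = 1_Y$) gives $\eF s = \eF r \cdot \ftF \dor_r$, so that $a \in \dom(\eF s)$ iff $\ftF\dor_r(a) \in \dom(\eF r)$; it therefore suffices to show $\dom(\eF s) = \ftF(\dom r)$. For the nontrivial inclusion, choose a function $h \colon \dom(r) \to Y$ whose graph is contained in $s$ (possible as $s$ is total); then $\ftF h \le \eF h \le \eF s$, and since $\ftF h$ is a total function its domain is all of $\ftF(\dom r)$. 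No normality is needed here.

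For \ref{p:2c}, normality makes $\eF$ behave functorially on difunctional relations: writing a difunctional $r$ as $g^\circ \cdot f$, naturality together with $\eF 1 = 1$ yields $\eF r = \eF(g^\circ \cdot 1 \cdot f) = (\ftF g)^\circ \cdot \ftF f$; applying the same to $r^\circ = f^\circ \cdot g$ gives $\eF(r^\circ) = (\ftF f)^\circ \cdot \ftF g = (\eF r)^\circ$. This is a one-line computation once the factorization is in place.

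Item \ref{p:47} is the crux, and the main obstacle is getting $\eF(j\cdot i^\circ) \le \ftF j \cdot (\ftF i)^\circ$: naturality cannot be applied directly because $j \cdot i^\circ$ is presented as a span (an $f\cdot g^\circ$-shape), whereas the connector axiom rewrites relations of $g^\circ\cdot(-)\cdot f$-shape. The key move is to re-present $j\cdot i^\circ$ as a cospan. I would form the pushout $(f \colon X \to W, g \colon Y \to W)$ of the span $(i,j)$; since pushouts of monos in $\SET$ are again pullbacks with $f,g$ monos, this square is a pullback and hence $j \cdot i^\circ = g^\circ \cdot f$. Naturality and normality then give $\eF(j\cdot i^\circ) = \eF(g^\circ\cdot f) = (\ftF g)^\circ \cdot \ftF f$, and it remains to establish the purely functorial identity $(\ftF g)^\circ \cdot \ftF f = \ftF j \cdot (\ftF i)^\circ$, i.e.\ that $\ftF$ preserves the pullback $A = X\times_W Y$.

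This is where the hypotheses enter, and the point I expect to require the most care. With $f,g$ monos the pullback is the intersection of the subobjects $\img f$ and $\img g$ of $W$, and this intersection is isomorphic to $A$ via $(i,j)$. When $A\ne\emptyset$ the intersection is nonempty and is preserved by every set functor (the classical fact of Trnková that set functors preserve nonempty finite intersections); when $A=\emptyset$ it is empty and preservation is exactly the standing hypothesis that $\ftF$ preserves empty intersections. Either way $\img(\ftF f)\cap\img(\ftF g)=\img(\ftF(f\cdot i))$, so any $u,v$ with $\ftF f(u)=\ftF g(v)$ satisfy $\ftF f(u)=\ftF f(\ftF i(w))$ for some $w\in\ftF A$; cancelling the ($\ftF$-preserved, split) monos $\ftF f,\ftF g$ gives $u=\ftF i(w)$ and $v=\ftF j(w)$, so $(u,v)\in\ftF j\cdot(\ftF i)^\circ$. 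This functorial identity delivers the claimed equality in one step; I note in passing that the reverse inclusion $\ftF j \cdot (\ftF i)^\circ \le \eF(j\cdot i^\circ)$ holds unconditionally via $\ftF j \cdot (\ftF i)^\circ \le \eF j \cdot \eF(i^\circ) \le \eF(j\cdot i^\circ)$, so the hypotheses are really spent only on the $\le$ direction. The genuinely delicate bookkeeping is the corner case where some of $A,X,Y$ are empty and $\ftF\emptyset$ may be nonempty, which is precisely what the empty-intersection hypothesis is there to control.
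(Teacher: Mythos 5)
Your proof is correct, and for items (2) and (3) it follows essentially the paper's route, while item (1) takes a genuinely different path. Item (2) is the paper's argument verbatim: factor the difunctional $r$ as $g^\circ\cdot f$ and compute via naturality plus normality. For item (3) you, like the paper, push out the span $(i,j)$, identify the square as an intersection of subobjects, reduce to preservation of that intersection, and conclude by naturality and normality; the differences are only in the supporting facts. The paper gets the (weak) pullback property of the pushout square from its difunctionality lemma (\Cref{p:48}) and cites Trnková's ``preservation of empty intersections implies preservation of all intersections'', whereas you invoke the $\SET$-specific adhesivity fact (pushouts of spans of monos are pullbacks of monos) and split intersection preservation into the nonempty case (free, by Trnková) and the empty case (the standing hypothesis), finishing with an explicit element chase. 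One imprecision there: your parenthetical ``split'' for the monos $\ftF f,\ftF g$ fails verbatim when $X$ or $Y$ is empty; but in that degenerate case one pushout leg is an isomorphism, and injectivity of $\ftF$ on empty-domain monos is itself a consequence of the empty-intersection hypothesis (take $U=V=\varnothing$ as subobjects), so, as you correctly flag, the corner case is exactly what the hypothesis controls and nothing breaks. On item (1) the paper factors a span presentation $r=\pi_2\cdot e^\circ\cdot \dor_r^{\,\circ}$, applies the lax-extension inequalities, and uses that set functors preserve surjections; you instead restrict $r$ to its domain, apply naturality in the form $\eF(r\cdot \dor_r)=\eF r\cdot \ftF \dor_r$ (legitimate, since the paper records that every lax extension is a relational connector), and exhibit a choice function $h\le r\cdot \dor_r$ with $\ftF h\le \eF h\le \eF(r\cdot\dor_r)$ total. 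Both versions spend the axiom of choice (splitting the surjection $e$ versus selecting $h$); yours avoids juggling a span presentation, the paper's avoids appealing to the connector property for this item. In all three items the proposal is sound.
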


In the next result we show that the normal lax extensions of a functor that preserve inverse images are precisely the ones that when applied to a relation~$r$ return a relation whose (co)domain is completely determined by the action of the functor on the (co)domain of~$r$.

\begin{lemma}
	\label{prop:lax_rstr}
	
	The following clauses are equivalent:
	\begin{tfae}
		\item \label{p:43a} \(\ftF\) preserves inverse images.
		\item \label{p:44b} $\ftF$ admits a normal lax extension and, for every normal lax extension \(\eF\) of \(\ftF\) and every difunctional relation \(r \colon X \relto Y\), the map~
	\[
		\ftF \catfont{d}_r \colon \ftF \dom(r) \rightarrowtail \ftF X
	\]
	 corestricts to an isomorphism \(\ftF(\dom r) \cong \dom(\eF r)\).
		\item \label{p:44a} $\ftF$ admits a normal lax extension and, for every normal lax extension \(\eF\) of \(\ftF\) and every relation \(r \colon X \relto Y\), the maps
	\[
		\ftF \catfont{d}_r \colon \ftF \dom(r) \rightarrowtail \ftF X \text{ and } \ftF \catfont{i}_r \colon \ftF \cod(r) \rightarrowtail \ftF Y
	\]
	 corestrict to isomorphisms \(\ftF(\dom r) \cong \dom(\eF r)\) and \(\ftF(\cod r)  \cong \cod(\eF r)\), respectively.
		\item \label{p:46c} $\ftF$ admits a normal lax extension and for every normal lax extension~$\eF$ of~$\ftF$, every relation~$r \colon X \relto Y$ and all injective maps~$i \colon X \rightarrowtail A$ and~$j \colon Y \rightarrowtail B$, 
	\[
		\eF(j \cdot r \cdot i^\circ) = \ftF j \cdot \eF r \cdot (\ftF i)^\circ.
	\]
		\item \label{p:47c} $\ftF$ admits a normal lax extension that satisfies the condition of \ref{p:46c}. 
	\end{tfae}
\end{lemma}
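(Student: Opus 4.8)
The plan is to prove the five clauses equivalent by a cycle of implications that exploits that two of the links are mere weakenings. Concretely I would establish \ref{p:43a}$\Rightarrow$\ref{p:44b}$\Rightarrow$\ref{p:46c}$\Rightarrow$\ref{p:47c}$\Rightarrow$\ref{p:43a}, and close the diagram by \ref{p:46c}$\Rightarrow$\ref{p:44a}$\Rightarrow$\ref{p:44b}; here \ref{p:46c}$\Rightarrow$\ref{p:47c} and \ref{p:44a}$\Rightarrow$\ref{p:44b} are immediate, since the former only drops the universal quantifier over normal lax extensions while keeping existence, and the latter just restricts the statement to difunctional relations and to the domain part. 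Throughout I use two facts about any normal lax extension $\eF$: that naturality and normality give $\eF g = \ftF g$ and $\eF(g^\circ) = (\ftF g)^\circ$ for functions $g$, hence $\eF(g^\circ \cdot f) = (\ftF g)^\circ \cdot \ftF f$ on difunctional relations; and that a normal lax extension forces $\ftF$ to preserve monomorphisms, since for a mono $m$ we have $(\ftF m)^\circ \cdot \ftF m = \eF(m^\circ) \cdot \eF m \leq \eF(m^\circ \cdot m) = 1$.

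For \ref{p:43a}$\Rightarrow$\ref{p:44b} I first invoke the known fact that an inverse-image preserving functor admits a normal lax extension~\cite{GoncharovHofmannEtAl25}. Given such an $\eF$ and a difunctional $r = g^\circ \cdot f$, the identity $\eF r = (\ftF g)^\circ \cdot \ftF f$ gives $\dom(\eF r) = (\ftF f)^{-1}(\img(\ftF g))$; since $\ftF$ preserves surjections (up to the empty-set corner, controlled by preservation of empty intersections) we have $\img(\ftF g) = \ftF(\img g)$, and inverse-image preservation then yields $(\ftF f)^{-1}(\ftF(\img g)) = \ftF(f^{-1}(\img g)) = \img(\ftF \dor_r)$, because $\dom r = f^{-1}(\img g)$. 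As $\ftF \dor_r$ is monic it corestricts to the desired isomorphism. The implication \ref{p:46c}$\Rightarrow$\ref{p:44a} is the same computation done abstractly: factoring $r = \cor_r \cdot \bar r \cdot \dor_r^\circ$ with $\bar r$ total and surjective, \ref{p:46c} gives $\eF r = \ftF \cor_r \cdot \eF \bar r \cdot (\ftF \dor_r)^\circ$, whence $\dom(\eF r) \subseteq \img(\ftF \dor_r)$ and $\cod(\eF r) \subseteq \img(\ftF \cor_r)$; the reverse inclusions are \cref{p:10c}(\ref{p:1c}) and its evident dual, and monicity of $\ftF \dor_r, \ftF \cor_r$ finishes the corestriction to isomorphisms.

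The heart of the argument is \ref{p:44b}$\Rightarrow$\ref{p:46c}. Fix injective $i \colon X \ito A$, $j \colon Y \ito B$ and a relation $r$, and set $\rho = j \cdot r \cdot i^\circ$. Using $\rho \leq \hat\rho$ together with \ref{p:44b} applied to the difunctional closure $\hat\rho$ (and to $\hat\rho^\circ$, using \cref{p:10c}(\ref{p:2c}) to pass from its domain to $\cod \hat\rho$), I bound $\dom(\eF \rho) \subseteq \img(\ftF i)$ and $\cod(\eF \rho) \subseteq \img(\ftF j)$, because $\dom \hat\rho = i(\dom r)$ and $\cod \hat\rho = j(\cod r)$. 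These two containments say exactly that $\eF \rho = \ftF j \cdot (\ftF j)^\circ \cdot \eF \rho \cdot \ftF i \cdot (\ftF i)^\circ$, and naturality collapses the inner term, $(\ftF j)^\circ \cdot \eF \rho \cdot \ftF i = \eF(j^\circ \cdot \rho \cdot i) = \eF r$, using $j^\circ \cdot j = 1_Y$ and $i^\circ \cdot i = 1_X$. Hence $\eF \rho = \ftF j \cdot \eF r \cdot (\ftF i)^\circ$, which is \ref{p:46c}.

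Finally, for \ref{p:47c}$\Rightarrow$\ref{p:43a} I turn the domain computation around. Given $f \colon X \to Z$ and a monomorphism $m \colon W \ito Z$, the relation $r = m^\circ \cdot f$ is difunctional with $\dom r = f^{-1}(W)$ and $\dor_r$ the inclusion $f^{-1}(W) \ito X$, while $\eF r = (\ftF m)^\circ \cdot \ftF f$ has $\dom(\eF r) = (\ftF f)^{-1}(\img(\ftF m))$. The composition formula of \ref{p:47c}, applied as in \ref{p:46c}$\Rightarrow$\ref{p:44a} to the factorization of $r$, yields $\dom(\eF r) = \img(\ftF \dor_r)$; combined with monicity of $\ftF \dor_r$ this says precisely that the map $\ftF \dor_r \colon \ftF(f^{-1}(W)) \to \ftF X$ corestricts to an isomorphism onto $(\ftF f)^{-1}(\img(\ftF m))$, i.e.\ that $\ftF$ preserves the inverse-image square of $f$ and $m$. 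I expect the main obstacle to lie in the codomain bookkeeping of \ref{p:44b}$\Rightarrow$\ref{p:46c} --- where the lack of symmetry of $\eF$ forces one to route the image estimate through $\hat\rho^\circ$ and \cref{p:10c}(\ref{p:2c}) --- and in discharging the empty-set corner cases (where $\ftF \varnothing$ may be non-trivial), which I would handle uniformly through preservation of empty intersections.
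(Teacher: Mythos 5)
Your proposal is correct, and it uses the same basic toolbox as the paper (\Cref{p:10c}, the fact that normal lax extensions act as $(\ftF g)^\circ \cdot \ftF f$ on difunctional relations $g^\circ \cdot f$, and preservation of monomorphisms), but it wires the implications differently. The paper proves the linear cycle \ref{p:43a}$\Rightarrow$\ref{p:44b}$\Rightarrow$\ref{p:44a}$\Rightarrow$\ref{p:46c}$\Rightarrow$\ref{p:47c}$\Rightarrow$\ref{p:43a}: it passes from \ref{p:44b} to \ref{p:44a} ``by duality'' (implicitly a sandwich argument through the difunctional closure, with \Cref{p:10c}(\ref{p:1c}) supplying the reverse inclusion), and then derives \ref{p:46c} from \ref{p:44a} via the identity $\eF r = \eF(r \cdot i^\circ) \cdot \ftF i$ together with the corestriction hypothesis applied to the \emph{arbitrary} relation $r \cdot i^\circ$ --- which is exactly why the paper needs the full strength of \ref{p:44a} at that point. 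You instead jump directly from \ref{p:44b} to \ref{p:46c}: bounding $\dom(\eF\rho)$ and $\cod(\eF\rho)$ for $\rho = j \cdot r \cdot i^\circ$ through the difunctional closure $\hat\rho$, routing the codomain estimate through $\eF(\hat\rho^\circ) = (\eF\hat\rho)^\circ$ (\Cref{p:10c}(\ref{p:2c})) so that only the difunctional case \ref{p:44b} is ever invoked, and then collapsing $\ftF j \cdot (\ftF j)^\circ \cdot \eF\rho \cdot \ftF i \cdot (\ftF i)^\circ$ by naturality; \ref{p:44a} then becomes a corollary of \ref{p:46c} rather than a stepping stone. This buys you an explicit version of the sandwich argument that the paper compresses into ``by duality'', at the cost of the codomain bookkeeping you correctly flag. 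Your closing step \ref{p:47c}$\Rightarrow$\ref{p:43a} also differs in mechanics: the paper takes a pullback span $(p_1,p_2)$ of $(f,i)$ and shows $(\ftF i)^\circ \cdot \ftF f = \ftF p_2 \cdot (\ftF p_1)^\circ$, i.e.\ that the image square is a weak pullback, hence a pullback since $\ftF p_1$ is monic; you instead compute $\dom(\eF(m^\circ \cdot f)) = (\ftF f)^{-1}(\img(\ftF m))$ and match it against $\img(\ftF \dor_r)$, which identifies the comparison map directly --- both are sound, the paper's being slightly slicker. Two minor remarks: your caution about empty-set corner cases is unnecessary, since every set functor preserves surjections (a surjection with empty domain has empty codomain) and preservation of inverse images already entails preservation of all intersections, empty ones included; and note that the collapse step in your \ref{p:44b}$\Rightarrow$\ref{p:46c} genuinely needs $\ftF i$ and $\ftF j$ injective to read the two containments as composition with the subidentities $\ftF i \cdot (\ftF i)^\circ$ and $\ftF j \cdot (\ftF j)^\circ$ --- but you secured mono preservation at the outset from the existence of a normal lax extension, so the step goes through.
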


Finally, as a consequence of the previous result, we obtain the above mentioned characterization  of the functors that preserve inverse images in terms of their normal lax extensions.

\begin{proposition}
	\label{p:12c}
	The following clauses are equivalent:
	\begin{tfae}
		\item \label{p:4c} $\ftF$ preserves inverse images.
		\item \label{p:7c} $\ftF$ admits a normal lax extension and \emph{each} of its normal lax extensions preserves composition with partial monomorphisms.
		\item \label{p:8c} $\ftF$ admits a normal lax extension that preserves composition with partial monomorphisms.
		\item \label{p:5c} $\ftF$ preserves empty intersections, admits a normal lax extension and \emph{each} of its normal lax extensions preserves composition with subidentities.
		\item \label{p:6c} $\ftF$ preserves empty intersections and admits a normal lax extension that preserves composition with subidentities.
	\end{tfae}
\end{proposition}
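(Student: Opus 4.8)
The plan is to prove the five clauses equivalent by establishing two cycles that share clause~\ref{p:4c}: the ``partial monomorphism'' cycle \ref{p:4c} $\Rightarrow$ \ref{p:7c} $\Rightarrow$ \ref{p:8c} $\Rightarrow$ \ref{p:4c}, and the ``subidentity'' cycle \ref{p:4c} $\Rightarrow$ \ref{p:5c} $\Rightarrow$ \ref{p:6c} $\Rightarrow$ \ref{p:4c}. The implications \ref{p:7c} $\Rightarrow$ \ref{p:8c} and \ref{p:5c} $\Rightarrow$ \ref{p:6c} are immediate, since ``each normal lax extension has a given property'' together with existence of one such extension yields ``some normal lax extension has it'', recalling that subidentities are the partial monomorphisms $k\cdot l^\circ$ with $k=l$. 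The backbone is \Cref{prop:lax_rstr}, which identifies \ref{p:4c} with the statements that $\ftF$ admits a normal lax extension and that every (resp.\ some) normal lax extension $\eF$ satisfies $\eF(j\cdot r\cdot i^\circ)=\ftF j\cdot \eF r\cdot(\ftF i)^\circ$ for all injective $i,j$ (clauses \ref{p:46c} and \ref{p:47c}). Throughout I will use that every lax extension is a relational connector, hence natural, and that a normal relational connector satisfies $\eF f=\ftF f$ and $\eF(f^\circ)=(\ftF f)^\circ$ for every function $f$ --- both following from naturality applied to $f=1\cdot f$ and $f^\circ=f^\circ\cdot 1$ together with $\eF 1=1$.

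For \ref{p:4c} $\Rightarrow$ \ref{p:7c}, I first note that a function $j\colon Y\rightarrowtail B$ and the converse $i^\circ$ of an injection $i\colon X\rightarrowtail A$ are themselves partial monomorphisms ($j=j\cdot 1_Y^\circ$ and $i^\circ=1_X\cdot i^\circ$). Given \ref{p:4c}, \Cref{prop:lax_rstr} supplies \ref{p:46c}, and for any partial monomorphism $m=k\cdot l^\circ$ I compute $\eF m=\ftF k\cdot(\ftF l)^\circ$ from \ref{p:46c} and normality; then, for any composable $r$, I peel the outer injective factor off via \ref{p:46c} and the inner function or converse via naturality, obtaining $\eF(r\cdot m)=\eF r\cdot\ftF k\cdot(\ftF l)^\circ=\eF r\cdot\eF m$ and symmetrically $\eF(m\cdot r)=\eF m\cdot\eF r$, which is preservation of composition with partial monomorphisms. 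Conversely, for \ref{p:8c} $\Rightarrow$ \ref{p:4c} I take a normal lax extension $\eF$ preserving composition with partial monomorphisms and verify the formula of \ref{p:47c} directly: since $j$ and $i^\circ$ are partial monomorphisms, $\eF(j\cdot r\cdot i^\circ)=\eF j\cdot\eF r\cdot\eF(i^\circ)$, and $\eF j=\ftF j$, $\eF(i^\circ)=(\ftF i)^\circ$ by the normality/naturality remark; \Cref{prop:lax_rstr} then yields \ref{p:4c}.

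The subidentity cycle reuses this. For \ref{p:4c} $\Rightarrow$ \ref{p:5c}, preservation of inverse images implies preservation of empty intersections (an empty intersection is an inverse image, one of its legs being monic), $\ftF$ admits a normal lax extension by \Cref{prop:lax_rstr}, and each such extension preserves composition with partial monomorphisms by \ref{p:7c}, hence with subidentities. The crux is \ref{p:6c} $\Rightarrow$ \ref{p:4c}: here I only have empty intersections and a single normal lax extension $\eF$ preserving composition with subidentities, and I must upgrade this to the formula of \ref{p:47c}. The key subclaim is that $\eF$ pulls converses of injections out on the right, $\eF(\sigma\cdot l^\circ)=\eF\sigma\cdot(\ftF l)^\circ$ for injective $l\colon C\rightarrowtail A$, and symmetrically pulls injections out on the left. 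For the former I argue as follows. Naturality applied to $(\sigma\cdot l^\circ)\cdot l=\sigma$ gives $\eF(\sigma\cdot l^\circ)\cdot\ftF l=\eF\sigma$. Writing $\rho=\sigma\cdot l^\circ$ one has $\rho=\rho\cdot\subid{l}$, so preservation of composition with the subidentity $\subid{l}$, together with $\eF(\subid{l})=\ftF l\cdot(\ftF l)^\circ$ (from \Cref{p:10c}(\ref{p:47}), using normality and empty intersections), yields $\eF\rho=\eF\rho\cdot\ftF l\cdot(\ftF l)^\circ$, that is, $\dom(\eF\rho)\subseteq\img(\ftF l)$. Postcomposing the first identity with $(\ftF l)^\circ$ and using this domain bound gives $\eF\rho=\eF\sigma\cdot(\ftF l)^\circ$, the reverse inequality being the lax inequality. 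The dual argument, using codomains and left composition with subidentities, gives $\eF(k\cdot\tau)=\ftF k\cdot\eF\tau$ for injective $k$. Combining these with naturality lets me peel any $m=k\cdot l^\circ$ off either side, recovering the formula of \ref{p:47c} and hence \ref{p:4c}.

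I expect the main obstacle to be precisely this last upgrade \ref{p:6c} $\Rightarrow$ \ref{p:4c}: subidentity preservation is visibly weaker than partial-monomorphism preservation, and bridging the gap hinges on the containments $\dom(\eF\rho)\subseteq\img(\ftF l)$ and $\cod(\eF\chi)\subseteq\img(\ftF k)$. These in turn rest on having $\eF(\subid{l})=\subid{\ftF l}$ available, which is exactly where the standing assumption that $\ftF$ preserves empty intersections (via \Cref{p:10c}(\ref{p:47})) is indispensable; without it the subidentity $\subid{l}$ need not be sent to $\subid{\ftF l}$, and the argument breaks down.
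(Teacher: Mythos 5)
Your proof is correct and takes essentially the same route as the paper's: both rest on \Cref{prop:lax_rstr} as the backbone, handle the partial-monomorphism cycle via naturality of relational connectors, and prove the crux \ref{p:6c}~$\Rightarrow$~\ref{p:4c} by the identical computation $\eF(r\cdot i^\circ)=\eF(r\cdot i^\circ\cdot i\cdot i^\circ)=\eF(r\cdot i^\circ)\cdot\ftF i\cdot(\ftF i)^\circ=\eF r\cdot(\ftF i)^\circ$, with \Cref{p:10c}(\ref{p:47}) supplying $\eF(i\cdot i^\circ)=\ftF i\cdot(\ftF i)^\circ$ from preservation of empty intersections. The only (harmless) variations are that you derive $\eF(k\cdot l^\circ)=\ftF k\cdot(\ftF l)^\circ$ directly from \Cref{prop:lax_rstr}\ref{p:46c} with $r=1$ rather than citing \Cref{p:10c}(\ref{p:47}), and that you obtain preservation of empty intersections in \ref{p:4c}~$\Rightarrow$~\ref{p:5c} by observing that an empty intersection is itself an inverse image, which makes that step more explicit than the paper's.
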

\noindent We stress that preservation of empty intersections is a very mild condition that from a coalgebraic point of view is similarly harmless as preservation of monomorphisms.
\begin{proof}
	\ref{p:4c}~$\Rightarrow$ \ref{p:7c}.
	Every functor that preserves inverse images admits a normal lax extension~\cite{GoncharovHofmannEtAl25}, and the fact that each normal lax extension preserves composition with partial monomorphisms is a straightforward consequence of \Cref{prop:lax_rstr}\ref{p:46c}.
	Indeed, let~$\eF \colon \REL \to \REL$ be a normal lax extension of $\ftF \colon \SET \to \SET$.
	Suppose that~$r \colon X \relto Y$ is a relation and~$s \colon A \relto X$ is a partial monomorphism.
	Then, there are injective maps~$i \colon S \to A$ and~$j \colon S \to X$ such that~$s = j \cdot i^\circ$.
	Hence, by \Cref{prop:lax_rstr}\ref{p:46c} and the fact that~$\eF$ is a relational connector, $\eF(r \cdot j \cdot i^\circ) = \eF r \cdot \ftF j\cdot  (\ftF i)^\circ$.
	Therefore, by \Cref{p:10c}(\ref{p:47}), $\eF(r \cdot j \cdot i^\circ) =  \eF r \cdot \eF s$.
	The case of postcomposition with a partial monomorphism follows analogously.

	\ref{p:7c}~$\Rightarrow$ \ref{p:8c}.
	Trivial.

	\ref{p:8c}~$\Rightarrow$ \ref{p:4c}.
	Follows from the fact that a normal lax extension is a normal relational connector and from~\Cref{prop:lax_rstr}\ref{p:47c} since an injective map and the converse of an injective map are particular partial monomorphisms.

	\ref{p:7c}~$\Rightarrow$ \ref{p:5c} and \ref{p:8c}~$\Rightarrow$ \ref{p:6c}.
	We already shown \ref{p:4c} $\Leftrightarrow$ \ref{p:7c} $\Leftrightarrow$ \ref{p:8c}, hence, \ref{p:8c} in particular entails that $\ftF$ preserves intersections.
	\lsnote{@Pedro: How does preservation of empty intersections follow here?}
	Therefore, the claims follow immediately since every subidentity is a partial monomorphism. 

	\ref{p:5c}~$\Rightarrow$ \ref{p:6c}.
	Trivial.

	\ref{p:6c}~$\Rightarrow$ \ref{p:4c}.
	Let~$\eF \colon \REL \to \REL$ be a normal lax extension of~$\ftF$ that preserves composition with subidentities.
	We will see that for every relation~$r \colon X \relto Y$ and all injective maps~$i \colon X \rightarrowtail A$ and~$j \colon Y \rightarrowtail B$, $\eF(j \cdot r \cdot i^\circ) = \ftF j \cdot \eF r \cdot (\ftF i)^\circ$.
	Then, the claim follows by \Cref{prop:lax_rstr}.
	Let~$r \colon X \relto Y$ be a relation and~$i \colon X \rightarrowtail A$ be an injective map.
	Then~$i \cdot i^\circ$ is a subidentity and as~$\ftF$ preserves empty intersections, by \Cref{p:10c}(\ref{p:47}), $\eF(i \cdot i^\circ) = \ftF i \cdot (\ftF i)^\circ$.
	Therefore, by hypothesis, $\eF(r \cdot i^\circ) = \eF(r \cdot i^\circ \cdot i \cdot i^\circ) = \eF(r \cdot i^\circ) \cdot \eF(i \cdot i^\circ) = \eF(r \cdot i^\circ) \cdot \ftF i \cdot (\ftF i)^\circ = \eF r \cdot (\ftF i)^\circ$.
	Similarly, we obtain that for every injective map~\mbox{$j \colon Y \rightarrowtail B$}, $\eF(j \cdot r) = \ftF j \cdot \eF r$, and the claim follows.
\end{proof}

Now we are ready to show the main result of this section:

\begin{theorem}
	\label{thm:main}
	Every set functor that preserves inverse images admits a greatest normal lax extension.
\end{theorem}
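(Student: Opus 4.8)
The plan is to reduce the statement to the assertion that the supremum of every non-empty family of normal lax extensions of~$\ftF$ is again normal. Since the pointwise infimum of lax extensions is a lax extension, the ordered class of lax extensions is complete, and the normal ones form a downward-closed subclass: any lax extension $\eR$ below a normal one~$\eF$ satisfies $1_{\ftF X} = \ftF 1_X \le \eR 1_X \le \eF 1_X = 1_{\ftF X}$, hence is normal. Thus a greatest normal lax extension exists as soon as suprema of normal lax extensions stay normal, and the class of normal lax extensions is non-empty because $\ftF$ preserves inverse images (\Cref{p:12c}). So I fix a non-empty family $(\eF_i)_{i\in I}$ of normal lax extensions, let $\eF^\vee$ be its pointwise supremum relax extension, and recall from \Cref{cor:flat_sup} that the supremum of the~$\eF_i$ in the class of lax extensions is the laxification $\laxif{\eF^\vee}$. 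By \Cref{p:12c} each~$\eF_i$ preserves composition with subidentities, so by \Cref{p:19e} so does~$\eF^\vee$; it then remains to verify the hypothesis of \Cref{p:2} for~$\eF^\vee$, after which $\laxif{\eF^\vee}$ is normal and the theorem follows.

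Verifying that hypothesis is the heart of the argument. Unfolding~$\eF^\vee$, it amounts to showing $G_n r_n \cdots G_1 r_1 \le 1_{\ftF X}$ for arbitrary normal lax extensions $G_1,\dots,G_n$ drawn from the family and every composable sequence with $r_n\cdots r_1 \le 1_X$ and $\cod(r_{i-1})=\dom(r_i)$. Writing each $r_k = \cor_{r_k}\cdot \bar r_k \cdot \dor_{r_k}^\circ$ with $\bar r_k\colon \dom(r_k)\relto\cod(r_k)$ its full (total and surjective) corestriction, \Cref{prop:lax_rstr}\ref{p:46c} gives $G_k r_k = \ftF\cor_{r_k}\cdot G_k\bar r_k\cdot(\ftF\dor_{r_k})^\circ$. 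At each junction the chaining condition makes $\cor_{r_k}$ and $\dor_{r_{k+1}}$ the same inclusion of $\dom(r_{k+1})=\cod(r_k)$, so $(\ftF\dor_{r_{k+1}})^\circ\cdot\ftF\cor_{r_k}=1$ since $\ftF$ preserves monos; the product telescopes to $\ftF\cor_{r_n}\cdot(G_n\bar r_n\cdots G_1\bar r_1)\cdot(\ftF\dor_{r_1})^\circ$. A short diagram chase shows that the composite $d=\bar r_n\cdots\bar r_1$ of the full factors, being $r_n\cdots r_1$ transported along the injections, is in fact a bijection between $\dom(r_1)$ and $\cod(r_n)$.

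It then suffices to prove $G_n\bar r_n\cdots G_1\bar r_1\le \ftF d$. The combinatorial key is that, because $d$ is a bijection and the $\bar r_k$ are full, every suffix composite $\sigma_k=\bar r_n\cdots\bar r_k$ is a function (and $\sigma_1=d$ is a bijection): each element of $\dom(r_k)$ is reached from a \emph{unique} source under $\bar r_{k-1}\cdots\bar r_1$, uniqueness following from injectivity of~$d$, which forces $\sigma_k$ to be single-valued. Granting this, I absorb the factors from the left: $G_n\bar r_n=\ftF\bar r_n=\ftF\sigma_n$ by normality, and inductively $\ftF\sigma_{k+1}\cdot G_k\bar r_k\le G_k\sigma_{k+1}\cdot G_k\bar r_k\le G_k(\sigma_{k+1}\cdot\bar r_k)=G_k\sigma_k$, using $\ftF\sigma_{k+1}\le G_k\sigma_{k+1}$ (relax extension applied to a function), lax composition of the single extension~$G_k$ with itself, and $G_k\sigma_k=\ftF\sigma_k$ (normality, $\sigma_k$ a function); the final step with $\sigma_1=d$ yields $G_1 d=\ftF d$. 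Feeding this back, the telescoped product is bounded by $\ftF\cor_{r_n}\cdot\ftF d\cdot(\ftF\dor_{r_1})^\circ=\ftF(\cor_{r_n}\cdot d)\cdot(\ftF\dor_{r_1})^\circ$; since $\cor_{r_n}\cdot d=\dor_{r_1}$ as maps and $\ftF$ preserves monos, this is $\le 1_{\ftF X}$, as required.

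The main obstacle is precisely this last bound $G_n\bar r_n\cdots G_1\bar r_1\le\ftF d$: the product mixes \emph{distinct} lax extensions, so one cannot simply invoke lax preservation of composition, and the coarse estimate $G_k\le\ftcF$ is useless since $\ftcF$ does not preserve composition (\Cref{p:666}). What makes the argument succeed is that bijectivity of~$d$ forces all suffix composites to be genuine functions, on which all normal lax extensions agree with~$\ftF$; this lets each absorption step remain within a single extension~$G_k$ and exploit its own lax composition, thereby sidestepping any comparison between different extensions on non-functional relations.
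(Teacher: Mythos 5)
Your proof is correct, and while it shares the paper's scaffolding --- existence of a normal lax extension from preservation of inverse images, downward closedness of normality, reduction of suprema to the laxification of the pointwise supremum via \Cref{cor:flat_sup}, the criterion of \Cref{p:2} enabled by \Cref{p:12c} and \Cref{p:19e}, and even the same initial reduction to total and surjective relations via \Cref{prop:lax_rstr}\ref{p:46c} --- it diverges genuinely at the combinatorial core. The paper proceeds by induction on the length of the chain: it replaces the last factor $r_{n+1}$ by its difunctional closure $\hat r_{n+1}$ (invoking \cite[Lemma~4.20]{GoncharovHofmannEtAl25} to keep the composite equal to $1_X$), exploits that all normal lax extensions agree on difunctional relations so that $\eF^\vee \hat r_{n+1} \cdot \eF^\vee r_n \leq \eF^\vee(\hat r_{n+1}\cdot r_n)$ can be computed inside each single member $\eF_i$, and thereby shortens the chain. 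You instead observe that once the relations are corestricted to their total--surjective cores $\bar r_k$, the full composite $d$ is forced to be the identity of $\dom(r_1)=\cod(r_n)$, whence every suffix $\sigma_k = \bar r_n \cdots \bar r_k$ is a genuine \emph{function}; absorbing factors from the left then only ever uses lax composition within a single $G_k$ plus the fact that normal lax extensions agree with $\ftF$ on functions. Both proofs face the same obstacle --- lax composition is unavailable across distinct extensions --- and both resolve it by rigidity, but on different classes: the paper on difunctional relations, you on mere functions. Your route buys a more elementary, non-inductive argument that dispenses with \cite[Lemma~4.20]{GoncharovHofmannEtAl25} and with agreement on general difunctional relations; the paper's route stays closer to its recurring ``difunctional rigidity'' theme. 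One cosmetic slip: your justification that $\sigma_k$ is single-valued invokes injectivity of $d$ (``unique source''), whereas the chase actually needs surjectivity of the prefix $\bar r_{k-1}\cdots \bar r_1$ together with single-valuedness of $d$: if $x\,\sigma_k\,y$ and $x\,\sigma_k\,y'$, pick $a$ with $a\,(\bar r_{k-1}\cdots\bar r_1)\,x$, so that $a\,d\,y$ and $a\,d\,y'$ force $y=a=y'$. The claim itself is true, so this is a matter of wording, not a gap.
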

\begin{proof}
  Let \(\ftF \colon \SET \to \SET\) be a functor that preserves
  inverse images.  Then, by~\cite{GoncharovHofmannEtAl25}, \(\ftF\) has a normal lax
  extension and, therefore, the least lax extension of~\(\ftF\)  is
  normal.  To show that the non-empty supremum of normal lax
  extensions is normal, due to \Cref{p:12c} and \Cref{p:19e}, we use the criterion of
  \Cref{p:2}.  Let~$(\eF_i)_{i \in \calI}$ be a non-empty family of
  normal lax extensions of~$\ftF$ and let~$\eF^\vee$ be the
  corresponding relax extension given by pointwise supremum.  We have
  to show that for every composable sequence of relations
 $r_1, \ldots, r_n$ such that~$r_n \cdot \ldots \cdot r_1 \leq 1_X$,
  for some set~$X$, and~$\cod(r_{i-1}) = \dom(r_i)$, for
 $i=2, \ldots, n$,
 $\eF^\vee r_n \cdot \ldots \cdot \eF^\vee r_1 \leq 1_{\ftF X}$.
  First, we note that we can assume w.l.o.g. that all $r_i$ are total and surjective, which entails~$r_n \cdot \ldots \cdot r_1 = 1_X$.
  Indeed, %
 by
  (co)restricting each relation in the sequence to its (co)domain we
  obtain a composable sequence~$r'_1, \ldots, r_n'$ of total and
  surjective relations such that~$r'_n \cdot \ldots \cdot r'_1 = 1_A$,
  where~$A = \dom(r_1)$.
  Furthermore, being the pointwise supremum of lax extension that satisfy the condition in \Cref{prop:lax_rstr}\ref{p:46c} as $\ftF$ preserves inverse images,
  $\eF^\vee$ also satisfies this condition. 
  Hence, as $\ftF$ preserves monomorphisms, we obtain
 $\eF^\vee r_n \cdot \ldots \cdot \eF^\vee r_1 = \ftF i \cdot \eF^\vee
  r'_n \cdot \ldots \cdot \eF^\vee r'_1 \cdot (\ftF i)^\circ$, where
 $i \colon A \rightarrowtail X$ is the inclusion of~$A$ into~$X$.
  Therefore, as~$\ftF$ preserves monomorphisms,
 $\eF^\vee r_n \cdot \ldots \cdot \eF^\vee r_1 \leq 1_{\ftF X}
  \Leftrightarrow \eF^\vee r'_n \cdot \ldots \cdot \eF^\vee r'_1 \leq 1_{\ftF
    A}$.
	Now, we proceed by induction on~$n$.
	The base case~$n = 1$ is trivial as~$\eF^\vee$ is normal.
	For the inductive step from~$n$ to~$n+1$, let~$r_1, \ldots, r_{n+1}$ be a composable sequence of total and surjective relations such that~$r_{n+1} \cdot \ldots \cdot r_1 = 1_X$.
	Then, by~\cite[Lemma~4.20]{GoncharovHofmannEtAl25}, $\hat{r}_{n+1} \cdot \ldots \cdot r_1 = 1_X$, where~$\hat{r}$ is the difunctional closure of~$r_{n+1}$.
	Furthermore, as normal lax extensions coincide on difunctional relations, relational composition preserves suprema, and lax extensions preserve composition laxly,
	$\eF^\vee \hat{r}_{n+1} \cdot \eF^\vee r_n =  \eF^\vee \hat{r}_{n+1} \cdot (\bigvee_{i\in \calI} \eF_i r_n) = \bigvee_{i\in \calI} (\eF_i \hat{r}_{r+1} \cdot \eF_i r_n) \leq \bigvee_{i\in \calI} \eF_i (\hat{r}_{r+1} \cdot r_n) = \eF^\vee(\hat{r}_{n+1} \cdot r_n)$.
	Therefore, by induction hypothesis,
	$\eF^\vee r_{n+1} \cdot \ldots \cdot \eF^\vee r_1 \leq \eF^\vee \hat{r}_{n+1} \cdot \ldots \cdot \eF^\vee r_1 \leq \eF^\vee (\hat{r}_{n+1} \cdot r_n)  \cdot \ldots \cdot \eF r_1 \leq 1_{\ftF X}$.
\end{proof}

Recall from \cref{{example:z-simulations}} that a monoid-valued
functor admits a normal lax extension iff the monoid is
positive~\cite{GoncharovHofmannEtAl25}, which in turn is known to be equivalent
to preservation of inverse images by the monoid-valued
functor~\cite{GummSchroder01}. Therefore,

\begin{corollary}
	A monoid-valued functor admits a greatest normal lax extension iff the monoid is positive.
\end{corollary}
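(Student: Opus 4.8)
The plan is to reduce the statement to the chain of equivalences recalled immediately before it, so that essentially no fresh technical work is needed. Writing $\ftF = M^{(-)}$ for the monoid-valued functor, I would first isolate the two facts already in hand: by the Gumm–Schröder characterization cited above, $M$ is positive iff $\ftF$ preserves inverse images; and by \cref{example:z-simulations} (i.e.\ the result of~\cite{GoncharovHofmannEtAl25}), $M$ is positive iff $\ftF$ admits a normal lax extension. The corollary then becomes a matter of threading these two equivalences through \Cref{thm:main}.

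For the ``if'' direction, I would assume $M$ positive. Then $\ftF$ preserves inverse images, so \Cref{thm:main} immediately yields that $\ftF$ admits a greatest normal lax extension.

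For the ``only if'' direction, I would assume that $\ftF$ admits a greatest normal lax extension. The key (essentially trivial) observation is that the existence of a \emph{greatest} normal lax extension presupposes that the collection of normal lax extensions of $\ftF$ is non-empty; in particular $\ftF$ admits at least one normal lax extension. By the equivalence from \cref{example:z-simulations}, this forces $M$ to be positive.

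Since both directions are mere invocations of previously established equivalences, I do not expect any genuine obstacle here; the notion of a ``main difficulty'' is almost vacuous. The only point requiring minimal care is to make explicit that ``admits a greatest normal lax extension'' entails ``admits a normal lax extension'', which is precisely what powers the ``only if'' direction; everything else is bookkeeping over the two cited characterizations of positivity and the already-proved \Cref{thm:main}.
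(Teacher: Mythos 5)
Your proof is correct and matches the paper's own argument exactly: the ``if'' direction is \Cref{thm:main} via the Gumm--Schr\"oder equivalence between positivity and preservation of inverse images, and the ``only if'' direction is the observation that a greatest normal lax extension is in particular a normal lax extension, whose existence forces positivity by the result recalled in \cref{example:z-simulations}. Nothing is missing; this is precisely the paper's reasoning.
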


\section{Case Study: Labelled Transitions}
\label{sec:lts}

The first and so far only example reported in the literature of a functor that
admits more than one normal lax extension, due to Paul Levy, involves a
monoid-valued functor for a fairly sophisticated submonoid of the non-negative
reals generated as a division semiring by a transcendental
number~\cite[Example~4.11]{DorschMiliusEtAl18}.
We will show that functors of the form~$C + B \times \ftId$ admit a unique normal lax extension.
However, we will also show that it is not uncommon for a functor to have multiple normal lax extensions; in fact, we give a simple and widely used class of examples: Almost all exponential functors, i.e.\ functors of the form~$(-)^A$ (which in coalgebra represent deterministic~$A$-labelled transitions), admit multiple normal lax extensions.
For brevity, we write~$\ftH_A$ for~$(-)^A$.
Building on the results of the previous sections, we describe the complete lattice of normal lax extensions of~$\ftH_A$, in particular obtaining a maximally permissive sound  and complete notion of (bi)simulation for deterministic automata whose class of (bi)simulations is closed under composition. By combining this result with the usual notion of bisimulation on~$\ftP$-coalgebras (i.e.\ unlabelled transition systems), we obtain a new notion of \emph{twisted bisimulation} on labelled transition systems that is more permissive than standard Park-Milner bisimulations.

Since the functor \(\ftH_A\) preserves limits, it preserves weak pullbacks, and its Barr extension sends a relation \(r \colon X \relto Y\) to the relation \(\overline{\ftH_A} r \colon \ftH_A X \relto \ftH_A Y\)
defined by:
\begin{align*}
	f~\overline{\ftH_A} r~g \Leftrightarrow \forall a \in A, f(a)~r~g(a) \Leftrightarrow 1_A \le g^\circ \cdot r \cdot f.
\end{align*}
 We obtain non-standard notions of
simulation by additionally allowing other relations on~$A$ in place of~$1_A$ in the last inequality: We work with a set \(\calA\) of endorelations on \(A\) and
define
\begin{equation}\label{eq:submonoid-lax}
	f~\widehat{\ftH}_A^\calA r~g \Leftrightarrow (\exists \phi \in \calA.\quad \phi \leq g^\circ \cdot r \cdot f),
\end{equation}
or, in pointful notation, $\exists \phi \in \calA,\: \forall a,b\in A, a\:\phi\: b\Rightarrow f(a)\:r\: g(b)$.

We show next  that this construction yields a lax extension of~$\ftH_A$ whenever~$\calA$
forms a submonoid of~$\REL(A,A)$, and that normality of this lax extension is
captured as a condition on the relations in~$\calA$.

\begin{definition}
	An endorelation~$\phi\colon A \relto A$ is \df{normal} if its difunctional closure is reflexive, i.e., if for every \(a\) in \(A\) there is a chain \(a~r~x_1~r^\circ~x_2~r~ \dots~ r^\circ x_{n-1}~r~a\) of alternating~$r$- and~$r^\circ$-steps of (necessarily odd) length~$n\ge 1$.
\end{definition}

\begin{proposition}\label{prop:normal-rel}
	A relation \(\phi \colon A \relto A\) is normal iff for every set~$X$ and every pair of functions~$f,g\colon A\to X$, $\phi\le g^\circ \cdot f$ implies~$f=g$.
\end{proposition}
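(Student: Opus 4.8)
The plan is to reduce both implications to the universal property of the difunctional closure, namely that \(\hat\phi\) is the least difunctional relation above \(\phi\), combined with the elementary fact recorded in \Cref{sec:prelims} that a relation is difunctional precisely when it factors as \(g^\circ \cdot f\) for functions \(f,g\). The point is that the condition on the right-hand side of the statement quantifies exactly over relations of the form \(g^\circ \cdot f\), so these are the relations through which normality — i.e.\ reflexivity of \(\hat\phi\), equivalently \(1_A \le \hat\phi\) — should be detected. I would first restate \emph{normal} as \(1_A \le \hat\phi\) and keep in mind that for any function \(f\) one has \(1_A \le f^\circ \cdot f\) automatically.

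For the forward direction, I assume \(\phi\) is normal and let \(f,g \colon A \to X\) satisfy \(\phi \le g^\circ \cdot f\). Since \(g^\circ \cdot f\) is difunctional and \(\hat\phi\) is the \emph{least} difunctional relation above \(\phi\), I obtain \(\hat\phi \le g^\circ \cdot f\). Reflexivity of \(\hat\phi\) then yields \(1_A \le \hat\phi \le g^\circ \cdot f\), which unwinds to \(f(a)=g(a)\) for every \(a \in A\), i.e.\ \(f=g\).

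For the converse, I exploit that \(\hat\phi\) is itself difunctional, hence factors as \(\hat\phi = g^\circ \cdot f\) for suitable functions \(f,g \colon A \to Z\) (e.g.\ the legs of the pushout of the span determined by \(\phi\), as used in the soundness argument of \Cref{p:4}). Then \(\phi \le \hat\phi = g^\circ \cdot f\), so the hypothesis forces \(f = g\), whence \(\hat\phi = f^\circ \cdot f \ge 1_A\). Thus \(\hat\phi\) is reflexive and \(\phi\) is normal.

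The argument is short, and the only thing requiring care is the bookkeeping around the two faces of difunctionality: recognizing that the hypothesis is effectively testing \(\phi\) against the canonical difunctional relations \(g^\circ \cdot f\), and that \(\hat\phi\) is simultaneously the tightest such relation (needed for the forward direction) and itself of this shape (needed for the converse). I do not anticipate a genuine obstacle beyond correctly invoking the leastness of the difunctional closure together with the trivial reflexivity \(1_A \le f^\circ \cdot f\); the explicit chain description of normality given in the definition can be bypassed entirely, since the cleaner invariant \(1_A \le \hat\phi\) is all that is used.
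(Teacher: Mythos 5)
Your proof is correct and follows essentially the same route as the paper's: in the forward direction both arguments pass from \(\phi \le g^\circ \cdot f\) to \(\hat\phi \le g^\circ \cdot f\) by leastness of the difunctional closure and then use reflexivity \(1_A \le \hat\phi\) to conclude \(f = g\), and in the converse both factor \(\hat\phi = g^\circ \cdot f\), apply the hypothesis to get \(f = g\), and deduce \(\hat\phi = f^\circ \cdot f \ge 1_A\). There is nothing to add; your observation that the chain description of normality can be replaced throughout by the invariant \(1_A \le \hat\phi\) matches the paper's implicit usage.
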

\begin{proof}
	Let~$\phi \colon A \relto A$ be a normal relation, and~$f,g \colon A \to X$ be functions such that~$\phi\le g^\circ \cdot f$.
	Then, by definition of difunctional closure, $\hat{\phi} \leq g^\circ \cdot f$, where~$\hat{\phi}$ denotes the difunctional closure of~$\phi$.
	Hence, by normality of~$\phi$, $1_A \leq g^\circ \cdot f$ which is equivalent to~$g \leq f$.
	Therefore, as~$f$ and~$g$ are functions, $f = g$.
	To see the converse statement, suppose that the difunctional closure~$\hat{\phi}$ of~$\phi$ is given by~$g^\circ \cdot f$.
	Then, by definition of difunctional closure, $\phi \leq g^\circ \cdot f$.
	Hence, by hypothesis, $f = g$.
	Therefore, $\hat{\phi} = g^\circ \cdot g \geq 1_A$.
\end{proof}
Intuitively, normal relations correspond to roundabout ways of proving
equality. For instance, if~$A=\{a,b\}$, then two functions~$f,g\colon A\to X$
are of course equal if we can show that~$f(a)=g(a)$ and~$f(b)=g(b)$, but also
if we instead show that~$f(a)=g(b)$, $g(a)=g(b)$ and~$g(a)=f(b)$. This second
proof corresponds to the relation~$\{(a,b),(b,b),(b,a)\}$ from~\Cref{fig:bis}.

\begin{theorem}
	\label{p:7}
	Let \(\calA\) be a submonoid of the monoid of endorelations on a set \(A\).
	Then assigning to every relation \(r \colon X \relto Y\) the relation \(\widehat{\ftH}_A^\calA r \colon \ftH_A X \relto \ftH_A Y\) defines a lax extension of \(\ftH_A\) to \(\REL\).
	Furthermore,
	\begin{enumerate}
		\item \label{p:58} if \(\calA\) is closed under converses, then \(\widehat{\ftH}_A^\calA\) preserves converses, and
		\item \label{p:59} if every relation in \(\calA\) is normal, then \(\widehat{\ftH}_A^\calA\) is normal.
	\end{enumerate}
\end{theorem}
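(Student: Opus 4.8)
The plan is to verify the three defining conditions of a lax extension in turn — monotonicity, the two relax-extension inequalities, and lax preservation of composition — and then to dispatch the two additional clauses. The recurring algebraic facts I will lean on are that every function $h$ satisfies $1\le h^\circ\cdot h$ (totality) and $h\cdot h^\circ\le 1$ (single-valuedness), together with the two halves of the submonoid hypothesis: $1_A\in\calA$ and $\calA$ is closed under relational composition.

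Monotonicity is immediate, since $r\le r'$ gives $g^\circ\cdot r\cdot f\le g^\circ\cdot r'\cdot f$, so any witness $\phi\in\calA$ for $f\mathrel{\widehat{\ftH}_A^\calA r}g$ also witnesses $f\mathrel{\widehat{\ftH}_A^\calA r'}g$. For the relax-extension inequalities I will use the monoid unit $1_A\in\calA$ as the witness. Given a function $f\colon X\to Y$ and $h\colon A\to X$, the element $\ftH_A f(h)$ is the composite $f\cdot h$, and a direct computation gives
\[
1_A\le h^\circ\cdot(f^\circ\cdot f)\cdot h=(f\cdot h)^\circ\cdot f\cdot h,
\]
so $h\mathrel{\widehat{\ftH}_A^\calA f}(f\cdot h)$ and hence $\ftH_A f\le\widehat{\ftH}_A^\calA f$; the symmetric estimate $1_A\le h^\circ\cdot f^\circ\cdot(f\cdot h)$ yields $(\ftH_A f)^\circ\le\widehat{\ftH}_A^\calA(f^\circ)$.

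The essential step, and the one place where the monoid structure of $\calA$ is genuinely needed, is lax preservation of composition; it is short, and the only subtlety is keeping track of the subidentity that appears. Take $r\colon X\relto Y$, $s\colon Y\relto Z$ and $f\colon A\to X$, $g\colon A\to Y$, $h\colon A\to Z$ with $f\mathrel{\widehat{\ftH}_A^\calA r}g$ and $g\mathrel{\widehat{\ftH}_A^\calA s}h$, witnessed by $\phi,\psi\in\calA$ satisfying $\phi\le g^\circ\cdot r\cdot f$ and $\psi\le h^\circ\cdot s\cdot g$. I propose the witness $\psi\cdot\phi$, which lies in $\calA$ by closure under composition, and estimate
\[
\psi\cdot\phi\le(h^\circ\cdot s\cdot g)\cdot(g^\circ\cdot r\cdot f)=h^\circ\cdot s\cdot(g\cdot g^\circ)\cdot r\cdot f\le h^\circ\cdot s\cdot r\cdot f,
\]
using $g\cdot g^\circ\le 1_Y$ in the final step. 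This gives $f\mathrel{\widehat{\ftH}_A^\calA(s\cdot r)}h$, completing the proof that $\widehat{\ftH}_A^\calA$ is a lax extension.

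For the remaining clauses I will argue as follows. For the converse-preservation claim, note that $(g^\circ\cdot r\cdot f)^\circ=f^\circ\cdot r^\circ\cdot g$, so $\phi\le g^\circ\cdot r\cdot f$ holds iff $\phi^\circ\le f^\circ\cdot r^\circ\cdot g$; when $\calA$ is closed under converses, $\phi\mapsto\phi^\circ$ is a bijection of $\calA$ onto itself, whence $f\mathrel{\widehat{\ftH}_A^\calA r}g$ iff $g\mathrel{\widehat{\ftH}_A^\calA(r^\circ)}f$, i.e.\ $(\widehat{\ftH}_A^\calA r)^\circ=\widehat{\ftH}_A^\calA(r^\circ)$. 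For the normality claim, since $\widehat{\ftH}_A^\calA$ is already a relax extension we have $1_{\ftH_A X}=\ftH_A 1_X\le\widehat{\ftH}_A^\calA 1_X$, so it remains to prove the reverse inclusion: if $f\mathrel{\widehat{\ftH}_A^\calA 1_X}g$, then $\phi\le g^\circ\cdot f$ for some $\phi\in\calA$, and since $\phi$ is normal, \Cref{prop:normal-rel} forces $f=g$. Hence $\widehat{\ftH}_A^\calA 1_X\le 1_{\ftH_A X}$, so $\widehat{\ftH}_A^\calA$ is normal.
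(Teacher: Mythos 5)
Your proof is correct and follows essentially the same route as the paper's: the unit $1_A \in \calA$ witnesses the two relax-extension inequalities via $1_A \le (t\cdot f)^\circ\cdot t\cdot f$, the composite $\psi\cdot\phi$ (using $g\cdot g^\circ \le 1_Y$) handles lax composition, and converse-preservation and normality are dispatched by the same converse calculation and by \Cref{prop:normal-rel}, respectively. Your write-up is if anything slightly more careful than the paper's, e.g.\ in stating the easy inclusion $1_{\ftH_A X} \le \widehat{\ftH}_A^\calA 1_X$ explicitly before proving the reverse one.
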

\begin{proof}
	Let \(r \colon X \relto Y\) and \(s \colon Y \relto Z\) be relations, and \(f,f' \colon A \to X\), \(g \colon A \to Y\), \(h \colon A \to Z\) and~$t\colon X\to Y$ be functions.
	\begin{itemize}[wide]
		\item[\emph{Monotonicity}.] Trivial.
		\item[\emph{Lax preservation of composition}.] Suppose that \(f~\widehat{\ftH}_A^\calA r~g\) and \(g~\widehat{\ftH}_A^\calA s~h\).
		      Then there exist \(\phi,\psi \in \calA\) such that \(\phi \leq g^\circ \cdot r \cdot f\) and \(\psi \leq h^\circ \cdot s \cdot g\).
		      Hence,
		      \begin{align*}
			      \psi \cdot\phi \leq h^\circ \cdot s \cdot g \cdot g^\circ \cdot r \cdot f \leq h^\circ \cdot s \cdot r \cdot f,
		      \end{align*}
		      and the claim follows from the fact that \(\calA\) is closed under composition.
		\item[\emph{Extension of functions}.]
		      We have~$1_A \le (t \cdot f)^\circ \cdot t \cdot f = f^\circ \cdot t^\circ
			      \cdot (t\cdot f)$, and thus, since~$1_A \in \calA$, $f~\widehat{\ftH}_A^\calA
			      t~(t\cdot f)$ and~$(t\cdot f)~\widehat{\ftH}_A^\calA t~f$.
	\end{itemize}
	When \(\calA\) is closed under converses, we show \ref{p:58}. calculating as
	follows:
	\begin{align*}
		g~\widehat{\ftH}_Ar^\circ~f \Leftrightarrow \phi \leq f^\circ \cdot r^\circ \cdot g %
		\Leftrightarrow \phi^\circ \leq g^\circ \cdot r \cdot f %
		\Leftrightarrow f~\widehat{\ftH}_Ar~g.
	\end{align*}%
	Moreover, by definition, \(f' (\widehat{\ftH}_A^\calA 1_X) f\) iff there is
	\(\phi \in \calA\) such that \(\phi \leq f^\circ \cdot f'\). Hence, if every
	relation in \(\calA\) is normal we obtain \(1_X \leq f^\circ \cdot f'\).
	Therefore, \(f = f'\), which yields~\ref{p:59}.
\end{proof}
Conversely, every lax extension \(\eF \colon \REL \to \REL\) of \(\ftH_A\)
gives rise to a set~$\ftS(\eF)$ of endorelations given by
\begin{equation}\label{eq:lax-submonoid}
	\ftS(\eF) = \{ \phi \colon A \relto A \mid 1_A\: \eF \phi\: 1_A \}
\end{equation}

\begin{proposition}
	\label{p:16}
	Let \(\eF \colon \REL \to \REL\) be a lax extension of the functor \(\ftH_A \colon \SET \to \SET\).
	Then the set\/ \(\ftS(\eF)\) is an upwards-closed submonoid of \(\REL(A,A)\).
	Furthermore, \(\ftS(\eF)\) is closed under converses if\/~\(\eF\) preserves converses, and every relation in \(\ftS(\eF)\) is normal if\/~\(\eF\) is normal.
\end{proposition}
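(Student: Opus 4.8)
The plan is to unwind the definition $\ftS(\eF) = \{\phi \mid 1_A\,\eF\phi\,1_A\}$, where the outer occurrences of $1_A$ denote the identity map regarded as an element of $\ftH_A A = A^A$, and to discharge each assertion using that $\eF$, being a lax extension, is monotone, a relax extension (so $\ftH_A f \le \eF f$ for every function $f$), and a relational connector (so it is natural). Upward closure is immediate from monotonicity: if $\phi \le \psi$ and $1_A\,\eF\phi\,1_A$, then $\eF\phi \le \eF\psi$ yields $1_A\,\eF\psi\,1_A$. That $1_A$ lies in $\ftS(\eF)$ follows from the relax-extension inequality applied to the identity function: $\ftH_A 1_A = 1_{\ftH_A A} \le \eF 1_A$, and the identity element of $A^A$ is related to itself by $1_{\ftH_A A}$. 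For closure under composition I would take $\phi, \psi \in \ftS(\eF)$ and compose the witnesses $1_A\,\eF\phi\,1_A$ and $1_A\,\eF\psi\,1_A$ through the shared middle element $1_A \in A^A$ to obtain $1_A\,(\eF\psi \cdot \eF\phi)\,1_A$; lax preservation of composition then gives $1_A\,\eF(\psi \cdot \phi)\,1_A$, so $\psi\cdot\phi \in \ftS(\eF)$. Hence $\ftS(\eF)$ is an upwards-closed submonoid of $\REL(A,A)$.

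For the statement on converses, I would assume $\eF$ preserves converses. Then for $\phi \in \ftS(\eF)$ we have $\eF(\phi^\circ) = (\eF\phi)^\circ$, and since $1_A\,(\eF\phi)^\circ\,1_A$ holds iff $1_A\,\eF\phi\,1_A$, we conclude $\phi^\circ \in \ftS(\eF)$.

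The only genuinely non-mechanical part is the normality claim, and here the plan is to invoke the characterization in \cref{prop:normal-rel}: a relation $\phi$ is normal iff, for every set $X$ and all functions $f, g \colon A \to X$, $\phi \le g^\circ \cdot f$ implies $f = g$. So I would assume $\eF$ is normal, take $\phi \in \ftS(\eF)$, and suppose $\phi \le g^\circ \cdot f$. Monotonicity gives $\eF\phi \le \eF(g^\circ \cdot f)$. Writing $g^\circ \cdot f = g^\circ \cdot 1_X \cdot f$ and using naturality of $\eF$ together with normality ($\eF 1_X = 1_{\ftH_A X}$) collapses this to $\eF(g^\circ \cdot f) = (\ftH_A g)^\circ \cdot \ftH_A f$. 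Combining with $1_A\,\eF\phi\,1_A$ yields $1_A\,(\ftH_A g)^\circ \cdot \ftH_A f\,1_A$, which says precisely $\ftH_A f(1_A) = \ftH_A g(1_A)$. The defining feature of the exponential functor now finishes the argument: $\ftH_A f(1_A) = f \circ 1_A = f$ and likewise $\ftH_A g(1_A) = g$, whence $f = g$, as required.

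I expect the main obstacle to be exactly this last step — keeping straight that the two copies of $1_A$ bracketing $\eF\phi$ live in $A^A$ rather than in $\REL(A,A)$, and recognizing that one should first use naturality plus normality to rewrite $\eF(g^\circ \cdot f)$ as $(\ftH_A g)^\circ \cdot \ftH_A f$ and only then exploit that evaluating $\ftH_A f$ at the identity returns $f$ itself. Everything else reduces routinely to monotonicity, the relax-extension inequalities, and lax preservation of composition.
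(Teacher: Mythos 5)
Your proof is correct and takes essentially the same route as the paper's: the submonoid, upward-closure, and converse claims reduce routinely to monotonicity, the relax-extension inequalities, and lax preservation of composition (which the paper dismisses as straightforward), and the normality claim proceeds exactly as in the paper, via \cref{prop:normal-rel}, using that a normal lax extension is a normal relational connector to rewrite \(\eF(g^\circ \cdot f)\) as \((\ftH_A g)^\circ \cdot \ftH_A f\) and then evaluating at the identity \(1_A \in A^A\) to conclude \(f = g\). The only difference is that you spell out the routine parts the paper leaves implicit.
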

\begin{proof}
	All claims are straightforward except maybe the last. So
	suppose that \(\eF\) is normal.  To show that
	\(r \in \ftS(\eF)\) is normal, we use \Cref{prop:normal-rel}. So let~$r\le d$ where~$d$ is
	of the form~$d=g^\circ \cdot f$ where
	\(f,g \colon A \to X\) are functions.  Then, as \(\eF\) is a normal relational connector, \(\eF r \leq \eF d = (\ftH_A g)^\circ \cdot\ftH_A f\).  Hence, by definition of \(\ftS(\eF)\),
	\((\ftH_A g)^\circ \cdot\ftH_A f\) relates \(1_A\) to \(1_A\),
	which by definition of \(\ftH_A\) entails \(f = g\).
\end{proof}
These two constructions are inverses of each other:

\begin{theorem}
	\label{p:15}
	Every lax extension\/ \(\eF \colon \REL \to \REL\) of the functor\/ \(\ftH_A \colon \SET \to \SET\) is induced by the set\/ \(\ftS(\eF) = \{ r \colon A \relto A \mid 1_A\,(\eF r)\, 1_A \}\) and every upwards-closed submonoid~$\calA$ of~$\REL(A,A)$ is induced by the lax extension~$\widehat{\ftH}_A^\calA$.
\end{theorem}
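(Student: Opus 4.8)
The plan is to verify the two round-trip identities $\widehat{\ftH}_A^{\ftS(\eF)} = \eF$ and $\ftS(\widehat{\ftH}_A^\calA) = \calA$ separately, both of which will follow from a single elementary observation: the identity map $1_A \in \ftH_A A = A^A$ acts as a universal probe. First I would record that the two constructions land where they should: $\ftS(\eF)$ is an upwards-closed submonoid by \Cref{p:16}, so $\widehat{\ftH}_A^{\ftS(\eF)}$ is a lax extension by \Cref{p:7}; and conversely $\widehat{\ftH}_A^\calA$ is a lax extension for any submonoid $\calA$, again by \Cref{p:7}. Thus both composites are well-typed.

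The crux is the following equivalence. Since $\eF$ is a lax extension it is in particular a relational connector, hence natural: for any $r \colon X \relto Y$ and functions $f \colon A \to X$, $g \colon A \to Y$ we have $\eF(g^\circ \cdot r \cdot f) = (\ftH_A g)^\circ \cdot \eF r \cdot \ftH_A f$. Unfolding applicative composition on the right-hand side and using that $\ftH_A f$ and $\ftH_A g$ are \emph{functions} with $\ftH_A f(1_A) = f \cdot 1_A = f$ and $\ftH_A g(1_A) = g$, evaluation at the identity yields
\[
	1_A \,\big(\eF(g^\circ \cdot r \cdot f)\big)\, 1_A \iff f \,(\eF r)\, g.
\]
This single equivalence drives the whole argument.

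For $\widehat{\ftH}_A^{\ftS(\eF)} = \eF$ I would fix $r \colon X \relto Y$, $f \in \ftH_A X$, $g \in \ftH_A Y$ and prove both inclusions. For the implication $f \,(\eF r)\, g \Rightarrow f \,\big(\widehat{\ftH}_A^{\ftS(\eF)} r\big)\, g$, take $\phi = g^\circ \cdot r \cdot f \colon A \relto A$; the displayed equivalence gives $1_A \,(\eF \phi)\, 1_A$, i.e.\ $\phi \in \ftS(\eF)$, and trivially $\phi \le g^\circ \cdot r \cdot f$, so the pair is related by definition~\eqref{eq:submonoid-lax}. Conversely, given $\phi \in \ftS(\eF)$ with $\phi \le g^\circ \cdot r \cdot f$, monotonicity gives $\eF \phi \le \eF(g^\circ \cdot r \cdot f)$, and since $1_A \,(\eF\phi)\, 1_A$ we obtain $1_A \,\big(\eF(g^\circ \cdot r \cdot f)\big)\, 1_A$, whence $f \,(\eF r)\, g$ by the displayed equivalence.

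For $\ftS(\widehat{\ftH}_A^\calA) = \calA$ I would simply unfold definitions: $\phi \in \ftS(\widehat{\ftH}_A^\calA)$ means $1_A \,\big(\widehat{\ftH}_A^\calA \phi\big)\, 1_A$, which by~\eqref{eq:submonoid-lax} says there is $\psi \in \calA$ with $\psi \le 1_A^\circ \cdot \phi \cdot 1_A = \phi$; since $\calA$ is upwards closed this holds precisely when $\phi \in \calA$. I do not expect a genuine obstacle in this theorem: the entire content is the evaluation-at-$1_A$ equivalence, and the only hypothesis used beyond naturality is upward-closedness (which is exactly what pins the round trip down, as $\ftS(\widehat{\ftH}_A^\calA)$ is in general the upward closure of $\calA$). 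The one point demanding care is bookkeeping the direction of applicative composition when deriving the boxed equivalence, since it is precisely the reading of the lifted relation at the identity that reconstructs the full extension $\eF$ from the ``scalar'' data $\ftS(\eF)$.
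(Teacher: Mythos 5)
Your proof is correct and follows essentially the same route as the paper's: both arguments hinge on the evaluation-at-$1_A$ equivalence $f\,(\eF r)\,g \iff 1_A\,\bigl(\eF(g^\circ\cdot r\cdot f)\bigr)\,1_A$, with the choice $\phi = g^\circ\cdot r\cdot f$ for one inclusion and local monotonicity for the other, and the same one-line unfolding for $\ftS(\widehat{\ftH}_A^\calA)=\calA$. If anything, you are slightly more explicit than the paper, which asserts the key equivalence as a ``fact'' while you derive it from naturality of relational connectors (via $\ftH_A f(1_A)=f$, $\ftH_A g(1_A)=g$) — a welcome addition, not a deviation.
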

\begin{proof}
	Let \(\eF \colon \REL \to \REL\) be a lax extension of the functor \(\ftH_A \colon \SET \to \SET\), \(r \colon X \relto Y\)  a relation, and let \(f \colon A \to X\) and \(g \colon A \to Y\) be functions.
	First note that if there is \(\phi \in \ftS(\eF)\) such that \(\phi \leq g^\circ\cdot r \cdot f\), then, by definition of \(\ftS(\eF)\) and local monotonicity of \(\eF\), \(\eF(g^\circ \cdot r \cdot f)\) relates \(1_A\) to \(1_A\).
	Therefore, the claim follows from the fact that \(\eF r\) relates \(f=\ftH_A f(1_A)\) to \(g=\ftH_A g(1_A)\) iff \(\eF(g^\circ \cdot r \cdot f)\) relates \(1_A\) to \(1_A\).

	Conversely, let~$\calA$ be an upwards-closed submonoid of~$\REL(A,A)$. Then
	\begin{equation*}
		\phi \in \ftS(\widehat{\ftH}_A^\calA)
		\Leftrightarrow 1_A\,\widehat{\ftH}_A^\calA\phi\,1_A
		\Leftrightarrow \exists \psi\in\calA,\;\psi\le\phi
		\Leftrightarrow \phi\in\calA. \qedhere
	\end{equation*}
\end{proof}

\begin{corollary}
	\label{p:2e}
	The normal lax extensions of the functor \(\ftH_A \colon \SET \to \SET\) correspond precisely to the upward-closed submonoids of \(\REL(A,A)\) consisting only of normal relations.
\end{corollary}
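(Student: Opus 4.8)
The plan is to read the statement as the restriction, to appropriate subclasses on each side, of the bijection already established in \Cref{p:15}. That theorem gives a bijection between lax extensions $\eF$ of $\ftH_A$ and upward-closed submonoids $\calA$ of $\REL(A,A)$, realized by the mutually inverse assignments $\eF \mapsto \ftS(\eF)$ and $\calA \mapsto \widehat{\ftH}_A^\calA$. Consequently, the only thing left to check is that, under this correspondence, a lax extension is normal precisely when the matching submonoid consists only of normal relations; once this is verified, restricting the bijection on both sides yields the claimed correspondence.

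To establish that equivalence I would assemble the two implications already in hand. First, \Cref{p:7}(\ref{p:59}) gives that if every relation in $\calA$ is normal then $\widehat{\ftH}_A^\calA$ is normal. Second, \Cref{p:16} gives that if a lax extension $\eF$ is normal then every relation in $\ftS(\eF)$ is normal. The converse of each of these then comes for free from the mutual inverseness in \Cref{p:15}: if $\widehat{\ftH}_A^\calA$ is normal, then applying \Cref{p:16} to it and using $\ftS(\widehat{\ftH}_A^\calA) = \calA$ shows that every relation in $\calA$ is normal; symmetrically, if $\eF$ is normal then $\ftS(\eF)$ consists only of normal relations by \Cref{p:16}, so that $\widehat{\ftH}_A^{\ftS(\eF)} = \eF$ is seen to be normal also via \Cref{p:7}(\ref{p:59}). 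Thus we obtain the biconditional ``$\widehat{\ftH}_A^\calA$ is normal iff every relation in $\calA$ is normal'' in both directions.

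With this biconditional in place, I would conclude by restricting the bijection of \Cref{p:15}: on the one side to normal lax extensions of $\ftH_A$, and on the other to upward-closed submonoids of $\REL(A,A)$ all of whose elements are normal relations. The restricted assignments remain mutually inverse and land in the intended classes, which is exactly the asserted correspondence. I do not anticipate any real obstacle here, since every ingredient is already available; the only point demanding care is the bookkeeping noted above, namely that normality of $\widehat{\ftH}_A^\calA$ is equivalent to normality of all members of $\calA$ in \emph{both} directions—an equivalence that relies on the mutual inverseness of \Cref{p:15} together with \Cref{p:7}(\ref{p:59}) and \Cref{p:16}, rather than on any one of these results in isolation.
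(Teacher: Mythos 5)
Your proof is correct and is precisely the argument the paper intends: the corollary is stated without proof because it follows immediately from the bijection of \Cref{p:15} restricted along the two implications of \Cref{p:7}(\ref{p:59}) and \Cref{p:16}, which is exactly the assembly you carry out. Your explicit observation that the \emph{converse} of each implication comes for free from mutual inverseness is the right piece of bookkeeping and matches the paper's reasoning.
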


\begin{example}
	\label{p:600}
	\begin{enumerate}[wide]
		\item \label{p:601} With \(A = 2 = \{a,b\}\), consider the upwards-closed submonoid~$\calA$ of~$\REL(2,2)$ generated from the single relation \(\Phi = \{(a,b),(b,b),(b,a)\}\).
		      The lax extension~$\widehat{\ftH}_2^\calA$ of \(\ftH_2 \colon \SET \to \SET\) to \(\REL\) is normal, preserves converses and differs from the Barr extension of \(\ftH_2\) since \(\widehat{\ftH}_2^\calA\Phi\) relates \(1_2\) to itself but \(\overline{\ftH}_2\Phi\) does not.
		      The corresponding notion of bisimulation, combined with the standard notion of bisimulation for unlabelled transition systems, is the `twisted' bisimulation mentioned in the introduction.
		\item \label{p:20e} With \(A = 3 = \{a,b,c\}\), consider the upwards-closed submonoid $\calA$ of
		      $\REL(3,3)$ generated from the single relation \(\Phi = \ftP(3 \times
		      3)\setminus\{(a,a),(b,c)\}\). We obtain a normal lax extension $\eF$ of \(\ftH_3
			\colon \SET \to \SET\) that does not preserve converses and, hence, differs
		      from the Barr extension. Indeed, \(\widehat{\ftH}_3^\calA\Phi\) relates \(1_3\)
		      to itself but \(\widehat{\ftH}_3^\calA (\Phi^\circ)\) does not.
			As far as we know, this is the first example of a non-symmetric normal lax extension.
			Furthermore, since~$\eF$ is normal, by~\Cref{p:1}, $\eF$-similarity is sound and complete.
	\end{enumerate}
\end{example}

\Cref{p:15} implies in particular that the class~$\catfont{Lax}(\ftH_A)$ of lax extensions of~$\ftH_A$ is small, so we will regard it as a set. It is easy to see that the mutually inverse constructions described in \Cref{p:7} and \Cref{p:16} define monotone maps \(\ftS \colon \catfont{Lax}(\ftH_A) \to \catfont{SubMon}^\uparrow(\REL(A,A))\) and \(\widehat{\ftH}_A^{(-)} \colon \catfont{SubMon}^\uparrow(\REL(A,A)) \to \catfont{Lax}(\ftH_A)\) between the partially ordered set \(\catfont{Lax}(\ftH_A)\) of lax extensions of the functor \(\ftH_A \colon \SET \to \SET\) and the partially ordered set \(\catfont{SubMon}^\uparrow(\REL(A,A))\) of upwards-closed submonoids of \(\REL(A,A)\) ordered by inclusion.
This allows reasoning about suprema of lax extensions in terms of suprema of
sets of endorelations in \(\catfont{SubMon}^\uparrow(\REL(A,A))\) which is
given by closure under composition of union of the sets; more specifically, the
supremum of a family \((\calA)_{i \in \calI}\) of elements of
\(\catfont{SubMon}^\uparrow(\REL(A,A))\), denoted as \(\bigvee_{i\in
	\calI}\calA_i\), is the smallest set that contains
\(\bigcup_{i\in\calI}\calA_i\) and is closed under composition.

\begin{proposition}
	\label{p:32}
	Let \((\eF_i)_{i \in \calI}\) be a family of lax extensions of the functor \(\ftH_A \colon \SET \to \SET\).
	Then the supremum \(\bigvee_{i \in \calI} \eF_i\) is given by the lax extension induced by set \(\bigvee_{i\in \calI}\ftS(\eF_i)\).
\end{proposition}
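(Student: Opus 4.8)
The plan is to avoid recomputing laxifications altogether and instead read off the statement from the order isomorphism between lax extensions and submonoids that the preceding results have already assembled. Concretely, I would treat $\ftS$ and $\widehat{\ftH}_A^{(-)}$ as mutually inverse order isomorphisms and then invoke that isomorphisms preserve suprema, using only the description of suprema in $\catfont{SubMon}^\uparrow(\REL(A,A))$ recalled just before the statement.

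First I would record the structural input. By \Cref{p:7} and \Cref{p:16} the two constructions land in the right classes, by \Cref{p:15} they are mutually inverse, and (as noted in the discussion preceding the proposition) both are monotone; hence they form an isomorphism between the posets $\catfont{Lax}(\ftH_A)$ and $\catfont{SubMon}^\uparrow(\REL(A,A))$. In particular $\eF_i = \widehat{\ftH}_A^{\ftS(\eF_i)}$ for each $i$, and $\eF' = \widehat{\ftH}_A^{\ftS(\eF')}$ for any lax extension $\eF'$ of $\ftH_A$. Next I would verify the universal property of the supremum directly for $\widehat{\ftH}_A^{\bigvee_{i}\ftS(\eF_i)}$. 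For the upper-bound half, each $\ftS(\eF_j)$ lies in $\bigcup_i \ftS(\eF_i)$ and hence in $\bigvee_i \ftS(\eF_i)$ (the latter being, by definition, the smallest composition-closed set containing the union); applying the monotone map $\widehat{\ftH}_A^{(-)}$ gives $\eF_j = \widehat{\ftH}_A^{\ftS(\eF_j)} \le \widehat{\ftH}_A^{\bigvee_{i}\ftS(\eF_i)}$. For the least-upper-bound half, let $\eF'$ be any lax extension with $\eF_i \le \eF'$ for all $i$; applying the monotone map $\ftS$ yields $\ftS(\eF_i) \subseteq \ftS(\eF')$ for all $i$, so $\ftS(\eF')$ is an upper bound of $(\ftS(\eF_i))_{i\in\calI}$ in $\catfont{SubMon}^\uparrow(\REL(A,A))$; since $\bigvee_i \ftS(\eF_i)$ is the least such upper bound, we get $\bigvee_i \ftS(\eF_i) \subseteq \ftS(\eF')$, and applying $\widehat{\ftH}_A^{(-)}$ gives $\widehat{\ftH}_A^{\bigvee_{i}\ftS(\eF_i)} \le \widehat{\ftH}_A^{\ftS(\eF')} = \eF'$. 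Together these identify $\widehat{\ftH}_A^{\bigvee_{i}\ftS(\eF_i)}$ as the supremum of the $\eF_i$.

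I do not expect a genuine obstacle here: the heavy lifting has already been carried out in \Cref{p:7,p:16,p:15}, and the only extra ingredient is the characterization of suprema in $\catfont{SubMon}^\uparrow(\REL(A,A))$ as composition closures of unions. The one point worth stating carefully is that the supremum on the submonoid side is \emph{not} the plain union but its composition closure, exactly mirroring the fact (\Cref{cor:flat_sup}) that the supremum of lax extensions is not the pointwise supremum but its laxification. If a self-contained argument were preferred over the isomorphism route, one could instead instantiate \Cref{cor:flat_sup} and check that the laxification of the pointwise supremum of the $\widehat{\ftH}_A^{\ftS(\eF_i)}$ is induced precisely by the composition closure of $\bigcup_i \ftS(\eF_i)$; but unwinding the laxification formula is exactly what the order-isomorphism argument lets us sidestep.
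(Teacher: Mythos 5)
Your proof is correct and matches the route the paper intends: \Cref{p:32} is stated as an immediate consequence of the order isomorphism assembled in \Cref{p:7}, \Cref{p:16} and \Cref{p:15} together with the stated description of suprema in \(\catfont{SubMon}^\uparrow(\REL(A,A))\) as composition closures of unions, which is exactly the argument you spell out via the universal property. Your closing remark correctly identifies the one point of care (the supremum on the submonoid side is the composition closure, not the plain union, mirroring \Cref{cor:flat_sup}), so nothing is missing.
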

\begin{example}
	\label{p:7e}
	The squaring functor \(\ftH_2 \colon \SET \to \SET\) has precisely four normal lax extensions, which are induced by the upwards-closed submonoids of \(\REL(2,2)\)
	\begin{itemize}
		\item \(\calA_\bot\) generated by the empty set -- the Barr extension;
		\item \(\calA_a\) generated from the single relation \(\Phi_a = \{(a,b),(a,a),(b,a)\}\);
		\item \(\calA_b\) generated from the single relation \(\Phi_b = \{(a,b),(b,b),(b,a)\}\);
		\item \(\calA_\top\) generated by the set \(\{\Phi_a,\Phi_b\}\).
	\end{itemize}
	\begin{center}
		\begin{tikzcd}[column sep=small, row sep=small]
			& \widehat{\ftH}_2^{\calA_\top}  & \\
			\widehat{\ftH}_2^{\calA_a}  &                                & \widehat{\ftH}_2^{\calA_b} \\
			& \widehat{\ftH}_2^{\calA_\bot}  &
			\ar[from=3-2, to=2-1]
			\ar[from=3-2, to=2-3]
			\ar[from=2-1, to=1-2]
			\ar[from=2-3, to=1-2]
		\end{tikzcd}
	\end{center}
\end{example}

Since exponential functors preserve terminal objects, all states belonging to coalgebras for a exponential functor are behaviourally equivalent.
This means that similarity w.r.t. the greatest lax extension of a exponential functor -- which induces the most permissive notion of simulation but that in general fails to be normal -- is sound, complete and the corresponding class of simulations is closed under composition.
The situation is far more interesting for polynomial functors, whose definition we recall next.

Given a family~$\calF = (\ftF_s)_{s \in S}$ of  set functors indexed by a set~$S$, we denote by~$\Sigma(\calF) \colon \SET \to \SET$ the canonical functor ``sum of a family of functors'' that sends a set~$X$ to the coproduct~$\sum_{s \in S} \ftF_s X$, and we denote by~$c^s \colon \ftF_s \to \Sigma(\calF)$ the natural transformation whose~$X$-component is defined by the coprojection~$c^s_X \colon \ftF_s X \to \sigma(\calF)$.

\begin{definition}
	A polynomial set functor is a functor of the form~$\Sigma(\calH)$ for a family~$\calH$ of exponential functors.
\end{definition}

Recall from \Cref{p:14e} that almost all polynomial functors (except
the exponential functors) are~$\zeta$-bounded, so that \Cref{p:10d}
applies. In particular, this means that notions of simulation on
coalgebras for polynomial functors induced by a lax extension are
sound iff the lax extension is normal. This further motivates the
investigation of the structure of the lattice of normal lax extensions
of polynomial functors. We show next that greatest normal lax
extensions of polynomial functors are constructed from greatest normal
lax extensions of exponential functors; we will illustrate this
principle on the minimization of deterministic automata, which are
coalgebras for polynomial functors of the form~$2\times\ftH_A$.

In the remainder of the paper, we fix a set~$S$ and a family~$\calF = (\ftF_s)_{s \in S}$ of set functors, and we say that a family $\calL=(\eF_s)_{s\in S}$ of lax extensions  is a family of lax extensions of~$\calF$ if for every~$s \in S$, $\eF_s$ is a lax extension of~$\ftF_s$.

It is easy to see that a family~$\calL$ of lax extensions of~$\calF$ gives rise to a lax extension $\Sigma^\eF(\calL)$ of~$\Sigma(\calF)$  given by $\Sigma^\eF(\calL) r=\bigvee_{s \in S} \big( c^s_Y \cdot \eF_s r \cdot (c^s_X)^\circ \big)$ for $r \colon X \relto Y$;
or, in pointful notation, for all~$u \in \Sigma(\calF) X$ and~$v \in \Sigma(\calF) Y$, $u \mathrel{\Sigma^\eF(\calL) r} v$ iff there is~$s \in S$ and~$u' \in \ftF_s X$, $v' \in \ftF_s Y$ s.t.~$u = c^s_X(u'), v = c^s_Y(v')$ and~$u' \mathrel{\eF_s r} v'$.
Then, assigning to every family~$\calL$ of lax extensions of~$\calF$ the lax extension~$\Sigma^\eF(\calL)$ defines a monotone map~$\Sigma^\eF(-) \colon \catfont{Fam}(\calF) \to \catfont{Lax}(\Sigma{\calF})$ from the partially ordered class $\catfont{Fam}(\calF)$ of families of lax extensions of~$\calF$, ordered pointwise, to the partially ordered class $\catfont{Lax}(\Sigma{\calF})$ of lax extensions of~$\Sigma(\calF)$ ordered pointwise.

Conversely, every lax extension~$\eF$ of the functor~$\Sigma(\calF)$ gives rise to an~$S$-indexed family~$c^\ast(\eF)$ of lax extensions of~$\calF$ by ``(co)restricting'' the action of the lax extension to~$\ftF_s$, i.e., for every~$s \in S$ and every relation~$r \colon X \relto Y$, $c^\ast(\eF)_s (r)$ is given by~${(c^s_Y)}^\circ \cdot \eF r \cdot c^s_X$;
or in pointful notation, for all~$u' \in \ftF_s X$ and~$v' \in \ftF_s Y$, $u' \mathrel{c^\ast(\eF)_s (r)} v'$ iff~$c^s_X(u') \mathrel{\eF r} c^s_Y(v')$.
Then, assigning to every lax extension~$\eF$ of~$\Sigma(\calF)$ the family~${c}^\ast(\eF)$ of lax extensions of~$\calF$ defines a monotone map~${c}^\ast(-) \colon \catfont{Lax}(\Sigma{\calF}) \to \catfont{Fam}(\calF)$.

Immediately from the definitions we have:

\begin{proposition}
	\label{p:11e}
	\pnnote{@Pedro: add proof if time permits}
	The map~$\Sigma^\eF(-)$ is an order reflecting left adjoint of~$c^\ast(-)$.
\end{proposition}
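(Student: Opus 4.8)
The plan is to show that $(\Sigma^\eF(-),\, c^\ast(-))$ forms a Galois connection between the posets $\catfont{Fam}(\calF)$ and $\catfont{Lax}(\Sigma\calF)$ and that, in addition, $c^\ast(\Sigma^\eF(\calL)) = \calL$ for every family $\calL$; by elementary order theory this makes $\Sigma^\eF(-)$ an order-reflecting left adjoint of $c^\ast(-)$. Monotonicity of both maps has already been recorded, so the essential content is the adjunction equivalence
\[
  \Sigma^\eF(\calL) \le \eF \quad\Longleftrightarrow\quad \calL \le c^\ast(\eF)
\]
together with the retraction identity.

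For the adjunction I would fix a relation $r \colon X \relto Y$ and exploit that each coprojection $c^s_X$ is an injective function, so that in $\REL$ it satisfies $(c^s_X)^\circ \cdot c^s_X = 1_{\ftF_s X}$ and $c^s_X \cdot (c^s_X)^\circ \le 1_{\Sigma(\calF)X}$; these are exactly the unit and counit of the standard adjunction $c^s_X \dashv (c^s_X)^\circ$ in $\REL$. Since the join in the definition of $\Sigma^\eF(\calL)$ is taken pointwise and composition in $\REL$ preserves joins in each argument, $\Sigma^\eF(\calL) \le \eF$ holds iff $c^s_Y \cdot \eF_s r \cdot (c^s_X)^\circ \le \eF r$ for every $s \in S$ and every $r$, whereas $\calL \le c^\ast(\eF)$ unfolds to $\eF_s r \le (c^s_Y)^\circ \cdot \eF r \cdot c^s_X$ for every $s$ and $r$. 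Transposing one inequality into the other is then a one-line application of $c^s_X \dashv (c^s_X)^\circ$ on both sides: pre- and postcomposing the first inequality with $(c^s_Y)^\circ$ and $c^s_X$ yields the second via the unit equalities, while pre- and postcomposing the second with $c^s_Y$ and $(c^s_X)^\circ$ yields the first via the counit inequalities.

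For order reflection I would verify the identity $c^\ast(\Sigma^\eF(\calL)) = \calL$. Expanding and pulling the join out of the composition,
\[
  c^\ast(\Sigma^\eF(\calL))_s(r) = \bigvee_{t \in S} (c^s_Y)^\circ \cdot c^t_Y \cdot \eF_t r \cdot (c^t_X)^\circ \cdot c^s_X,
\]
and the decisive observation is that distinct coproduct coprojections have disjoint images, so $(c^s_Y)^\circ \cdot c^t_Y$ is the empty relation for $s \ne t$ and equals $1_{\ftF_s Y}$ for $s = t$, and likewise for $(c^t_X)^\circ \cdot c^s_X$. As the empty relation is absorbing under composition, every off-diagonal summand vanishes and the expression collapses to $\eF_s r$; hence $c^\ast(\Sigma^\eF(\calL)) = \calL$. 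A monotone left adjoint whose right adjoint retracts it is an order embedding, and in particular order reflecting, which completes the argument.

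The main obstacle is bookkeeping rather than conceptual: one must be careful that relational composition distributes over the joins, so that the supremum can be extracted in both computations, and one must use the disjoint-image property of the coprojections to eliminate the off-diagonal terms. No genuinely hard step arises, as everything reduces to the elementary adjunction $f \dashv f^\circ$ in $\REL$ together with the fact that $\Sigma(\calF)$ is a coproduct.
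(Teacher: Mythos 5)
Your proof is correct and takes the approach the paper intends: the paper states this proposition as ``immediately from the definitions'' without giving a proof, and your verification --- transposing through the adjunctions $c^s_X \dashv (c^s_X)^\circ$ in $\REL$ for the Galois connection, and using disjointness of coprojection images together with distribution of relational composition over joins to get the retraction $c^\ast(\Sigma^\eF(\calL)) = \calL$, hence order reflection --- is precisely the routine calculation being elided.
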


Of course, in general, the constructions defined above are not inverse of each other.

\begin{example}
	Consider the pair~$(\ftC_1,\ftC_1)$ where~$\ftC_1 \colon \SET \to \SET$ denotes the constant functor to~$1=\{\ast\}$.
	The functor~$\ftC_1 + \ftC_1$ is isomorphic to the constant functor~$\ftC_2 \colon \SET \to \SET$ to~$2=\{0,1\}$.
	The greatest lax extension~$\eF^\top$ of~$\ftC_2$ sends every relation to the greatest relation on~$2$ which is different from the identity on~$2$.
	However, $c^\ast(\eF^\top) = (\overline{\ftC_1},\overline{\ftC_1})$ and~$\Sigma^\eF(c^\ast(\eF^\top))$ is the Barr extension of~$\ftC_2$ which sends every relation to the identity map on~$2$.
\end{example}

\noindent However, as we show next, they become inverse of each other when~$\calF$ consists of functors that weakly preserve pullbacks and only normal lax extensions are allowed.

Let~$\catfont{NLax}(\Sigma{\calF})$ be the partially ordered subclass of $\catfont{Lax}(\Sigma{\calF})$ given by the normal lax extensions of~$\Sigma{\calF}$, and let~$\catfont{NFam}(\calF)$ be the partially ordered subclass of
$\catfont{Fam}(\calF)$ given by the families of normal lax extensions of~$\calF$.
Clearly, the map~$\Sigma^\eF(-)$ (co)restricts to a map~$\Sigma^\eF(-) \colon \catfont{NFam}(\calF) \to \catfont{NLax}(\Sigma{\calF})$, and, as each natural transformation~${c^s}$ is monic, it is easy to see that the map~${c}^\ast(-)$ (co)restricts to~${c}^\ast(-) \colon \catfont{NLax}(\Sigma{\calF}) \to \catfont{NFam}(\calF)$.

\begin{remark}
	\label{p:10e}
        It has  been observed that the canonical forgetful functor from the category of lax extensions to the category of \(\SET\)-endofunctors is topological~\cite{SchubertSeal08}, and hence, in particular, has a left adjoint.
         This left adjoint picks, for every \(\ftF \colon\SET\to\SET\), the smallest element of the fibre \(\catfont{Lax}(\ftF)\) of \(\ftF\) with respect to this forgetful functor.
        We also note that, if \(\ftF\) weakly preserves pullbacks, this element is given by the Barr extension of \(\ftF\).

	Suppose now that every functor in the family~$\calF$ weakly preserves pullbacks.
	Let~$\calL = (\ftbF_s)_{s \in S}$ be the family of the corresponding Barr extensions.
	It is easy to see that the functor~$\Sigma(\calF)$ weakly preserves pullbacks as well, and, by adjointness, its Barr extension is given by~$\Sigma^\eF(\calL)$.
\end{remark}

\begin{theorem}
	\label{p:4e}
	\pnnote{This also works for relational connectors}
	If all functors in~$\calF$ weakly preserve pullbacks, then the maps~$\Sigma^\eF(-) \colon \catfont{NFam}(\calF) \to \catfont{NLax}(\Sigma{\calF})$ and~${c}^\ast(-) \colon \catfont{NLax}(\Sigma{\calF}) \to \catfont{NFam}(\calF)$ are inverse of each other.
\end{theorem}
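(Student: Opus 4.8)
The plan is to prove that the two monotone maps are mutually inverse by showing that both composites are the identity, exploiting that by \Cref{p:11e} we already have an adjunction $\Sigma^\eF(-)\dashv c^\ast(-)$ (with $\Sigma^\eF(-)$ order-reflecting) that restricts to the normal subclasses $\catfont{NFam}(\calF)$ and $\catfont{NLax}(\Sigma\calF)$. The unit direction $c^\ast\circ\Sigma^\eF=\mathrm{id}$ is a direct calculation: for $\calL=(\eF_s)_{s\in S}$, $t\in S$ and $r\colon X\relto Y$,
\[
  c^\ast(\Sigma^\eF(\calL))_t\, r=\bigvee_{s\in S}(c^t_Y)^\circ\cdot c^s_Y\cdot\eF_s r\cdot(c^s_X)^\circ\cdot c^t_X .
\]
Since coprojections into a coproduct in $\SET$ are injective with pairwise disjoint images, $(c^t)^\circ\cdot c^s$ is the identity when $s=t$ and the empty relation otherwise; hence only the summand $s=t$ survives and the expression collapses to $\eF_t r$. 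This uses nothing beyond disjointness of the summands.

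The substantive direction is $\Sigma^\eF\circ c^\ast=\mathrm{id}$ on $\catfont{NLax}(\Sigma\calF)$. Unfolding the definitions, $\Sigma^\eF(c^\ast(\eF))\,r=\bigvee_{s}\subid{c^s_Y}\cdot\eF r\cdot\subid{c^s_X}$, that is, $\eF r$ cut down to the pairs whose two entries lie in a common summand. The inequality $\Sigma^\eF(c^\ast(\eF))\le\eF$ is exactly the counit of the adjunction, so it remains to show the reverse inequality, which amounts to the claim that a normal lax extension $\eF$ of $\Sigma(\calF)$ relates only elements of equal component. Normality is indispensable here: the preceding example with $\ftC_2\cong\ftC_1+\ftC_1$ exhibits the (non-normal) greatest lax extension $\eF^\top$, which relates the two components and for which the round-trip indeed fails.

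To establish the component-separation claim, recall that a normal lax extension is a normal relational connector and hence natural, i.e.\ $\eF(g^\circ\cdot s\cdot f)=(\Sigma(\calF)g)^\circ\cdot\eF s\cdot\Sigma(\calF)f$ for all functions $f,g$. Given $r\colon X\relto Y$, take the unique maps $!_X\colon X\to 1$ and $!_Y\colon Y\to 1$. In every case $r\le{!_Y}^\circ\cdot{!_X}$ (both sides are empty if $X$ or $Y$ is empty, and ${!_Y}^\circ\cdot{!_X}=X\times Y$ otherwise), so by monotonicity, naturality and normality ($\eF 1_1=1_{\Sigma(\calF)1}$),
\[
  \eF r\le\eF({!_Y}^\circ\cdot{!_X})=(\Sigma(\calF)!_Y)^\circ\cdot\Sigma(\calF)!_X .
\]
Since $\Sigma(\calF)!_X$ and $\Sigma(\calF)!_Y$ preserve components, the right-hand relation relates $u\in\ftF_s X$ to $v\in\ftF_t Y$ only when $s=t$; the same therefore holds for $\eF r$. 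This gives $\eF\le\Sigma^\eF(c^\ast(\eF))$, and together with the counit the desired equality.

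The main obstacle is precisely this counit direction: pinning down why a normal lax extension cannot bridge distinct summands, for which normality (not mere laxness) is the decisive ingredient. The hypothesis that every $\ftF_s$ weakly preserves pullbacks secures the ambient setup---by \Cref{p:10e} it makes $\Sigma(\calF)$ weakly preserve pullbacks and identifies the Barr extension as the least element---thereby ensuring that the two maps genuinely (co)restrict between the normal subclasses; the separation argument itself, however, rests only on naturality and normality.
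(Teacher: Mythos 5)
Your proof is correct, but it reaches the crucial counit direction by a genuinely different route than the paper. The paper proves $\eF \le \Sigma^\eF(c^\ast(\eF))$ by bounding $\eF$ from above via \Cref{p:5}: since each $\ftF_s$, and hence $\Sigma(\calF)$ (\Cref{p:10e}), weakly preserves pullbacks, every normal relational connector satisfies $\eF r \le \ftcF r = \overline{\Sigma(\calF)}\,\hat{r}$, and since the Barr extension of the sum is the sum of the Barr extensions, $u \mathrel{\eF r} v$ forces $u$ and $v$ into a common summand. You obtain the same component-separation far more directly: from $r \le {!_Y}^\circ\cdot{!_X}$, naturality of the relational connector, and normality at the terminal object, you get $\eF r \le (\Sigma(\calF)\,!_Y)^\circ \cdot \Sigma(\calF)\,!_X$, which relates only same-component elements because $\Sigma(\calF)\,!_X$ and $\Sigma(\calF)\,!_Y$ preserve components. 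This is more elementary (no coBarr machinery, no difunctional closure) and, notably, never invokes weak pullback preservation in the separation step; since the (co)restriction of $\Sigma^\eF(-)$ and $c^\ast(-)$ to the normal subclasses is also established in the paper's preceding text without that hypothesis (jointly surjective coprojections for normality of $\Sigma^\eF(\calL)$, monic $c^s$ for normality of $c^\ast(\eF)$), your argument in fact proves the statement for an \emph{arbitrary} family $\calF$, and applies verbatim to normal relational connectors, in line with the paper's own margin note. Your closing remark that the weak-pullback hypothesis is needed to secure the (co)restrictions is thus overly cautious: in the paper that hypothesis is an artifact of the proof strategy via the Barr/coBarr comparison, and your proof shows it can be dropped. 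The unit computation via disjointness of coproduct injections is routine and agrees with what the order-reflecting adjunction of \Cref{p:11e} already yields.
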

\begin{proof}
	By \Cref{p:11e}, it remains only to show that $\Sigma^\eF({c}^\ast(\eF)) \geq \eF$.
	Let~$r \colon X \relto Y$, and let~$u \in \Sigma(\calF) X$ and~$v \in \Sigma(\calF) Y$ s.t.~$u \mathrel{\eF r} v$.
	Suppose that all functors in~$\calF$ weakly preserve pullbacks.
	Then, by \Cref{p:10e}, the functor~$\Sigma(\calF)$ weakly preserves pullbacks.
	Hence, as~$\eF$ is a normal relational connector, by~\Cref{p:5}, $u \mathrel{\overline{\Sigma(\calF)} (\hat{r})} v$, where~$\hat{r}$ is the difunctional closure of~$r$.
	Thus, by \Cref{p:10e}, there is~$s \in S$, $u' \in \ftF_s X$ and~$v' \in \ftF_s Y$ s.t.~$u = c^s_X(u')$ and~$v = c^s_Y(v')$.
	Hence, by definition, $u' (\mathrel{{c}^\ast(\eF) r)} v'$.
	Therefore, by definition of sum of a family of lax extensions, $u \mathrel{\Sigma^\eF({c}^\ast(\eF)) r} v$.
\end{proof}

Thererefore:

\begin{corollary}
	\label{p:1z}
	The greatest normal lax extension of a polynomial set functor determined by a family~$\calH$ of exponential functors is given by the sum of the family of the greatest normal lax extensions of the exponential functors in~$\calH$.
\end{corollary}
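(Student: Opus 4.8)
The plan is to derive the statement entirely from the order isomorphism furnished by \Cref{p:4e}, so that no fresh coalgebraic computation is needed. First I would record that every exponential functor $\ftH_A=(-)^A$ is a right adjoint (to $A\times(-)$) and hence preserves all limits; in particular it weakly preserves pullbacks and preserves inverse images. By \Cref{thm:main}, each such $\ftH_A$ therefore admits a greatest normal lax extension. This supplies the coordinatewise greatest elements that the rest of the argument will assemble.

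Next, since by hypothesis the family $\calH$ consists entirely of exponential functors, every functor in $\calH$ weakly preserves pullbacks, so \Cref{p:4e} applies with $\calF=\calH$. It yields that the monotone maps $\Sigma^\eF(-)\colon\catfont{NFam}(\calH)\to\catfont{NLax}(\Sigma(\calH))$ and ${c}^\ast(-)\colon\catfont{NLax}(\Sigma(\calH))\to\catfont{NFam}(\calH)$ are inverse to each other, i.e.\ form an order isomorphism between the poset of families of normal lax extensions of $\calH$ (ordered pointwise) and the poset of normal lax extensions of the polynomial functor $\Sigma(\calH)$.

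Then I would identify the greatest element of $\catfont{NFam}(\calH)$. Because the order on $\catfont{NFam}(\calH)$ is pointwise and each exponential functor in $\calH$ admits a greatest normal lax extension by the first step, the family $\calL^{\top}$ whose $s$-th coordinate is the greatest normal lax extension of the corresponding exponential functor is the greatest element of $\catfont{NFam}(\calH)$. An order isomorphism transports greatest elements to greatest elements and, in particular, transfers their existence; hence $\Sigma^\eF(\calL^{\top})$ is the greatest element of $\catfont{NLax}(\Sigma(\calH))$. By the definition of $\Sigma^\eF(-)$ this is exactly the sum of the greatest normal lax extensions of the exponential functors in $\calH$, which is the claimed description, and we simultaneously obtain that the greatest normal lax extension of $\Sigma(\calH)$ exists.

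The proof is short because all the substantive work is packaged inside \Cref{p:4e}; the only points requiring care are verifying its hypotheses (which reduce to each $\ftH_A$ weakly preserving pullbacks, immediate from right-adjointness) and the purely order-theoretic observation that an isomorphism of posets carries the pointwise-greatest family to the greatest normal lax extension. The main obstacle, such as it is, is ensuring that a greatest element of $\catfont{NFam}(\calH)$ genuinely exists in the first place, and this is precisely what \Cref{thm:main} delivers when applied coordinatewise to the exponential functors.
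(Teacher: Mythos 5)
Your proposal is correct and matches the paper's intended argument: the paper derives \Cref{p:1z} directly from the order isomorphism of \Cref{p:4e} (applicable because exponential functors preserve limits, hence weakly preserve pullbacks), transporting the coordinatewise-greatest family of normal lax extensions to the greatest normal lax extension of $\Sigma(\calH)$. Your appeal to \Cref{thm:main} for the existence of each coordinatewise greatest element is sound ($\ftH_A$ preserves inverse images, being a right adjoint), and is interchangeable with the paper's explicit lattice description of $\catfont{NLax}(\ftH_A)$ via \Cref{p:2e}.
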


Interestingly, as a consequence of our results, we obtain that the usual notion of (bi)simulation for stream systems with termination is the only sound notion induced by a normal lax extension:

\begin{corollary}
	\label{p:5e}
	For all sets~$C$ and~$B$, the functor~$C + B \times \ftId \colon \SET \to \SET$ admits a unique normal lax extension.
\end{corollary}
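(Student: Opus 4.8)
The plan is to exhibit $C + B\times\ftId$ as a polynomial functor and then reduce uniqueness of its normal lax extension to a finite count of normal lax extensions of its exponential summands, via \Cref{p:4e}. First I would record the decomposition $C \cong \sum_{c\in C}\ftH_\emptyset$ (since $X^\emptyset\cong 1$, the constant functor on $C$ is a $C$-fold coproduct of $\ftH_\emptyset$) and $B\times\ftId \cong \sum_{b\in B}\ftH_1$ (since $\ftH_1\cong\ftId$), so that $C + B\times\ftId \cong \Sigma(\calH)$, where $\calH$ is the family indexed by $C\sqcup B$ with $\ftF_c = \ftH_\emptyset$ for $c\in C$ and $\ftF_b = \ftH_1$ for $b\in B$. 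All these exponential functors preserve limits and hence weakly preserve pullbacks, so \Cref{p:4e} applies and yields that $\Sigma^\eF(-)$ and $c^\ast(-)$ are mutually inverse between $\catfont{NFam}(\calH)$ and $\catfont{NLax}(\Sigma\calH)$; that is, normal lax extensions of $C+B\times\ftId$ correspond bijectively to families assigning to each index a normal lax extension of the corresponding exponential functor.

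It then suffices to show that each of $\ftH_\emptyset$ and $\ftH_1$ admits exactly one normal lax extension, for then $\catfont{NFam}(\calH)$ is a singleton and both existence and uniqueness of the normal lax extension of $\Sigma(\calH)$ follow from the bijection. By \Cref{p:2e} these normal lax extensions are in bijection with the upward-closed submonoids of $\REL(A,A)$ consisting solely of normal relations, for $A=\emptyset$ and $A=1$ respectively. For $A=\emptyset$, the monoid $\REL(\emptyset,\emptyset)$ is the one-element monoid $\{1_\emptyset\}$, whose unique submonoid consists of the vacuously reflexive, hence normal, relation $1_\emptyset$; so $\ftH_\emptyset$ has a unique normal lax extension. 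For $A=1$, the monoid $\REL(1,1)$ has exactly the two elements $1_1$ and $\emptyset$; here I would check that $\emptyset$ is not normal (its difunctional closure is $\emptyset$, which is not reflexive on the one-point set), whereas $1_1$ is normal, so the only upward-closed submonoid of normal relations is $\{1_1\}$, giving $\ftH_1=\ftId$ a unique normal lax extension.

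The only genuinely delicate point is the applicability of \Cref{p:4e}, which hinges on all functors in $\calH$ weakly preserving pullbacks; this is immediate here because exponential functors are right adjoints to $(-)\times A$ and hence preserve limits, as already noted for $\ftH_A$. The remaining work, namely the polynomial decomposition and the two finite computations of submonoids of $\REL(A,A)$, is routine, with the normality bookkeeping for the $A=1$ case being the single spot requiring care.
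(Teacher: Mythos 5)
Your proof is correct and follows essentially the same route as the paper's, which likewise deduces uniqueness of the normal lax extensions of $\ftH_\emptyset$ (the constant functor $1$) and $\ftH_1\cong\ftId$ from \Cref{p:2e} and then concludes via \Cref{p:4e}. You merely spell out the polynomial decomposition of $C+B\times\ftId$ and the two finite submonoid computations that the paper's terse proof leaves implicit.
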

\begin{proof}
	From~\Cref{p:2e} with $A = \varnothing$ and $A = 1$ we conclude that the constant functor $1$ and the identity functor admit a unique normal lax extension, respectively.
	Now, the claim follows from~\Cref{p:4e}.
\end{proof}

In particular, the maximally permissive notion of simulation induced by a lax extension for deterministic automata can be obtained as follows.

\begin{corollary}
	Let~$A,B$ and~$C$ be sets.
	The greatest normal lax extension of~$C+B\times \ftH_A$ is given by postcomposing the Barr extension of~$C+B\times \ftId$ with the greatest normal lax extension of~$\ftH_A$.
\end{corollary}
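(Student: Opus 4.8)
The plan is to realize $C+B\times\ftH_A$ as a polynomial functor and invoke \Cref{p:1z}. Setting $S = C \sqcup B$, with $\ftF_c = \ftH_\emptyset$ (the constant functor to $1$) for $c \in C$ and $\ftF_b = \ftH_A$ for $b \in B$, we have $C+B\times\ftH_A = \Sigma(\calF)$ for the family $\calF = (\ftF_s)_{s \in S}$ of exponential functors. Thus \Cref{p:1z} applies, and the greatest normal lax extension of $C+B\times\ftH_A$ is $\Sigma^\eF(\calL)$, where the $s$-component $\eF_s$ of $\calL$ is the greatest normal lax extension of $\ftF_s$. For $b \in B$ this is the greatest normal lax extension $\eF_A$ of $\ftH_A$; for $c \in C$, by \Cref{p:5e} with $A = \emptyset$ the functor $\ftH_\emptyset$ admits a \emph{unique} normal lax extension, which, since $\ftH_\emptyset$ weakly preserves pullbacks, is its Barr extension $\overline{\ftH_\emptyset}$, sending every relation (empty or not) to the full relation on the singleton $\ftH_\emptyset X$.

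Next I would unfold the relator on the right-hand side. Recalling the factorization $C+B\times\ftH_A = (C+B\times\ftId)\circ\ftH_A$, the relator in question is the composite sending $r \colon X \relto Y$ to $\overline{(C+B\times\ftId)}(\eF_A r)$, where $\eF_A r \colon \ftH_A X \relto \ftH_A Y$. Since $C+B\times\ftId$ is itself polynomial, \Cref{p:10e} gives that its Barr extension is the sum of the Barr extensions of its constant- and identity-summands; concretely, for $s \colon U \relto V$ the relation $\overline{(C+B\times\ftId)}\,s$ relates $c$ to $c$ for each $c \in C$, and relates $(b,u)$ to $(b,v)$ iff $u \mathrel{s} v$. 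Substituting $s = \eF_A r$ (so $U = \ftH_A X$, $V = \ftH_A Y$) produces a relation linking $c$ to $c$ for $c \in C$ and linking $(b,f)$ to $(b,g)$ iff $f \mathrel{\eF_A r} g$.

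Finally I would match this against $\Sigma^\eF(\calL)\,r = \bigvee_{s \in S} c^s_Y \cdot \eF_s r \cdot (c^s_X)^\circ$. The decisive observation is that the coproduct $\Sigma(\calF)X = C+B\times\ftH_A X$ and the coproduct underlying $(C+B\times\ftId)(\ftH_A X)$ are the very same set with the very same coprojections: for $c \in C$ the coprojection picks out $c$ (as $\ftH_\emptyset X = \{\ast\}$), and for $b \in B$ it sends $f \in \ftH_A X$ to $(b,f)$. Since $\eF_c = \overline{\ftH_\emptyset}$ sends $r$ to the full relation on $\{\ast\}$, the summands with $s = c \in C$ again relate $c$ to $c$, whereas the summands with $s = b \in B$ contribute $c^b_Y \cdot \eF_A r \cdot (c^b_X)^\circ$, i.e.\ relate $(b,f)$ to $(b,g)$ iff $f \mathrel{\eF_A r} g$. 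The two pointwise formulas therefore coincide, which is the claim.

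The coprojection bookkeeping is entirely routine; the one step warranting care, and the main potential obstacle, is the treatment of the constant summands. One must verify that the unique normal lax extension of $\ftH_\emptyset$ is exactly its Barr extension on \emph{all} relations — including the empty relation — so that the constant part of the sum construction $\Sigma^\eF(\calL)$ agrees verbatim with the constant part of the outer Barr extension $\overline{(C+B\times\ftId)}$; everything else is a direct comparison of the two formulas above.
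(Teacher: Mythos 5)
Your proof is correct and takes essentially the same route as the paper's: decompose $C+B\times\ftH_A$ as a polynomial functor, apply \Cref{p:1z}, and use \Cref{p:10e} together with the uniqueness of the normal lax extension of the constant summands (including the empty-relation case, which you rightly flag) to match the sum $\Sigma^\eF(\calL)$ with the composite relator $r\mapsto\overline{(C+B\times\ftId)}(\eF_A r)$. The only blemish is a citation slip: the uniqueness of the normal lax extension of $\ftH_\emptyset$ follows from \Cref{p:2e} with $A=\emptyset$ (as in the paper's proof of \Cref{p:5e}), not from \Cref{p:5e} itself, which has no parameter $A$.
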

\begin{proof}
	Straightforward consequence of \Cref{p:1z}, \Cref{p:5e} and the fact, mentioned in \Cref{p:10e}, that whenever every functor in the family $\calF$ weakly preserves pullbacks, $\overline{\Sigma\calF} = \Sigma^\eF(\calL)$, where $\calL$ is the the family of lax extensions of  $\calF$ consisting of the corresponding Barr extensions.
\end{proof}

\begin{example}[Automata]\label{expl:minimization}
	Let us reinterpret~\Cref{p:7e}  in automata-theoretic terms.
	Let~$S = \{s_0,\ldots,s_{n-1}\}\cup \{t_0,\ldots,t_{m-1}\}$ be the state space
	of a deterministic automaton with all states accepting, with~$n$ and~$m$ being
	mutually prime, and with~$\{a,b\}$ being the alphabet of actions. The
	transition function is as follows: the~$a$-transitions connect every~$s_i$ with
	$s_{(i+1)\!\!\mod n}$ and every~$t_j$ with~$t_{(j+1)\!\!\mod m}$; the
	$b$-transitions connect every~$s_i$ with~$t_0$ and every~$t_j$ with~$s_0$.
	Since all states are accepting, they are all bisimilar, and thus the minimal
	automaton has one state. A bisimulation relation in the usual sense --  which is induced by Barr extension ($\widehat{\ftH}_2^{\calA_\bot}$ in \Cref{p:7e}) -- for showing equivalence of
	any~$s_n$ and any~$t_m$ must include every pair~$(s_i,t_j)$, and thus is
	potentially quadratic in size. The following non-standard bisimulation of
	linear size for the greatest normal lax extension~$\widehat{\ftH}_2^{\calA_\top}$ from~\Cref{p:7e} can be used instead:~$R = S\times
		\{s_0,t_0\}\cup \{s_0,t_0\}\times S$. This is essentially because either~$s_0$
	or~$t_0$ is reachable by a~$b$-transition from any state in one step, and they
	are related to everything by~$R$.
\end{example}

To conclude, we derive the notion of twisted bisimulation on
labelled transition systems (LTS) mentioned in the introduction as follows. For
the sake of readability, we continue to restrict to the case where the set of
labels is~$2=\{a,b\}$. Then~$2$-labelled transition systems are coalgebras for
the functor~$\ftF=\ftH_2\cdot\ftP$ where~$\ftP$ is the covariant powerset
functor.
For~$\calA$ being one of the upwards closed submonoids of
\(\REL(2,2)\) listed in \Cref{p:7e}, we obtain a normal lax
extension~$\widehat{\ftF}^\calA$ of~$\ftF$ by composing
$\widehat{\ftH}_2^{\calA}$ with~$\ftbP$ (cf.~\Cref{p:18d}), i.e.\
\begin{equation*}
	\widehat{\ftF}^\calA r=\widehat{\ftH}_2^{\calA}(\ftbP r)\qquad\text{for~$r\colon X\relto Y$}.
\end{equation*}
Then, the most permissive notion of twisted bisimulation is the one
induced by~$\widehat{\ftF}^{\calA_\top}$. In terms of the standard
representation of~$2$-labelled LTS as pairs~$(X,(\to_l)_{l\in 2})$
consisting of a set~$X$ of states and transition relations
$\to_l\subseteq X\times X$ (always denoted by the same symbol if no
confusion is likely), this notion is explicitly described as follows:
Given LTS~$(X,(\to_u)_{u\in 2})$, $(Y,(\to_u)_{u\in 2})$, a relation
$r\colon X\relto Y$ is a \emph{twisted bisimulation} (for~$\calA_\top$)
if whenever~$x\mathrel{r}y$, then one of the following clauses holds
(cf.\ \Cref{fig:bis}):
\begin{enumerate}
	\item Whenever~$x\to_u x'$, then there exists~$y\to_uy'$ such that~$x'\mathrel{r}y'$,
	      and whenever~$y\to_uy'$, then there exists~$x\to_ux'$ such that
	     $x'\mathrel{r}y'$, for~$u\in 2$.
	\item For~$(u,v)\in\{(a,b),(b,a),(b,b)\}$, whenever~$x\to_u x'$, then there exists
	     $y\to_v y'$ such that~$x'\mathrel{r}y'$, and whenever~$y\to_vy'$, then there
	      exists~$x\to_u x'$ such that~$x'\mathrel{r}y'$.
	\item Dito, for~$(u,v)\in\{(a,b),(b,a),(a,a)\}$.
\end{enumerate}
(In particular, every bisimulation in the standard sense is a twisted
bisimulation.)  Since~$\widehat{\ftF}^{\calA_\top}$ is a normal lax
extension, twisted bisimulation is sound (and complete) for standard
bisimilarity, i.e.\ two states are bisimilar if (and only if) they are
related by a twisted
bisimulation~\cite[Theorem~11]{MartiVenema15}. Our statement from the
introduction to the effect that twisted bisimulations on LTS can be
smaller than standard bisimulations is illustrated by
\Cref{expl:minimization}, as a deterministic automaton with all states
accepting is in particular an LTS.  
Generally speaking, twisted bisimulations can have advantages when the successors of a state under different labels are bisimilar.
\pnnote{the successors of states under different labels like $(p,q)$ under $a,b$ in the example?}
This is illustrated in the following elementary example.

\begin{example}
	Consider the following LTS.
	\begin{center}
		\begin{tikzcd}[row sep=large, column sep=large]
            		x & y \\
                  	p & q
                        \ar[from=1-1, to=1-1, "a"', loop]
                        \ar[from=1-1, to=1-2, "b", bend left]
                        \ar[from=1-2, to=1-2, "a"', loop]
                        \ar[from=1-2, to=1-1, "b", bend left]
                        \ar[from=2-1, to=1-1, "a", bend left]
                        \ar[from=2-1, to=1-1, "b"']
                        \ar[from=2-2, to=1-1, "a", bend left=15]
                        \ar[from=2-2, to=1-2, "b"]
                  \end{tikzcd}
	\end{center}
	Then, the smallest bisimulation in the usual sense relating~$p$ and~$q$ is $\{(p,q),(x,x),(x,y),(y,y),(y,x)\}$, while under twisted bisimulation, we can make do with the strictly smaller relation $\{(p,q),(x,x),(x,y),(y,y)\}$.
\end{example}
Since the functor
$\ftH_A \cdot \ftP$ preserves weak pullbacks, due to
\Cref{p:5}(\ref{p:8e}) and \Cref{p:5}(\ref{p:10}), another way of
thinking about twisted bisimulations is that they form a subclass of
bisimulations (in the usual sense) up to difunctionality that, unlike
the full class of bisimulations up to difunctionality, is closed under
relational composition.
This phenomenon generalizes as follows:

\begin{corollary}
      Let $\ftF \colon \SET \to \SET$ be a functor that admits a normal lax extension, and let $\eF_0 \colon \REL \to \REL$ be its least normal lax extension.
      Then, for every normal lax extension $\eF$ of $\ftF$, the class of $\eF$-simulations is contained in the class of $\eF_0$-simulations up-to difunctionality. 
\end{corollary}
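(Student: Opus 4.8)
The plan is to reduce the statement to the single fact that any two normal lax extensions of~$\ftF$ agree on difunctional relations, and then to transport an~$\eF$-simulation to its difunctional closure by monotonicity. First I would unfold the two notions at play. By definition, a relation~$r \colon X \relto Y$ between coalgebras~$(X,\alpha)$ and~$(Y,\beta)$ is an~$\eF$-simulation iff~$r \le \beta^\circ \cdot \eF r \cdot \alpha$, and it is an~$\eF_0$-simulation up-to difunctionality iff it is an~$\eF_0 \cdot \ftU$-simulation, where~$\ftU$ is the relator of~$\ftId$ sending a relation to its difunctional closure; explicitly, iff~$r \le \beta^\circ \cdot \eF_0 \hat r \cdot \alpha$, with~$\hat r$ the difunctional closure of~$r$. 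Thus the goal is the implication~$r \le \beta^\circ \cdot \eF r \cdot \alpha \Rightarrow r \le \beta^\circ \cdot \eF_0 \hat r \cdot \alpha$.

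The key step is to observe that~$\eF$ and~$\eF_0$ coincide on difunctional relations. Since every lax extension is a relational connector, both~$\eF$ and~$\eF_0$ are normal relational connectors and are therefore difunctionally functorial (cf.\ \autoref{fig:taxonomy}); equivalently, one applies naturality together with normality, $\eR 1_A = 1_{\ftF A}$. Consequently, for any difunctional relation~$d = g^\circ \cdot f$ with~$f,g$ functions into a common set, $\eF d = (\ftF g)^\circ \cdot \ftF f = \eF_0 d$. In particular~$\eF \hat r = \eF_0 \hat r$, since~$\hat r$ is difunctional. I would also note in passing that this common value is~$\ftcF r$, so that~$\eF_0 \cdot \ftU$ is precisely the coBarr relator, placing the statement squarely in line with the discussion following \Cref{p:5}.

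Given the key fact, the argument is immediate. As~$r \le \hat r$ and~$\eF$ is monotone, $\eF r \le \eF \hat r$, whence, by monotonicity of relational composition,
\begin{displaymath}
  r \le \beta^\circ \cdot \eF r \cdot \alpha \le \beta^\circ \cdot \eF \hat r \cdot \alpha = \beta^\circ \cdot \eF_0 \hat r \cdot \alpha,
\end{displaymath}
which exhibits~$r$ as an~$\eF_0$-simulation up-to difunctionality.

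There is no substantial obstacle here: the result is bookkeeping once the coincidence of normal lax extensions on difunctional relations is isolated, and that coincidence is exactly where normality enters (through~$\eR 1_A = 1_{\ftF A}$ in the naturality square). The only point meriting a line of care is well-definedness of the quantities involved — namely that~$\ftF$, admitting a normal lax extension, preserves~$1/4$-iso pullbacks, so that~$\ftcF$, and hence~$\eF_0 \hat r$, is independent of the chosen factorization of~$\hat r$. This is not even needed for the inclusion itself, however, since throughout I apply both~$\eF$ and~$\eF_0$ to one and the same factorization of~$\hat r$.
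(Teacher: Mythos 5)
Your proof is correct and follows what is essentially the paper's (implicit) argument: the paper states this corollary without a separate proof, relying precisely on the fact—used repeatedly in the text, e.g.\ in the proofs of \Cref{p:19e} and \Cref{thm:main}—that normal lax extensions, being normal relational connectors and hence difunctionally functorial, coincide on difunctional relations, after which monotonicity gives $r \le \beta^\circ \cdot \eF r \cdot \alpha \le \beta^\circ \cdot \eF \hat r \cdot \alpha = \beta^\circ \cdot \eF_0 \hat r \cdot \alpha$. Your closing remark is also on point: no well-definedness of the coBarr relator is needed for the inclusion itself, since both extensions are evaluated on a single chosen factorization of $\hat r$.
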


Therefore, having a nice description of the least normal lax extension, which  we known that is given by the laxification of the Barr relax extension \cite[Corollary 4.2]{GoncharovHofmannEtAl25}, can help us to understand the simulations induced by each normal lax extension.
However, we stress that, as it can be observed in \Cref{p:666}, the class of $\eF$-simulations up to difunctionality usually is \emph{not} closed under composition. 
In contrast, the greatest normal lax extension induces the most permissive notion of simulation that allows reasoning compositionally about behavioural equivalence.
\pnnote{Does this make sense? I don't really like the last sentence}

\section{Conclusions}

\noindent We have analysed aspects of notions of (bi)simulation induced by
relators and lax extensions, reinforcing the view that a given functor
(i.e.\ a given system type) can be associated with multiple relevant
notions of (bi)simulation. By establishing key results on the
existence and properties of lax extensions, we have clarified their
role in certifying behavioural equivalence across diverse system
types. Notably, we have demonstrated that functors preserving 1/4-iso
pullbacks admit a sound and complete notion of  bisimulation induced by
the coBarr relator, and that normality is essentially a necessary
condition for soundness.  Furthermore, we have shown that functors
preserving inverse images possess a greatest normal lax extension,
providing a maximally permissive notion of bisimulation.

In a case study on functors of the form~$(-)^A$, which model
$A$-labelled transitions, we have introduced the notion of twisted
bisimulation and demonstrated its greater permissiveness compared to
standard bisimilarity, while retaining soundness. This result can
potentially offer new tools for reasoning about state-based systems of
various branching types, and allows for smaller bisimulations
certifying behavioural equivalence.

Our work contributes to the general theory of bisimulations by
refining the structural conditions under which sound and complete
bisimulation notions exist. One direction for future investigation is
to identify further sufficient conditions for a functor to admit a
greatest normal lax extension; one candidate condition is weak
preservation of 1/4-iso and 4/4-epi pullbacks, which has recently been
shown to guarantee existence of a normal lax
extension~\cite{GoncharovHofmannEtAl25}.
Such endeavour will require a completly new proof strategy because, as we have shown, preservation of inverse images is crucial for the techniques used in this paper.
A further important open
question is the uniqueness of normal lax extensions and a
characterization of functors whose Barr-bisimilarity is
complete. %
\bibliographystyle{IEEEtran}
\bibliography{mpbbibl}

\providecommand{\noopsort}[1]{}\providecommand{\noopsort}[1]{}
\begin{thebibliography}{10}
\providecommand{\url}[1]{#1}
\csname url@samestyle\endcsname
\providecommand{\newblock}{\relax}
\providecommand{\bibinfo}[2]{#2}
\providecommand{\BIBentrySTDinterwordspacing}{\spaceskip=0pt\relax}
\providecommand{\BIBentryALTinterwordstretchfactor}{4}
\providecommand{\BIBentryALTinterwordspacing}{\spaceskip=\fontdimen2\font plus
\BIBentryALTinterwordstretchfactor\fontdimen3\font minus
  \fontdimen4\font\relax}
\providecommand{\BIBforeignlanguage}[2]{{%
\expandafter\ifx\csname l@#1\endcsname\relax
\typeout{** WARNING: IEEEtran.bst: No hyphenation pattern has been}%
\typeout{** loaded for the language `#1'. Using the pattern for}%
\typeout{** the default language instead.}%
\else
\language=\csname l@#1\endcsname
\fi
#2}}
\providecommand{\BIBdecl}{\relax}
\BIBdecl

\bibitem{LarsenSkou91}
K.~G. Larsen and A.~Skou, ``Bisimulation through probabilistic testing,''
  \emph{Inf.\ Comput.}, vol.~94, no.~1, pp. 1--28, 1991.

\bibitem{KupfermanSattlerEtAl02}
\BIBentryALTinterwordspacing
O.~Kupferman, U.~Sattler, and M.~Y. Vardi, ``The complexity of the graded
  {$\mu$}-calculus,'' in \emph{Automated Deduction, CADE 02}, ser. LNCS,
  A.~Voronkov, Ed., vol. 2392.\hskip 1em plus 0.5em minus 0.4em\relax Springer,
  2002, pp. 423--437. [Online]. Available:
  \url{http://link.springer.de/link/service/series/0558/bibs/2392/23920423.htm}
\BIBentrySTDinterwordspacing

\bibitem{BuchholzKemper10}
P.~Buchholz and P.~Kemper, ``Model checking for a class of weighted automata,''
  \emph{Discret.\ Event Dyn.\ Syst.}, vol.~20, no.~1, pp. 103--137, 2010.

\bibitem{AlurHenzingerEtAl02}
\BIBentryALTinterwordspacing
R.~Alur, T.~A. Henzinger, and O.~Kupferman, ``Alternating-time temporal
  logic,'' \emph{J.\ ACM}, vol.~49, pp. 672--713, 2002. [Online]. Available:
  \url{http://doi.acm.org/10.1145/585265.585270}
\BIBentrySTDinterwordspacing

\bibitem{Chellas80}
B.~F. Chellas, \emph{Modal Logic - An Introduction}.\hskip 1em plus 0.5em minus
  0.4em\relax Cambridge University Press, 1980.

\bibitem{Peleg87}
D.~Peleg, ``Concurrent dynamic logic,'' \emph{J.\ {ACM}}, vol.~34, no.~2, pp.
  450--479, 1987.

\bibitem{Parikh83}
R.~Parikh, ``Propositional game logic,'' in \emph{Foundations of Computer
  Science, FOCS 1983}.\hskip 1em plus 0.5em minus 0.4em\relax {IEEE} Computer
  Society, 1983, pp. 195--200.

\bibitem{Rutten00}
J.~J. M.~M. Rutten, ``Universal coalgebra: a theory of systems,'' \emph{Theor.\
  Comput.\ Sci.}, vol. 249, no.~1, pp. 3--80, 2000.

\bibitem{Klin09}
B.~Klin, ``Structural operational semantics for weighted transition systems,''
  in \emph{Semantics and Algebraic Specification, Essays Dedicated to Peter D.
  Mosses on the Occasion of His 60th Birthday}, ser. LNCS, J.~Palsberg, Ed.,
  vol. 5700.\hskip 1em plus 0.5em minus 0.4em\relax Springer, 2009, pp.
  121--139.

\bibitem{AczelMendler89}
P.~Aczel and N.~P. Mendler, ``A final coalgebra theorem,'' in \emph{Category
  Theory and Computer Science, Manchester, UK, September 5-8, 1989,
  Proceedings}, ser. Lecture Notes in Computer Science, D.~H. Pitt, D.~E.
  Rydeheard, P.~Dybjer, A.~M. Pitts, and A.~Poign{\'{e}}, Eds., vol. 389.\hskip
  1em plus 0.5em minus 0.4em\relax Springer, 1989, pp. 357--365.

\bibitem{Thijs96}
A.~Thijs, ``Simulation and fixpoint semantics,'' Ph.D. dissertation, University
  of Groningen, 1996.

\bibitem{BackhouseBruinEtAl91}
R.~C. Backhouse, P.~J. de~Bruin, P.~F. Hoogendijk, G.~Malcolm, E.~Voermans, and
  J.~van~der Woude, ``Polynomial relators (extended abstract),'' in
  \emph{Algebraic Methodology and Software Technology, AMAST 1991}, ser.
  Workshops in Computing, M.~Nivat, C.~Rattray, T.~Rus, and G.~Scollo,
  Eds.\hskip 1em plus 0.5em minus 0.4em\relax Springer, 1991, pp. 303--326.

\bibitem{Levy11}
P.~B. Levy, ``Similarity quotients as final coalgebras,'' in \emph{Foundations
  of Software Science and Computational Structures, {FOSSACS} 2011}, ser. LNCS,
  M.~Hofmann, Ed., vol. 6604.\hskip 1em plus 0.5em minus 0.4em\relax Springer,
  2011, pp. 27--41.

\bibitem{Barr70}
M.~Barr, ``Relational algebras,'' in \emph{Reports of the Midwest Category
  Seminar IV}, ser. Lect.\ Notes Math., no. 137.\hskip 1em plus 0.5em minus
  0.4em\relax Springer, 1970, pp. 39--55.

\bibitem{MartiVenema12}
J.~Marti and Y.~Venema, ``Lax extensions of coalgebra functors,'' in
  \emph{Coalgebraic Methods in Computer Science, CMCS 2021}, ser. LNCS,
  D.~Pattinson and L.~Schr{\"{o}}der, Eds., vol. 7399.\hskip 1em plus 0.5em
  minus 0.4em\relax Springer, 2012, pp. 150--169.

\bibitem{MartiVenema15}
\BIBentryALTinterwordspacing
------, ``Lax extensions of coalgebra functors and their logic,'' \emph{Journal
  of Computer and System Sciences}, vol.~81, no.~5, pp. 880--900, 2015.
  [Online]. Available: \url{https://doi.org/10.1016/j.jcss.2014.12.006}
\BIBentrySTDinterwordspacing

\bibitem{Trnkova80}
V.~Trnkov{\'a}, ``General theory of relational automata,'' \emph{Fund.\
  Inform.}, vol.~3, no.~2, pp. 189--233, 1980.

\bibitem{GoncharovHofmannEtAl25}
S.~Goncharov, D.~Hofmann, P.~Nora, L.~Schröder, and P.~Wild,
  ``Identity-preserving lax extensions and where to find them,'' in
  \emph{Theoretical Aspects of Computer Science, STACS 2025}, ser. LIPIcs,
  O.~Beyersdorff, M.~Pilipczuk, E.~Pimentel, and N.~K. Thang, Eds., vol.
  327.\hskip 1em plus 0.5em minus 0.4em\relax Schloss Dagstuhl --
  Leibniz-Zentrum f{\"{u}}r Informatik, 2025.

\bibitem{HesselinkThijs00}
\BIBentryALTinterwordspacing
W.~H. Hesselink and A.~Thijs, ``Fixpoint semantics and simulation,''
  \emph{Theor. Comput. Sci.}, vol. 238, no. 1-2, pp. 275--311, 2000. [Online].
  Available: \url{https://doi.org/10.1016/S0304-3975(98)00176-5}
\BIBentrySTDinterwordspacing

\bibitem{HughesJacobs04}
J.~Hughes and B.~Jacobs, ``Simulations in coalgebra,'' \emph{Theor.\ Comput.\
  Sci.}, vol. 327, no. 1-2, pp. 71--108, 2004.

\bibitem{HofmannSealEtAl14}
D.~Hofmann, G.~J. Seal, and W.~Tholen, Eds., \emph{Monoidal {T}opology. {A}
  {C}ategorical {A}pproach to {O}rder, {M}etric, and {T}opology}.\hskip 1em
  plus 0.5em minus 0.4em\relax Cambridge University Press, 2014, vol. 153.

\bibitem{Gavazzo18}
F.~Gavazzo, ``Quantitative behavioural reasoning for higher-order effectful
  programs: Applicative distances,'' in \emph{Logic in Computer Science, {LICS}
  2018}, A.~Dawar and E.~Gr{\"{a}}del, Eds.\hskip 1em plus 0.5em minus
  0.4em\relax {ACM}, 2018, pp. 452--461.

\bibitem{WildSchroder20}
P.~Wild and L.~Schr{\"{o}}der, ``Characteristic logics for behavioural metrics
  via fuzzy lax extensions,'' in \emph{Concurrency Theory, {CONCUR} 2020}, ser.
  LIPIcs, I.~Konnov and L.~Kov{\'{a}}cs, Eds., vol. 171.\hskip 1em plus 0.5em
  minus 0.4em\relax Schloss Dagstuhl -- Leibniz-Zentrum f{\"{u}}r Informatik,
  2020, pp. 27:1--27:23.

\bibitem{WildSchroder22}
------, ``Characteristic logics for behavioural hemimetrics via fuzzy lax
  extensions,'' \emph{Log.\ Methods Comput.\ Sci.}, vol.~18, no.~2, 2022.

\bibitem{GoncharovHofmannEtAl23}
S.~Goncharov, D.~Hofmann, P.~Nora, L.~Schröder, and P.~Wild, ``A point-free
  perspective on lax extensions and predicate liftings,'' \emph{Mathematical
  Structures in Computer Science}, p. 1–30, 2023.

\bibitem{BaldanBonchiEtAl18}
P.~Baldan, F.~Bonchi, H.~Kerstan, and B.~K{\"{o}}nig, ``Coalgebraic behavioral
  metrics,'' \emph{Log.\ Methods Comput.\ Sci.}, vol.~14, no.~3, 2018.

\bibitem{AdamekHerrlichEtAl90}
\BIBentryALTinterwordspacing
J.~Ad{\'a}mek, H.~Herrlich, and G.~E. Strecker, \emph{Abstract and concrete
  categories: {T}he joy of cats}.\hskip 1em plus 0.5em minus 0.4em\relax John
  Wiley \& Sons Inc., 1990, republished in: Reprints in Theory and Applications
  of Categories, No. 17 (2006) pp.~1--507. [Online]. Available:
  \url{http://tac.mta.ca/tac/reprints/articles/17/tr17abs.html}
\BIBentrySTDinterwordspacing

\bibitem{Riguet48}
J.~Riguet, ``Relations binaires, fermetures, correspondances de {Galois},''
  \emph{Bulletin de la Société Mathématique de France}, vol.~76, pp.
  114--155, 1948.

\bibitem{GummZarrad14}
H.~P. Gumm and M.~Zarrad, ``Coalgebraic simulations and congruences,'' in
  \emph{Coalgebraic Methods in Computer Science, {CMCS} 2014}, ser. LNCS, M.~M.
  Bonsangue, Ed., vol. 8446.\hskip 1em plus 0.5em minus 0.4em\relax Springer,
  2014, pp. 118--134.

\bibitem{RelationalConnectors}
P.~Nora, J.~Rot, L.~Schröder, and P.~Wild, ``Relational connectors and
  heterogeneous simulations,'' in \emph{Foundations of Software Science and
  Computation Structures - 28th International Conference, FoSSaCS 2025},
  D.~Kesner and P.~A. Abdulla, Eds., 2025, to appear. Preprint available on
  arXiv under \url{https://doi.org/10.48550/arXiv.2410.14460}.

\bibitem{HansenKupkeEtAl07}
H.~H. Hansen, C.~Kupke, and E.~Pacuit, ``Bisimulation for neighbourhood
  structures,'' in \emph{{CALCO}}, ser. Lecture Notes in Computer Science, vol.
  4624.\hskip 1em plus 0.5em minus 0.4em\relax Springer, 2007, pp. 279--293.

\bibitem{Barr93}
M.~Barr, ``Terminal coalgebras in well-founded set theory,'' \emph{Theor.\
  Comput. Sci.}, vol. 114, no.~2, pp. 299--315, 1993.

\bibitem{GummSchroder01}
H.~P. Gumm and T.~Schr{\"{o}}der, ``Monoid-labeled transition systems,'' in
  \emph{Coalgebraic Methods in Computer Science, {CMCS} 2001}, ser. ENTCS,
  A.~Corradini, M.~Lenisa, and U.~Montanari, Eds., vol. 44(1).\hskip 1em plus
  0.5em minus 0.4em\relax Elsevier, 2001, pp. 185--204.

\bibitem{DorschMiliusEtAl18}
U.~Dorsch, S.~Milius, L.~Schr{\"{o}}der, and T.~Wi{\ss}mann, ``Predicate
  {L}iftings and {F}unctor {P}resentations in {C}oalgebraic {E}xpression
  {L}anguages,'' in \emph{Coalgebraic Methods in Computer Science, {CMCS}
  2018}, ser. LNCS, C.~C{\^{\i}}rstea, Ed., vol. 11202.\hskip 1em plus 0.5em
  minus 0.4em\relax Springer, 2018, pp. 56--77.

\bibitem{SchubertSeal08}
C.~Schubert and G.~J. Seal, ``Extensions in the theory of lax algebras,''
  \emph{Theory and Applications of Categories}, vol.~21, no.~7, pp. 118--151,
  2008.

\bibitem{Trnkova69}
\BIBentryALTinterwordspacing
V.~Trnkov{\'a}, ``\BIBforeignlanguage{eng}{Some properties of set functors},''
  \emph{\BIBforeignlanguage{eng}{Commentationes Mathematicae Universitatis
  Carolinae}}, vol. 010, no.~2, pp. 323--352, 1969. [Online]. Available:
  \url{http://eudml.org/doc/16330}
\BIBentrySTDinterwordspacing

\end{thebibliography}

\clearpage

\onecolumn

\appendix

\subsection{Useful pullbacks}

\begin{lemma}
  \label{p:400}
  Let~$f\colon X\to A$ and~$g\colon Y\to A$.
  Then, the following diagram  is a pullback square
	\begin{center}
		\begin{tikzcd}
			\dom(r) &[3ex] g[Y]  \\
			X       & A.
			\ar[from=1-1, to=1-2, "f|_{\dom(r)}"]
			\ar[from=1-1, to=2-1, tail]
			\ar[from=1-1, to=2-2, phantom, "\lrcorner", very near start]
			\ar[from=1-2, to=2-2, tail, "j"]
			\ar[from=2-1, to=2-2, "f"']
		\end{tikzcd}
	\end{center}
where~$r = g^\circ\cdot f$ and~$j\colon g[Y]\rightarrowtail A$ is the obvious inclusion.
\end{lemma}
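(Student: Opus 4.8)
The plan is to recognize the claimed square as the standard fact that a preimage along a map is the pullback against the inclusion of a subobject, once the corner object $\dom(r)$ is identified concretely.

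First I would unfold the definition of $\dom(r)$. Since $r = g^\circ\cdot f$, we have $x\mathrel{r}y$ iff $f(x)=g(y)$, so
\begin{displaymath}
\dom(r)=\{x\in X\mid\exists y\in Y.\,f(x)=g(y)\}=\{x\in X\mid f(x)\in g[Y]\}=f^{-1}(g[Y]).
\end{displaymath}
In other words, $\dom(r)$ is exactly the preimage of the subset $g[Y]\subseteq A$ along $f$, and the top horizontal map $f|_{\dom(r)}$ is the corestriction of $f$ to this preimage; it is well-typed into $g[Y]$ precisely because every $x\in\dom(r)$ satisfies $f(x)\in g[Y]$.

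Next I would verify commutativity: for $x\in\dom(r)$ both composites send $x$ to $f(x)\in A$, since $j$ is the inclusion and hence $j\cdot f|_{\dom(r)}$ evaluates to $f(x)$, matching $f$ composed with the left leg $\dom(r)\ito X$. For the universal property, I would take a set $T$ carrying maps $p\colon T\to X$ and $q\colon T\to g[Y]$ with $f\cdot p = j\cdot q$. Then for each $t$ we get $f(p(t))=q(t)\in g[Y]$, so $p(t)\in f^{-1}(g[Y])=\dom(r)$; defining $u\colon T\to\dom(r)$ by $u(t)=p(t)$, now viewed as an element of $\dom(r)$, yields the required factorization through both legs. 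Uniqueness is immediate because the left leg $\dom(r)\ito X$ is monic, which forces any factorization of $p$ to coincide with $u$.

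The verification is entirely routine, and there is no genuine obstacle. The only point demanding a moment of care is the identification $\dom(r)=f^{-1}(g[Y])$ together with the observation that it makes the corestriction $f|_{\dom(r)}$ land in $g[Y]$; once this is in place, the pullback property is just the textbook statement that preimages compute pullbacks along monomorphisms in $\SET$.
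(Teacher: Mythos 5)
Your proof is correct and follows essentially the same route as the paper's: both hinge on the identification $\dom(r)=f^{-1}[g[Y]]$ (which the paper derives relation-algebraically via $\dom(r)=\cod(f^\circ\cdot g)$ and you derive pointwise) and then reduce the claim to the standard fact that preimages along a map compute pullbacks of subset inclusions in $\SET$, which you verify explicitly while the paper simply invokes it by constructing the pullback of $f$ along $j$.
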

\begin{proof}
Noting the image factorization~$Y\to g[Y] \xto{j} A$ of~$g$, we form the
pullback
\begin{equation*}
\begin{tikzcd}[column sep=5em, row sep=normal]
f^\mone[g[Y]]
	\rar["f|_{f^\mone[g[Y]]}"]
	\dar[tail]
	\ar[dr,phantom, "\lrcorner", very near start]
&
g[Y]
	\dar[tail,"j"]\\
X
	\rar["f"']
&
A
\end{tikzcd}
\end{equation*}
We are left to show that~$f^\mone[g[Y]]=\dom(r)$. Indeed, $\dom(r) = \cod(r^\circ) = \cod((g^\circ\cdot f)^\circ) = \cod(f^\circ\cdot g) = f^\mone[g[Y]]$.
\end{proof}

\begin{lemma}[e.g.~\cite{GoncharovHofmannEtAl25}[Lemma 2.2]]
	\label{p:48}
	Let \(r \colon X \relto Y\) be a relation.
	Then the following  are equivalent:
	\begin{tfae}
		\item \label{p:51} \(r\) is difunctional;
		\item \label{p:54} for every span~$X\xfrom{\pi_1} R\xto{\pi_2} Y$
		such that \(r = \pi_2 \cdot \pi_1^\circ\), the pushout square
		\begin{center}
			\begin{tikzcd}
				R & Y \\
				X & O
				\ar[from=1-1, to=1-2, "\pi_2"]\
				\ar[from=1-1, to=2-1, "\pi_1"']
				\ar[from=1-1, to=2-2, phantom, very near end, "\ulcorner"]
				\ar[from=2-1, to=2-2, "p_1"']
				\ar[from=1-2, to=2-2, "p_2"]
			\end{tikzcd}
		\end{center}
		is a weak pullback.
	\end{tfae}
\end{lemma}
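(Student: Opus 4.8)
The plan is to reduce the whole equivalence to a single relational identity: if $(p_1\colon X\to O,\,p_2\colon Y\to O)$ denotes the pushout of the given span, then $p_2^\circ\cdot p_1 = \hat r$, the difunctional closure of $r$. First I would fix an arbitrary span $X\xfrom{\pi_1} R\xto{\pi_2} Y$ with $r=\pi_2\cdot\pi_1^\circ$ and form its pushout $(p_1,p_2)$. Since pushouts in $\SET$ are computed as the quotient of $X+Y$ by the least equivalence relation identifying $\pi_1(w)$ with $\pi_2(w)$ for each $w\in R$, and the set of generating pairs is exactly $\{(x,y)\mid x\mathrel r y\}$, both the pushout and the resulting relation $p_2^\circ\cdot p_1\colon X\relto Y$ (given by $x\mathrel{(p_2^\circ\cdot p_1)}y\iff p_1(x)=p_2(y)$) depend only on $r$. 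Tracing through the generated equivalence, $p_1(x)=p_2(y)$ holds iff $x$ and $y$ are linked by an alternating chain $x\mathrel r x_1 \mathrel{r^\circ}\cdots\mathrel r y$, which is precisely membership in $\hat r=\bigvee_{n\in\nat} r\cdot(r^\circ\cdot r)^n$; this establishes $p_2^\circ\cdot p_1=\hat r$ (the fact used silently in the proof of \Cref{p:4}).

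Next I would translate the weak pullback condition into a relational inequality. The square is a weak pullback iff the canonical comparison map from $R$ into the genuine pullback $P=\{(x,y)\mid p_1(x)=p_2(y)\}$ is surjective, i.e.\ iff for all $x\in X$, $y\in Y$ with $p_1(x)=p_2(y)$ there is $w\in R$ with $\pi_1(w)=x$ and $\pi_2(w)=y$. This is exactly $p_2^\circ\cdot p_1\leq\pi_2\cdot\pi_1^\circ=r$. Commutativity of the square, $p_1\cdot\pi_1=p_2\cdot\pi_2$, always yields the reverse inequality $r\leq p_2^\circ\cdot p_1$, so the square is a weak pullback iff $p_2^\circ\cdot p_1=r$. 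Combined with the identity from the first step, this reads $\hat r=r$, i.e.\ $r$ is difunctional.

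Finally I would assemble the equivalence from these two observations. For \ref{p:51}~$\Rightarrow$~\ref{p:54}: if $r$ is difunctional then $r=\hat r=p_2^\circ\cdot p_1$ for \emph{every} span (using span-independence from the first step), so every such pushout square is a weak pullback. For \ref{p:54}~$\Rightarrow$~\ref{p:51}: instantiating the hypothesis at any single span, e.g.\ the canonical one $R=\{(x,y)\mid x\mathrel r y\}$ with the two projections, gives $p_2^\circ\cdot p_1=r$, hence $\hat r=r$, so $r$ is difunctional.

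The main obstacle I anticipate is the bookkeeping behind $p_2^\circ\cdot p_1=\hat r$: one must argue carefully that the equivalence relation generated on $X+Y$ restricts on $X\times Y$ exactly to the alternating-chain relation computing the difunctional closure, and that this is insensitive to the choice of span. Everything else is a routine passage between the universal property of weak pullbacks in $\SET$ and inclusions of relations. If a shorter route is preferred, the identity $p_2^\circ\cdot p_1=\hat r$ may simply be quoted, after which the proof collapses to the two-line inequality argument of the second paragraph.
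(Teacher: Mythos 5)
Your proposal is correct, and in fact the paper itself contains no proof of \Cref{p:48} at all: the lemma is quoted from the literature (\cite{GoncharovHofmannEtAl25}, Lemma~2.2), so your argument supplies a proof the paper delegates to a citation. Your route is the natural one, and it is consistent with how the paper uses the lemma elsewhere: the central identity $p_2^\circ \cdot p_1 = \hat r$ for the pushout of any span presenting $r$ is exactly the fact invoked silently in the soundness proof of \Cref{p:4}(2) (``Then, $p_2^\circ\cdot p_1$ is the difunctional closure $\hat r$ of $r$''). The two reductions you perform are both sound: (a) since the generating pairs $(\pi_1(w),\pi_2(w))$, $w\in R$, are precisely the graph of $r=\pi_2\cdot\pi_1^\circ$, the equivalence relation on $X+Y$ generated by them, and hence the pushout and the relation $p_2^\circ\cdot p_1$, depend only on $r$; and because $X+Y$ is a disjoint union, a chain witnessing $p_1(x)=p_2(y)$ must alternate sides and have odd length $\geq 1$, which is exactly membership in $\hat r=\bigvee_n r\cdot(r^\circ\cdot r)^n$; (b) in $\SET$ a commuting square is a weak pullback iff the comparison map into the genuine pullback is surjective (a mediating map for the pullback cone splits the comparison, and conversely any section yields mediating maps), which together with commutativity gives ``weak pullback iff $p_2^\circ\cdot p_1 = r$''. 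Combining (a) and (b) yields the equivalence with $\hat r = r$, i.e.\ difunctionality, and your instantiation at the canonical span for the converse direction is legitimate since any one span suffices there. The only bookkeeping point you flag, the chain-tracing behind $p_2^\circ\cdot p_1=\hat r$, is indeed the sole nontrivial step, and your treatment of it is complete.
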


\subsection{Proof of  \Cref{p:5}}
	\pnnote{I think we can get away with obvious here, no?}
	The clauses \ref{p:15e},  \ref{p:8e} and the first inequality of~$\ref{p:10}$ follow straightforwardly from the definition of coBarr relator.
	To see the second inequality of \ref{p:10}, let~$\eR$ be a normal relational connector and let~$r \colon X \relto Y$ be a relation.
	Furthermore let $\hat{r} \colon X \relto Y$ be the difunctional closure of~$r~$, and let  $f \colon X \to O$ and~$g \colon Y \to O$ be maps s.t.~$\hat{r} = g^\circ \cdot f$.
	Then, as~$\eR$ is a normal relational connector, $\eR(r) \leq (\ftF g)^\circ \cdot \ftF f = \ftcF r$. 
	On the other hand, let~$p \colon A \to X$ and~$q \colon A \to Y$ be maps s.t.~$r = q \cdot p^\circ$.
	Then, $1_A \leq q^\circ \cdot r \cdot p$.
	Hence, as~$\eR$ is a normal relational connector, $1_{\ftF A} \leq (\ftF q)^\circ \cdot \eF(r) \cdot \ftF p$.
	Therefore, by adjointness, $\ftF q \cdot (\ftF p)^\circ \leq \eR r~$.
	\qed

\subsection{Details for  \Cref{p:14e}}

We prove the claim that a polynomial functor is~$\zeta$-bounded if it
is not an exponential functor.

Let~$\Sigma(\calH) \colon \SET \to \SET$ be a polynomial functor, with
$\calH = (\ftH_{A_s})_{s \in S}$, for some set~$S$. If~$S=\emptyset$,
then~$\Sigma(\calH)$ is a constant functor, hence~$\zeta$-bounded. If
$|S|=1$, then~$\Sigma(\calH)$ is an exponential functor.  Thus,
suppose that~$|S| \geq 2$. Moreover, if all~$A_s$ are empty, then
$\Sigma(\calH)$ is a constant functor, hence~$\zeta$-bounded; so
suppose that there is~$s \in S$ such that~$A_s \neq \varnothing$. Let
$Z$ be the carrier of a terminal coalgebra for~$\Sigma(\calH)$, which
is well-known to exist (e.g.~\cite{Rutten00}). Now for all~$s \in S$
and~$X \in \SET$, $|1 + \ftH_{A_s} (X)| \leq |\Sigma(\calH)(X)|$.
Thus, for every~$s \in S$, the cardinality of~$Z$ is greater than or equal
to the cardinality of the terminal coalgebra for~$1 + \ftH_{A_s}$,
which is greater than or equal to the cardinality of~$A_s$ and it is
infinite for~$A_s$ non-empty (cf.~\cite[Example 10.2(6)]{Rutten00}).
In particular, this means that the cardinality of~$Z$ is infinite.
Now, let~$u, v \in \Sigma(\calH)(X)$ for some set~$X$.  We claim that
there is a set~$Y$ and an injective map~$i \colon Y \rightarrowtail X$
s.t.~$u,v \in \img(\Sigma(\calH)i)$ and~$|Y| \leq |Z|$.  By definition
of~$\Sigma(\calH)$ there are~$s,t \in S$ and~$u' \in \ftH_{A_s}(X)$,
$v' \in \ftH_{A_s}(X)$ s.t~$u = c^s_X(u')$ and~$v = c^t_X(v')$.  Let
$Y$ be the set~$\img(u') \cup \img(v')$, and let
$i \colon Y \rightarrowtail X$ be the inclusion of~$Y$ into~$X$.
Then, as~$|Z|$ is infinite and~$|\img(u')| \leq |A_s| \leq |Z|$ and
$|\img(v')| \leq |A_t| \leq |Z|$, $|Y| \leq |Z|$.  Furthermore, as
$u'$ and~$v'$ corestrict to~$Y$ and
$c^s \colon \ftH_{A_s}\to \Sigma(\calH)$ and
$c^t \colon \ftH_{A_j} \to \Sigma(\calH)$ are natural transformations,
it follows that~$u,v \in \img(\Sigma(\calH)i)$.
	\pnnote{probably better to add a remark about accessibility
          for infinite |Z|}
\qed
\subsection{Proof of \Cref{p:19e}}

Let~$\eF$ be the pointwise supremum of the~$\eF_i$. Let
 $r \colon X \relto Y$ be a relation, and let~$s \colon X \relto X$
  be a subidentity.  Since normal lax extensions coincide on
  difunctional relations (e.g.~\cite{MartiVenema15,HofmannSealEtAl14}), in particular on subidentities,
  all~$\eF_i$ and, hence~$\eF$ map~$s$ to the same relation~$\bar{s}$,
  which by normality of the~$\eF_i$ is a subidentity.

  \emph{`If:'} Since relational composition preserves suprema,
 $\eF (r \cdot s) = \bigvee_{i \in \calI} \eF_i(r \cdot s) =
  \bigvee_{i \in \calI} (\eF_i r \cdot \bar{s}) = (\bigvee_{i \in
    \calI} \eF_i r) \cdot \bar{s} =\eF r \cdot \bar{s}=\eF r \cdot \eF
  s$.  The case of postcomposition with subidentities runs
  analogously.

  \emph{`Only if:'} We show that
 $\eF_i(r \cdot s) \leq \eF_i r \cdot \eF_i s=\eF_i r \cdot \bar{s}$;
  the other inequality holds by the definition of lax extension.  So
  let~$a \in \ftF X$ and~$b \in \ftF Y$ such that
 $a \mathrel{\eF_i(r \cdot s)} b$.  We claim that
 $a \mathrel{\eF_i r} b$ and~$a \mathrel{\bar{s}} a$.  Indeed, as~$s$
  is a subidentity, $r \cdot s \leq r$, whence
 $a \mathrel{\eF_i r} b$.
  Moreover, $a \mathrel{\eF (r \cdot s)} b$ because~$\eF_i$ is
  below~$\eF$. Since~$\eF$ preserves composition with subidentities,
  we thus have~$c$ such that~$a \mathrel{\bar{s}} c$ and
 $c\mathrel{Lr}b$. But then~$a=c$ because~$\bar{s}$ is a subidentity.
  The case of postcomposition with a subidentity runs
  analogously. \qedhere
\qed

\subsection{Proof of \Cref{p:10c}}

	\begin{enumerate}[wide]
		\item Let~$r \colon X \relto Y$ be a relation.
				Consider a span
	      			\begin{tikzcd}[column sep=small]
		      			X & R & Y
		      			\ar[from=1-2, to=1-1, "\pi_1"']
		      			\ar[from=1-2, to=1-3, "\pi_2"]
	      			\end{tikzcd}
				such that~$r = \pi_2 \cdot \pi_1^\circ$.
				Then, $\pi_1$ factors as~$e \cdot \dor_r$, with~$e \colon R \twoheadrightarrow \dom(r)$ and~$\dor_r \colon \dom(r) \rightarrowtail X$.
				Hence, \(\ftF \pi_2 \cdot (\ftF e)^\circ \cdot (\ftF \dor_r)^\circ \leq \eF r\).
				Therefore, as every set functor preserves epimorphisms, $\img(\ftF \dor_r) \subseteq \dom(\eF r)$.

		\item Let~$X\xto{f} Y\xfrom{g} B$ be a cospan in~$\SET$.
				Then, as~$\eF$ is a normal relational connector, $\eF((g^\circ \cdot f)^\circ) = \eF(f^\circ \cdot g) = (\ftF f)^\circ \cdot \ftF g = (\eF(g^\circ \cdot f))^\circ$.
		\item Let \(i \colon A \ito X\) and~$j \colon A \rightarrowtail Y$ be injective maps.
				Then, \(j \cdot i^\circ \colon X \relto Y\) is difunctional.
				Hence, by \Cref{p:48}, the pushout square of \(i\) along~$j$
				\begin{center}
					\begin{tikzcd}
						A & X \\
						X & O
						\ar[from=1-1, to=1-2, tail, "i"]
						\ar[from=1-1, to=2-1, tail, "i"']
						\ar[from=1-2, to=2-2, tail, "p_2"]
						\ar[from=2-1, to=2-2, tail, "p_1"']
						\ar[from=2-2, to=1-1, phantom, "\ulcorner", very near start]
					\end{tikzcd}
				\end{center}
				is a pullback square, in fact it is the square of an intersection.
				Since~$\ftF$ preserves empty intersections by hypothesis, then it preserves all intersections~\cite[Proposition~2.1]{Trnkova69}.
				Thus, as  \(\eF\) is a normal relational connector, we obtain \(\eF(j \cdot i^\circ) = \eF(p_2^\circ \cdot p_1) = (\ftF p_2)^\circ \cdot \ftF
p_1 = \ftF j \cdot (\ftF i)^\circ\).
	\end{enumerate}

\subsection{Proof of \Cref{prop:lax_rstr}}

	\ref{p:43a} \(\Rightarrow\) \ref{p:44b}.
	By~\cite{GoncharovHofmannEtAl25}, \(\ftF\) has a normal lax extension~$\eF$.
	For~$r$ difunctional, we have \(r=g^\circ \cdot f \) for some cospan~$X\xto{f} A \xfrom{g} Y$.
	Then, by \Cref{p:400} we have a commutative diagram
	\begin{center}
		\begin{tikzcd}
			\dom(r) &[3ex] g[Y] & Y \\
			X       & A,
			\ar[from=1-1, to=1-2, "f|_{\dom(r)}"]
			\ar[from=1-1, to=2-1, tail, "\dor_r"']
			\ar[from=1-1, to=2-2, phantom, "\lrcorner", very near start]
			\ar[from=1-2, to=2-2, tail, "j"]
			\ar[from=1-3, to=1-2, "e"', twoheadrightarrow]
			\ar[from=1-3, to=2-2, "g"]
			\ar[from=2-1, to=2-2, "f"']
		\end{tikzcd}
	\end{center}
	in which the square is a pullback and~$g=j\cdot e$ is the image
	factorization of~$g$.
	Since~$\ftF$ preserves inverse images, it preserves monomorphisms, and every set functor preserves epimorphisms, so~$\ftF j \cdot \ftF e$ is an epi-mono factorization of~$\ftF g$.
	This means that~$\ftF j \colon \ftF(g[Y]) \rightarrowtail \ftF A$ corestricts to an isomorphism~$h \colon \ftF(g[Y]) \cong \ftF g[FY]$.
	Furthermore, as~$\eF$ is a normal relational connector, $\eF r = (\ftF g)^\circ \cdot \ftF f$.
	Thus, by \Cref{p:400} (applied to $\ftF f$, $\ftF g$, and $\eF r$) and \Cref{p:10c}(\ref{p:1c}) we obtain the following diagram where the outer square is a pullback since~$\ftF$ preserves inverse images by hypothesis.
	\begin{center}
		\begin{tikzcd}
			\ftF \dom(r) & \ftF g[Y] \\
			\dom(\eF r)  & \ftF g[\ftF Y] \\
			\ftF X       & \ftF A.
			\ar[from=1-1, to=2-2, phantom, "\lrcorner", very near start]
			\ar[from=1-1, to=1-2, "\ftF(f_{|\dom(r)})"]
			\ar[from=1-1, to=2-1, dashed]
			\ar[from=1-1, to=3-1, "\ftF \dor_r"', bend right=60]
			\ar[from=1-2, to=2-2, "h"]
			\ar[from=1-2, to=3-2, "\ftF j", bend left=60]
			\ar[from=2-1, to=2-2, "\ftF (f_{|\dom(\eF r)})"]
			\ar[from=2-1, to=3-1, "d_{\eF r}", tail]
			\ar[from=2-1, to=3-2, phantom, "\lrcorner", very near start]
			\ar[from=2-2, to=3-2, "k", tail]
			\ar[from=3-1, to=3-2, "\ftF f"']
		\end{tikzcd}
	\end{center}
	Furthermore, as~$k$ is mono and the bottom square and the triangles commute, the top square also commutes.
	Hence, as~$h$ is an isomorphism, we conclude that~$(\ftF \dor_r, h \cdot \ftF(f_{|\dom(r)}))$ is a pullback of~$(\ftF f, k)$.
	Therefore, as pullbacks are unique, $\ftF \dor_r$ corestricts to an isomorphism between~$\ftF\dom(r)$ and~$\dom(\eF r)$.

	\ref{p:44b} \(\Rightarrow\) \ref{p:44a}.
	 Note that the codomain of a relation is the domain of its converse and by \Cref{p:10c}(\ref{p:2c}) every normal lax extension preserves converses of difunctional relations.
	Therefore, the claim follows from \ref{p:44b} by duality.

	 \ref{p:44a} \(\Rightarrow\) \ref{p:46c}.
	 We show the case of precomposition with the converse of an injective map, the other case follows analogously.
	Let \(\eF \colon \REL \to \REL\) be a normal lax extension of \(\ftF\), and let \(r \colon X \relto Y\) be a relation and~$j \colon X \rightarrow A$ be an injective map.
	It suffices to show that \(\eF(r \cdot i^\circ) \leq \eF r \cdot (\ftF i)^\circ\) since the other inequality holds by definition of lax extension.
	We begin by observing that as~$i \cdot i^\circ = \dor_{i^\circ } \cdot {\dor_{i^\circ}}^\circ$, $\ftF i \cdot (\ftF i)^\circ = \ftF \dor_{i^\circ} \cdot (\ftF \dor_{i^\circ})^\circ$.
	\pnnote{Do we need details here? LS: Cite well-definedness  of the Barr extension?}
	Furthermore, as~$i$ is injective, $\eF r = \eF(r \cdot i^\circ \cdot i) = \eF(r \cdot i^\circ) \cdot \ftF i$.
	Hence, the claim follows once we show
	\(
	\eF(r \cdot i^\circ) \leq
	\eF(r \cdot i^\circ) \cdot \ftF i \cdot (\ftF i)^\circ =
	\eF(r \cdot i^\circ) \cdot \ftF \dor_{i^\circ} \cdot (\ftF \dor_{i^\circ)})^\circ
	\)
which is equivalent to showing~$\dom(\eF(r \cdot i^\circ)) \subseteq \img(\ftF \dor_{i^\circ})$.
	To see the latter note that by hypothesis~$\ftF d_{r \cdot i^\circ} \colon \ftF(\dom(r\cdot i^\circ)) \to \dom(\eF(r \cdot i^\circ))$ corestricts to an isomorphism~$h \colon \ftF(\dom(r\cdot i^\circ)) \cong  \dom(\eF(r \cdot i^\circ))$.
	Furthermore, as~$\dom(r \cdot i^\circ) \subseteq \dom(i^\circ)$, we have~$\dor_{r \cdot i^\circ} = \dor_{i^\circ} \cdot k$, where~$k$ denotes the inclusion~$\dom(r \cdot i^\circ) \rightarrowtail \dom(i^\circ)$.
	Hence, the inclusion~$\dom(\eF (r \cdot i^\circ)) \rightarrowtail \ftF X$ factors as~$\ftF \dor_{i^\circ} \cdot \ftF k \cdot h^{-1}$.
	Therefore, $\dom(\eF(r\cdot i^\circ)) \subseteq \img(\ftF \dor_{i^\circ})$.

	\ref{p:46c}~$\Rightarrow$ \ref{p:47c}.
	Trivial.

	\ref{p:47c}~$\Rightarrow$ \ref{p:43a}.
	Let~$X\xto{f} Y\xfrom{i} B$ be a cospan in~$\SET$ with~$i$ injective and~$\eF \colon \REL \to \REL$ be a normal lax extension of~$\ftF$ that satisfies the condition required in \ref{p:47c}.
	Consider a pullback~$(p_1,p_2)$ of~$(f,i)$.
	Then, since pullbacks preserve monomorphisms, $p_1$ is injective.
	Hence, as~$\eF$ is normal, by hypothesis we obtain~$(\ftF i)^\circ \cdot \ftF f = \eF(i^\circ \cdot f) = \eF(p_2 \cdot p_1^\circ) = \ftF p_2 \cdot (\ftF p_1)^\circ$.
	Therefore, $(\ftF p_1, \ftF p_2)$ is a weak pullback of~$(\ftF f, \ftF i)$ and it is in fact a pullback since~$\ftF p_1$ is injective because~$\ftF$ preserve injective maps given that it admits a normal lax extension~\cite{GoncharovHofmannEtAl25}.
\qed

\subsection{Details of \Cref{p:2}} 

\begin{lemma}
    Let \(r_1, \ldots, r_n\) be a composable sequence of  relations.
    Consider the following composable sequences of  relations:
    \begin{enumerate}
    	\item \(s_1,\ldots,s_n\) defined by \(s_n = r_n\), and  \(s_i = \domr{s_{i+1}} \cdot r_i\), for \(i = 1,\ldots,n-1\);
        \item \(t_1,\ldots,t_n\) defined by \(t_1 = s_1\), and \(t_i = s_i \cdot \codr{t_{i-1}}\), for \(i = 2,\ldots, n\).
    \end{enumerate}
    Then, \(t_n \cdots t_1 = s_n \cdots s_1 = r_n \cdots r_1\) and for every \(i=2,\ldots,n\), \(\cod(t_{i-1}) = \dom(t_i)\).
\end{lemma}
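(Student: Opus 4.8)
The plan is to reduce the statement to two elementary cancellation laws for subidentities and then to carry out two short inductions plus one inclusion computation. First I would record the following facts, all immediate from the definition of applicative composition together with the definitions of $\domr{b}$ and $\codr{a}$ as the subidentities on $\dom b$ and $\cod a$: for every relation $b$ one has $b \cdot \domr{b} = b$; for every relation $a$ one has $\codr{a} \cdot a = a$; for every relation $a$ and every subidentity $\subid{S}$ on the source of $a$ one has $\dom(a \cdot \subid{S}) = S \cap \dom a$; and $\cod(b \cdot a) \subseteq \cod b$ in general.

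For the composite identity $s_n \cdots s_1 = r_n \cdots r_1$, I would use downward induction on $j$ to prove $s_n \cdots s_j = r_n \cdots r_j$. The base case $j = n$ is just $s_n = r_n$. In the inductive step the only freshly introduced subidentity sits between $s_{j+1}$ and $s_j$, and it is cancelled by $s_{j+1} \cdot s_j = s_{j+1} \cdot \domr{s_{j+1}} \cdot r_j = s_{j+1} \cdot r_j$; substituting this into $s_n \cdots s_{j+1} \cdot s_j$ and applying the induction hypothesis to the tail $s_n \cdots s_{j+1}$ gives $r_n \cdots r_j$. Dually, for $t_n \cdots t_1 = s_n \cdots s_1$ I would use upward induction on $i$ to prove $t_i \cdots t_1 = s_i \cdots s_1$, cancelling the single subidentity between $t_i$ and $t_{i-1}$ via $t_i \cdot t_{i-1} = s_i \cdot \codr{t_{i-1}} \cdot t_{i-1} = s_i \cdot t_{i-1}$ and then applying the induction hypothesis to $t_{i-1} \cdots t_1$. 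Taking $j = 1$ and $i = n$ yields the full chain of equalities.

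For the internal matching $\cod(t_{i-1}) = \dom(t_i)$ with $2 \le i \le n$, I would first compute $\dom(t_i) = \dom(s_i \cdot \codr{t_{i-1}}) = \cod(t_{i-1}) \cap \dom(s_i)$, so that it suffices to prove the inclusion $\cod(t_{i-1}) \subseteq \dom(s_i)$. This is exactly what the backward definition of the $s$-sequence buys: since $t_{i-1} \le s_{i-1}$ (with equality when $i = 2$, and because $\codr{t_{i-2}}$ is a subidentity when $i \ge 3$), we get $\cod(t_{i-1}) \subseteq \cod(s_{i-1})$; and since $s_{i-1} = \domr{s_i} \cdot r_{i-1}$ factors through the subidentity $\domr{s_i}$ on $\dom s_i$, the fact $\cod(b \cdot a) \subseteq \cod b$ gives $\cod(s_{i-1}) \subseteq \cod(\domr{s_i}) = \dom(s_i)$. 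Chaining the two inclusions yields $\cod(t_{i-1}) \subseteq \dom(s_i)$, and hence $\dom(t_i) = \cod(t_{i-1})$.

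The mathematical content is entirely carried by the two cancellation laws and this single inclusion chain; the main obstacle is purely bookkeeping, namely keeping the direction of composition and the source/target sets consistent, given that the $s$-sequence is built backward (restricting each codomain to the next domain) while the $t$-sequence is built forward (restricting each domain to the previous codomain).
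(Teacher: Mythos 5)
Your proof is correct and takes essentially the same route as the paper's: the heart of both arguments is the formula $\dom(t_i)=\dom(s_i)\cap\cod(t_{i-1})$ combined with the inclusion chain $\cod(t_{i-1})\subseteq\cod(s_{i-1})\subseteq\dom(s_i)$, where your derivation of the first inclusion via monotonicity ($t_{i-1}\le s_{i-1}$, since $\codr{t_{i-2}}$ is a subidentity) is an immaterial variant of the paper's image computation $\cod(t_{i-1})=s_{i-1}[\cod(t_{i-2})]$. The only other difference is that you spell out, via the two cancellation inductions, the composite equalities $t_n\cdots t_1=s_n\cdots s_1=r_n\cdots r_1$, which the paper simply declares clear.
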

\begin{proof}
	Clearly, \(t_n \cdots t_1 = s_n \cdots s_1 = r_n \cdots r_1\). %
        Moreover, note that \(\dom(t_i) = \dom(s_i) \cap \cod(t_{i-1})\), and \(\cod(t_{i-1}) = s_{i-1}[\cod(t_{i-2})]\) if \(i>2\) and \(\cod(t_{i-1}) = \cod(s_1)\) if \(i=2\).
        Thus, we have $\cod(t_{i-1})\le \cod(s_{i-1})$ for $i=2,\ldots,n$, and since \(\cod(s_{i-1}) \leq \dom(s_i)\) for \(i = 2,\ldots,n\), it follows that \(\dom(t_i) = \dom(s_i) \cap \cod(t_{i-1}) = \cod(t_{i-1})\).
\end{proof}

\end{document}